\setlist{nolistsep,leftmargin=*}
\title{Discovering the Skyline of Web Databases}
\author
{
\alignauthor
Abolfazl Asudeh$^{\ddag}$,
Saravanan Thirumuruganathan$^{\ddag}$,
Nan Zhang$^{\dag\dag}$, 
Gautam Das$^{\ddag}$
\\
\affaddr {
$^{\ddag}$University of Texas at Arlington;
$^{\dag\dag}$George Washington University
}
{\email
{
$^{\ddag}$\{ab.asudeh@mavs,~saravanan.thirumuruganathan@mavs,~gdas@cse\}.uta.edu, $^{\dag\dag}$nzhang10@gwu.edu
}
}
}
\date{}
\newcommand{\hindent}[1][1]{\hspace{#1\algorithmicindent}}
\newtheorem{theorem}{Theorem}
\begin{document}
\maketitle

\begin{abstract}
Many web databases are ``hidden'' behind proprietary search interfaces that enforce the top-$k$ output constraint, i.e., each query returns at most $k$ of all matching tuples, preferentially selected and returned according to a proprietary ranking function. In this paper, we initiate research into the novel problem of skyline discovery over top-$k$ hidden web databases. Since skyline tuples provide critical insights into the database and include the top-ranked tuple for every possible ranking function following the monotonic order of attribute values, skyline discovery from a hidden web database can enable a wide variety of innovative third-party applications over one or multiple web databases. Our research in the paper shows that the critical factor affecting the cost of skyline discovery is the type of search interface controls provided by the website. As such, we develop efficient algorithms for three most popular types, i.e., one-ended range, free range and point predicates, and then combine them to support web databases that feature a mixture of these types. Rigorous theoretical analysis and extensive real-world online and offline experiments demonstrate the effectiveness of our proposed techniques and their superiority over baseline solutions.
\end{abstract}

\section{Introduction}
\label{sec-intro}
\vspace{2mm}
\noindent{\bf Problem Motivation:} Skyline for structured databases has been extensively studied in recent years. Consider a database with $n$ tuples over $m$ numerical/ordinal attributes, each featuring a domain that has a preferential order for certain applications, e.g., price (smaller the better), model year (newer the better), etc. A tuple $t$ is said to {\em dominate} a tuple $u$ if for every attribute $A_i$, the value of $t[A_i]$ is preferred over $u[A_i]$. The skyline is the set of all tuples $t_i$ such that $t_i$ is not dominated by any other tuple in the database.

Skyline is important for multi-criteria decision making, and is further related to well-known problems such as convex hulls, top-$k$ queries and nearest neighbor search. For example, a precomputed skyline can serve as an index for efficiently answering any top-1 query with a monotonic ranking function over attributes. The extension of a skyline to a {\em $K$-sky band} (containing all tuples not dominated by more than $K-1$ others) enables efficient answering of top-$k$ queries when $k \leq K$. For a summary of research on skyline computation and their applications, please refer to Section~\ref{sec:relWork}.

Much of the prior work assumes a traditional database with full SQL support \cite{borzsony2001skyline,tan2001efficient,chomicki2003skyline,kossmann2002shooting} or databases that expose a ranked list of all tuples according to a pre-known ranking function \cite{BGZ04, LYL+06}.  In this paper, we consider a novel problem of how to compute the skyline over a {\em deep web, ``hidden'', database} that only exposes a top-$k$ query interface. Unlike the traditional assumptions, real-world web databases place severe limits on how external users can perform searches. Typically, a user can only specify conjunctive queries with range or (single-valued) point conditions, depending on which one(s) the web interface supports, and receive at most $k$ matching tuples, selected and sorted according to a ranking function that is often proprietary and unknown to the external user.

Discovering skyline tuples from a hidden web database enables a wide variety of third-party applications, ranging from understanding the ``performance envelope'' of tuples in the database to enabling uniform ranking functions over multiple web databases. For example, consider the construction of a diamond search service that taps into web databases of several jewelry stores such as Blue Nile (by collecting data through their web search interfaces). While there are well-known preferential orders on all critical attributes of a diamond such as clarity, carat, color, cut and price, each jewelry store may design its own ranking function as a unique weighting of these attributes. On the other hand, the third-party service needs to rank all tuples from all stores consistently, and ideally support user-specified ranking functions (e.g., different weightings of the attributes) according to his/her own need. An efficient and effective way to enable this is to first discover the skyline tuples from the hidden web database of each jewelry store, and then apply a user-specified ranking function on all the retrieved data to obtain tuples most preferred by the user. One can see that, similarly, this approach can be used to enable third-party services such as flight search with user-defined ranking functions on price, duration, number of stops, etc.



\vspace{2mm}
\noindent{\bf Challenges:} The technical challenges we face are fundamentally different from traditional skyline computation techniques, mainly because the data access model is completely different. In traditional skyline research, there is no top-$k$ constraint on data access, so the algorithms can take advantage of either full SQL power or certain pre-existing data indices such as sequence access according to a known ranking function \cite{BGZ04, LYL+06}. On the other hand, as mentioned earlier, in hidden databases the data access is severely restricted. In principle, one can apply prior techniques developed to crawl the entire hidden database (e.g., using algorithms such as \cite{sheng2012optimal}), and then compute the skyline over a local copy of the database. However, as we shall show in the experimental results, such an approach is often impractical as crawling the entire database (as opposed to just the skyline) requires an inordinate number of search queries (i.e., web accesses). Note that many real-world web databases limit the number of web accesses one can issue through per-IP-address or per-API-key limits. In many cases, this limit is too small to sustain the execution of a complete crawl. Thus, it is necessary to develop skyline discovery algorithms that execute as few search queries via the restrictive web interface as possible.

\vspace{2mm}
\noindent{\bf Technical Highlights:} We distinguish between several important categories of web search interfaces: whether range predicates are supported for the attributes (either one-ended, e.g. Price $<$ 300, or two-ended, e.g., 200 $<$ Price $\leq$ 300), or only single-value/point predicates (e.g., Number of Stops = 0) are allowed. We also consider hidden databases where a mix of range and point attributes exist. Computing skylines over each type of interface offers its own unique challenges.

For the case of one-ended range queries, we develop {\bf SQ-DB-SKY}, an iterative divide-and-conquer skyline discovery algorithm that starts by issuing broad queries (i.e., queries with few predicates), determines which queries to issue next based on the tuples received so far, and then gradually narrows them to more specific ones. For the case of two-ended range queries, we develop algorithm {\bf RQ-DB-SKY}, which is similar to the previous algorithm, except that instead of being forced to issue overlapping queries, the algorithm is able to take advantage of the more powerful search interface and issue mutually exclusive queries to cover the search space and be able to terminate earlier. 

In the worst case, the maximum number of queries issued by SQ-DB-SKY may be $O(m \cdot |S|^{m+1})$ where $m$ is the number of attributes and $|S|$ is the size of the skyline set. Note that this running time is independent\footnote{at least conditionally given $m$ and $|S|$} of the database size $n$.  In contrast, the worst case query cost for RQ-DB-SKY is $O(m \cdot \min(|S|^{m+1}, n))$. More interestingly, while the worst case behavior appears to grow fast with $|S|$ when $m$ is large, we show through theoretical analysis and real-world experiments that this is the artifact of some extremely-ill-behaving ranking function (which has to be considered in worst-case analysis). In practice, the algorithms perform extremely well.

As additional highlights of our contributions, we provide an interesting theoretical result on the average-case behavior of the above algorithms by proving that, for {\em any arbitrary database}, the expected query cost taken over the randomness of the ranking function is always bounded from above by $(e + e \cdot |S|/m)^m$. Note that the growth speed of this bound with $|S|$ is orders of magnitude slower than that of the worst-case bound. Furthermore, we also show why the real-world performance of SQ- and RQ-DB-SKY is likely even better than the average-case performance for any ``reasonable'' ranking function used by the hidden database.

For the case of point queries, the significantly weaker search interface introduces novel challenges in designing an efficient skyline discovery algorithm. For the special case of 2-dimensional data, we design algorithm {\bf PQ-2D-SKY} that is {\em instance-optimal}, although the worst-case complexity is a complex function that depends not only on parameters such as $n$ and $S$, but also on the domain sizes of the attributes. Unfortunately, the generalization to higher dimensions proves much more complicated, as shown by a negative result that no instance-optimal algorithm can exist for higher dimensions. 

As such, our eventual algorithm for higher dimensions, {\bf PQ-DB-SKY}, uses as a subroutine a revised version of the 2D algorithm that is able to discover all skyline from a ``pruned''' 2D subspace in an instance-optimal manner (though the overall algorithm for higher dimensions is not instance-optimal). Given the exponential nature of dividing a higher-dimensional space into 2D subspaces, the worst-case query cost of the algorithm can be quite large. However, as we shall show through real-world online experiments, the nature of these PQ attributes used in real-world hidden databases (e.g., they usually have small domains with all domain values occupied by real tuples) makes the actual performance of PQ-DB-SKY often fairly efficient in practice.

When the hidden database features a mixture of range and point attributes, we show that the straightforward idea of only applying RQ-DB-SKY directly over the range-predicate attributes and not using the point-predicate attributes at all does not work because some skyline tuples may be missed.  These remaining tuples need to be identified by a modified version of PQ-DB-SKY. These ideas are combined into our eventual algorithm {\bf MQ-DB-SKY} that can discover the skyline of a database containing a mixture of one-ended, two-ended, and/or point attributes.

The above algorithms are all about computing the skyline of a hidden database. We have also extended these algorithms to compute the top-$K$ sky band of the database. We conducted comprehensive experiments over multiple real-world datasets to demonstrate the effectiveness of these algorithms and their superiority over the baseline, crawling-based, solution. In addition, we also tested our algorithms live online over multiple real-world web databases such as Yahoo!~Autos, Google Flights, and Blue Nile (an online diamond retailer). For all these real-world databases tested, our algorithms can discover all skyline tuples in a highly efficient manner.
 

\vspace{2mm}
\noindent{\bf Summary of Contributions:}

\begin{itemize}[noitemsep,topsep=1pt,parsep=1pt,partopsep=0pt]
\item We introduce  the novel problems of computing the skyline/band of hidden web databases with top-$k$ constraints, motivate them with third-party applications,and show why traditional skyline computation approaches are inappropriate for these problems.
\item We distinguish between different search interfaces that hidden databases typically provide: one-ended and two-ended ranges and point predicates, and show that each brings different challenges in designing algorithms for skyline discovery.
\item For the case of one-ended (resp. two-ended) range predicates, we develop SQ-DB-SKY (resp. RQ-DB-SKY). For the case of point predicates, we develop PQ-2D-SKY for two-dimensional data, and a more general PQ-DB-SKY for higher-dimensional data. For databases with a mixture of range and point predicates, we develop MQ-DB-SKY.
\item We provide rigorous theoretical analysis including worst/average-case analysis and instance-optimality in certain cases. We also conducted comprehensive experiments over real datasets and live web databases to demonstrate the effectiveness of our algorithms.
\end{itemize}


\section{Preliminaries}
\label{sec:preliminaries}

\subsection{Model of Hidden Database}
\noindent{\bf Database:} Consider a hidden web database $D$ with $n$ tuples over $m$ attributes $A_1, \ldots, A_m$.  Let the domain of $A_i$ be $Dom(A_i)$ and the value of $A_i$ for tuple $t$ be $t[A_i] \in Dom(A_i) \cup \{\mathrm{NULL}\}$.

\vspace{1mm}
\noindent{\bf Skyline:} The $m$ attributes of a web database can be divided into two categories: {\em ranking attributes} with an inherent preferential order (either numeric or ordinal); and {\em filtering attributes} whose values are not ordered.  The skyline definition only concerns the ranking attributes.  For a ranking attribute $A_i$, we denote the total order by $<$, i.e., $v_i$ ranks {\em higher} than $v_j$ if $v_i < v_j$.  With this notation, a tuple $t \in D$ is a skyline tuple if and only if there does not exist any other tuple $t^\prime \in D$ with $t^\prime \neq t$ such that $t^\prime$ dominates $t$, i.e. $t^\prime[A_i] \leq t[A_i]$ for each and every ranking attribute $A_i$. In other words, no other tuple $t^\prime$ in the database outranks $t$ on every ranking attribute.

Note that the skyline definition can be easily extended to {\em sky band} - i.e., a tuple is in the K-skyband if and only if it is not dominated by more than $K - 1$ tuples. One can see that the skyline is indeed a special case of (top-1) sky band. In most parts of the paper, we focus on the problem of skyline discovery. The extension to discovering the $K$-skyband ($K > 1$) is discussed in Section~\ref{sec:extensions}.

\vspace{1mm}
\noindent{\bf Query Interface:} The web interface of a hidden database takes as input a user-specified query (supported by the interface) and produces as output at most $k$ tuples matching the query. At the input side, the interface generally supports conjunctive queries on one or more attributes.  The {\em predicate} supported for each attribute, however, is a subtle issue that depends on the type of the attribute and the interface design. While filtering attributes with categorical values generally support equality (=) only, a ranking attribute may support any subset of $<$, $=$, $>$, $\leq$, and $\geq$ predicates. Since the supported predicate types turn out to be critical for our algorithm design, we leave it for detailed discussions in the next subsection.

Output-wise, the query answer is subject to the top-$k$ constraint, i.e., when more than $k$ tuples match the input query, instead of returning all matching tuples, the hidden database preferentially selects $k$ of them according to a {\em ranking function} and returns only these top-$k$ tuples through the interface. In this case, we say that query $q$ {\em overflows} and triggers the top-$k$ limitation.

The design of this ranking function has been extensively studied in the database literature, leading to numerous variations. In this paper, we support a very broad set of ranking functions with only one requirement: {\em domination-consistent}, i.e., if a tuple $t$ dominates $t^\prime$ and both match a query $q$, then $t$ should be ranked higher than $t^\prime$ in the answer.  All results in the paper hold on any arbitrary ranking function so long as it satisfies this requirement.


\vspace{1mm}
\noindent{\bf Filtering Attributes:} While a web database may contain order-less filtering attributes, they have no bearing on the definition of skyline tuples.  We further note that filtering attributes have no implication on skyline discovery {\em unless} there are skyline tuples with the exact same value combination on all ranking attributes. Even in this case, what one needs to do is to simply issue, for each discovered skyline tuple, a conjunctive query with equality conditions on all ranking attributes.  If the query overflows, one can then crawl all tuples matching the query using the techniques in \cite{sheng2012optimal} . 

Since such a case (i.e., multiple skyline tuples having the exact same value combination on all ranking attributes) is unlikely to happen when we have a meaningful skyline definition,
in most parts of the paper we make the general positioning assumption, i.e., all skyline tuples have unique value combinations on ranking attributes, as assumed in most prior work \cite{borzsony2001skyline,tan2001efficient,chomicki2003skyline,kossmann2002shooting}.  Our experiments in \S~\ref{sec:experiments}, however, do involve filtering attributes and confirm that they have no implication on skyline discovery.

Finally, for the purpose of this paper, we consider the problem of discovering skyline tuples over the {\em entire database}. If the goal is to discover skyline tuples for a subset of the database subject to certain filtering conditions, all results in the paper still readily apply. The only change required is to simply append the filtering conditions as conjunctive predicates to {\em all} queries issued.

\subsection{Taxonomy of Attribute Search Interface} \label{subsec-interfacetaxonomy}


We now discuss in detail what types of predicates may be supported for an attribute - an issue that, somewhat surprisingly, turns out crucial for the efficiency of skyline discovery.  Specifically, we partition the support into three categories depending on two factors: (1) whether range predicates are supported for the attribute, or only equality (i.e., point) predicates are allowed, and (2) when range predicates are supported, whether the range is one-ended (i.e., ``better than'' a user-specified value), or two-ended.

\begin{itemize}[noitemsep,topsep=1pt,parsep=1pt,partopsep=0pt]
\item SQ, i.e., {\em Single-ended range Query predicate}, means that predicate on $A_i$ can be $A_i < v$, $A_i \leq v$ or $A_i = v$, where $v \in Dom(A_i)$. Note that we do not further distinguish whether $<$ or $\leq$ (or both) is supported, because they are easily reducible to each other - e.g., one can combine the answers to $A_i < v$ and $A_i = v$ to produce that for $A_i \leq v$. On the other hand, if $A_i \leq v$ is supported but not $A_i < v$, one can take the next smaller value (than $v$) in $Dom(A_i)$, say $v^\prime$, and then query $A_i \leq v^\prime$ instead\footnote{Of course, in the case where $Dom(A_i)$ is an infinite set, e.g., when $A_i$ is continuous, a tacit assumption here is that we know a small value $\epsilon$ such that no tuple can have $A_i \in (v - \epsilon, v)$. Given that the values represented in a database are anyway discrete in nature, this assumption can be easily satisfied by assuming a fixed precision level for the skyline definition.}.
\item RQ, i.e., {\em Range Query predicate}, means that predicate on $A_i$ can be $A_i <$ (or $\leq)$ $v$, $A_i = v$ or $A_i >$ (or $\geq)$ $v$.
\item PQ, i.e., {\em Point Query predicate}, means that predicate on $A_i$ can only be of the form $A_i = v$.
\end{itemize}

Having defined the three types of predicates, SQ, RQ and PQ, we now discuss the comparisons between them, starting with SQ vs RQ within range predicates, and then range vs point predicates.

\vspace{1mm} \noindent{\bf SQ vs RQ:} One might wonder why both single-ended SQ and two-ended RQ exist in a web interface. To understand why, consider two examples: the {\em memory size} and {\em price} of a laptop, respectively. Both have an inherent order: the larger the memory size or the lower the price, the better. Nonetheless, their presentations in the search interface are often different:

Memory size is often presented as SQ, because there is little motivation for a user to specify an upper bound on the memory size. Price, on the other hand, is quite different. Specifically, it is usually set as an RQ attribute with two-ended range support because, even though almost all users prefer a lower price (for the same product), many users indeed specify both ends of a price range to {\em filter} the search results to the items they desire. The underlying reason here is that price is often correlated (or perceived to be correlated) with the quality or performance of a laptop. For the lack of understanding of the more ``technical'' attributes, or for the simplicity of considering only one factor, many users set a lower bound on price to filter out low-performance laptops that do not meet their needs.


\vspace{1mm} \noindent{\bf SQ/RQ vs PQ:} Note that range-predicate support (SQ or RQ) is strictly ``stronger'' than PQ: While it is easy to specify a range predicate that is equivalent to a point one, to ``simulate'' a range query, one might have to issue numerous point queries, especially when the domain sizes and the number of attributes are large. 

Fortunately though, real-world hidden databases often only represent an ordinal ranking attribute as PQ when it has (or is discretized to) a very small domain size. For example, flight search websites set the number of stops as PQ because it usually takes only 3 values: 0, 1, or 2+.  On the other hand, price is rarely PQ given the wide range of values it can take.  As we shall elaborate later, the small domain sizes of PQ attributes help with keeping the query cost small, even though PQ still generally requires a much higher query cost for skyline discovery than SQ/RQ.

\subsection{Problem Definition}

\noindent{\bf Performance Measure:} In most parts of the paper, we consider the objective of discovering {\em all} skyline tuples from the hidden web database. Interestingly, our solutions also feature the {\em anytime} property \cite{arai2007anytime} which enables them to quickly discover a large portion of the skyline.

When our goal is complete skyline discovery, what we need to optimize is a single target: efficiency. We note the most important efficiency measure here is {\em not} the computational time, but {\em the number of queries we must issue} to the underlying web database. The rationale here is the query rate limitation enforced by almost all web databases - in terms of the number of queries allowed from an IP address or a user account per day. For example, Google Flight Search API allows only 50 free queries per user per day.

\medskip\noindent
 \framebox[\columnwidth]{\parbox{0.9\columnwidth}{ \textsc{Skyline Discovery Problem:}
Given a hidden database $D$ with query interface supporting a mixture of SQ, RQ or PQ for ranking attributes, without knowledge of the ranking function (except that it is domination-consistent as defined above), retrieve all skyline tuples while minimizing the number of queries issued through the interface.
}}


\section{Skyline Discovery for SQ-DB}
\label{sec:sqdb}

We start by considering the problem of skyline discovery for interfaces that support single-ended range queries. Recall from Section~\ref{sec:preliminaries} that a single-ended range supports $<$ (along with $=$ and $\leq$) only, but not $>$. In this section, we first prove the problem of skyline discovery single-ended range queries is exponential, then develop the main ideas behind our SQ-DB-SKY algorithm, and discuss its query cost analysis.

\begin{theorem} \label{thm:ner}
Considering the SQ interface, there exists a data- base $D$ such that discovering its skyline requires at least $O(|S|^m)$ queries.
\end{theorem}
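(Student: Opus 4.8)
This is a lower bound, so the plan is to exhibit a single ``hard'' database $D$ together with a worst-case yet domination-consistent ranking function, and to argue by an adversary/indistinguishability argument that every correct algorithm must issue $\Omega(|S|^m)$ queries on it. The first thing I would establish is a structural lemma that tames the interface. Restricting attention to the monotone SQ queries $Q=\bigwedge_i (A_i\le v_i)$ (the $=$ and $<$ cases are folded in afterwards), I claim the maximal (non-dominated) tuples matching $Q$ are \emph{exactly} the global skyline tuples contained in $Q$: if a matching $t$ were dominated by some $u$, then $u\le t\le v$ forces $u\in Q$, so $t$ is not maximal in $Q$; conversely a skyline tuple is dominated by nothing at all. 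By domination-consistency these maximal tuples occupy the top ranks, so the answer to $Q$ is nothing more than ``up to $k$ skyline tuples lying in $Q$, chosen adversarially,'' with the empty answer certifying $S\cap Q=\varnothing$. This reduces the whole task to an antichain-search game in which each query extracts only a bounded (constant $k$) number of adversarially chosen members of $S\cap Q$.

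With this reduction I would place $S$ as an antichain of $|S|$ tuples on a grid whose coordinate values, in each of the $m$ dimensions, are drawn from a set of size $\Theta(|S|)$; this induces $\Theta(|S|^m)$ distinct threshold vectors $v=(v_1,\dots,v_m)$ and hence $\Theta(|S|^m)$ candidate boxes. The design goal is that certifying completeness forces the algorithm to resolve each of these regions separately: I want a family of $\Theta(|S|^m)$ mutually ``independent'' regions such that every prior answer is consistent both with a region being empty and with its containing one further, not-yet-revealed skyline tuple. Then any region the algorithm has not directly probed can be populated (or left empty) by the adversary without contradicting a single answer seen so far. Since one query returns at most $k=O(1)$ tuples and so resolves only $O(1)$ regions, an algorithm issuing $o(|S|^m)$ queries leaves some region undetermined; the adversary then produces two databases with different skylines consistent with every answer, and the algorithm is wrong on at least one. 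This indistinguishability step is what yields the $\Omega(|S|^m)$ bound.

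The main obstacle is making the adversary's two alternatives genuinely consistent under domination-consistency while keeping the regions independent enough to reach the full exponent $m$ rather than a weaker bound. When the adversary hypothesizes an extra skyline tuple inside an unprobed region, that tuple must be incomparable to every tuple already returned (otherwise it would have been forced into some earlier top-$k$ answer), and each previously issued box must already contain at least $k$ eligible skyline tuples so the adversary never had to reveal it. Engineering the coordinates and tuple multiplicities so that (i) a single query touches essentially one region yet (ii) all $\Theta(|S|^m)$ regions are individually hideable is the delicate part, and the equality ($A_i=v$) predicates, which cut lower-dimensional slabs where the clean structural lemma no longer applies, must be shown not to shortcut this count. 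I would also verify the construction dimension-by-dimension rather than trust the planar intuition, since in two dimensions the relevant staircase complexity collapses to linear; this signals that the hardness genuinely arises from the interaction across all $m$ attributes. Accordingly, most of the effort goes into this geometric/combinatorial counting and into a charging argument that pins at least one dedicated query to each region.
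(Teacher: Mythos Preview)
Your high-level plan---an adversary/indistinguishability argument in which the adversary threatens to insert one extra skyline point into each of many ``hideable'' regions---is indeed the paper's strategy, and your structural lemma (for a pure $\le$-box $Q$, the non-dominated matching tuples are exactly the global skyline points inside $Q$) is correct. But the proposal stops precisely where the work begins: what you call ``the delicate part'' and ``engineering the coordinates'' is the entire content of the proof, and you have not supplied it.

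The paper fills this gap with two concrete devices you are missing. First, rather than relying on your lemma to tame under-specified or equality queries, it \emph{neutralizes} them by inserting $m$ blocker tuples $t^0_1,\ldots,t^0_m$ with $t^0_i[A_j]=0$ for $j\ne i$ and $t^0_i[A_i]=h{+}1$. Any query lacking a bound on some $A_j$ matches $t^0_j$, which the adversarial ranking returns; hence only fully specified boxes $\bigwedge_i A_i<v_i$ (each identified with its corner $v$) can be informative, and the ``$=$ predicates'' issue you flag simply disappears. Your plan has no analogue of this step, and without it both the folding-in of $=$/$<$ and the cross-query consistency of the adversary's ranking are genuinely open. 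Second, the paper makes ``hideable'' precise: a point $v\notin D$ must be queried whenever (i) $v$ is skyline over $D\cup\{v\}$ and (ii) every point dominated by $v$ is already dominated by some tuple of $D$. Condition~(ii) is exactly what lets the adversary answer every fully specified box strictly below $v$ with a $D$-tuple, so nothing short of the query at $v$ itself distinguishes $D$ from $D\cup\{v\}$. The paper then takes the skyline to consist of (noised) permutations of $\{1,\ldots,m\}$ and, for each $(m{-}1)$-subset of these tuples, constructs one such $v$ by a greedy max-then-exclude sweep over $A_1,\ldots,A_{m-1}$ followed by a $\min$ on $A_m$; the noise makes the resulting $v$'s pairwise distinct, giving the required $\binom{|S|}{m-1}$ mandatory queries. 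This explicit construction together with condition~(ii) is precisely what your ``independent regions'' and ``charging argument'' would have to become; until you write them down, the proposal is a description of the target rather than a proof.
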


\begin{proof}
We construct the proof for the case where $|S|$ is larger than $m$.
Let the domain of each attribute $A_i$ ($i \in [1, m]$) be $[0, h + 1]$, with smaller values preferred over larger ones.  We first insert into the database $D$ the following $m$ tuples $t^0_1, \ldots, t^0_m$, such that
\begin{align}
t^0_i[A_j] = 
\begin{cases}
0, &\mbox{if } i \neq j,\\
h+1, &\mbox{if } i = j.
\end{cases}
\end{align}

There are two key observations here. First is that, knowing the insertion of these $m$ tuples, any optimal skyline discovery algorithm for SQ-DB must issue solely fully-specified queries (i.e., those with one conjunctive predicate on each attribute $A_i$).  The reason is that any query with fewer than $m$ predicates will always return one of $t^0_1, \ldots, t^0_m$, rendering the query answer useless.

Second is that the insertion of these tuples do not affect the skyline nature of any skyline tuples in $D$, so long as we keep the domain of $A_i$ for any tuple in $D$ within $[1, h]$. The reason is that any tuple with attribute values solely in $[1, h]$ cannot be dominated by a tuple in $t^0_1, \ldots, t^0_m$, which always has one attribute equal to $h + 1$.

Having established the fact that the query sequence issued for skyline discovery consists solely of fully-specified SQ queries, we can safely represent each query by a point in the $m$-dimensional space, specifically the lowest-ranked point covered by the query. For example, given an SQ query
\begin{center}
$q$: SELECT * FROM D WHERE $A_1 < v_1$ AND $\cdots$ AND $A_m < v_m$,
\end{center}
we can represent $q$ as the point $v(q): \langle v_1, \ldots, v_m \rangle$.  We are now ready to introduce the following key proposition:

\vspace{2mm}
\noindent{\bf Proposition:} If a point $v$ ($v \not \in D$) satisfies two conditions: (1) $v$ is a skyline tuple over $D \cup \{v\}$, and (2) any tuple dominated by $v$ must also be dominated by at least one tuple in $D$, then any skyline discovery algorithm over $D$ must issue the query corresponding to $v$.

The proof to the proposition is simple. Since, as proved above, skyline discovery algorithms can only issue fully specified queries, the only such queries that return $v$ are corresponding to points that are equal to or dominated by $v$. Since any point $v^\prime$ dominated by $v$ is also dominated by a tuple in $D$, it means that issuing $v^\prime$ may reveal the tuple in $D$ instead of $v$. In other words, without issuing $v$, there is no way for a skyline discovery algorithm to distinguish between $D$ and $D \cup \{v\}$, meaning that the algorithm cannot safely conclude that it has crawled all skyline tuples over $D$. Thus, any skyline discovery algorithm over $D$ must issue the query corresponding to $v$.

\vspace{2mm}
Given the proposition, one can see that the lower bound proof is essentially reduced to a count of points that satisfy the two conditions in the proposition.  Consider a database with $|S|$ skyline tuples $t_1, \ldots, t_{|S|}$, each having a unique permutation of $1, 2, \ldots, m$ as the values for $A_1, \ldots, A_m$, respectively. To better illustrate the proof, we add a unique, arbitrarily small, noise $\epsilon_{ij}$ to the value of $A_j$ for skyline tuple $t_i$ ($i \in [1, |S|], j \in [1, m]$), such that $\epsilon_{ij}$ is unique for each combination of $i$ and $j$.

We now show that every unique combination of $m - 1$ tuples in $t_1, \ldots, t_{|S|}$ yields a unique point $v$ that satisfies the two above-described conditions. Without loss of generality, consider $m - 1$ tuples $t_1, \ldots, t_{m-1}$. Consider the following construction:

We start with $A_1$ and assign $v[A_1] = \max(t_1[A_1], \ldots, t_{m-1}[A_1])$.  Again without loss of generality, let $t_1$ be the tuple featuring this ``worst'' value on $A_1$. Next, we exclude $t_1$ from consideration and find the worst value on $A_2$, i.e., $v[A_2] = \max(t_2[A_2], \ldots, t_{m-1}[A_2])$, and continue this process.  One can see that by the time we reach $A_{m-1}$, there is only one tuple left, say $t_{m-1}$, and we assign $v[A_{m-1}] = t_{m-1}$.

To determine the value for $v[A_m]$, we issue the following query
\begin{center}
$q$: SELECT MIN($A_m$) FROM D WHERE $A_1 \leq v[A_1]$ AND $\cdots$ AND $A_{m-1} \leq v[A_{m-1}]$,
\end{center}
and assign to $v[A_m]$ the result minus an arbitrarily small noise, i.e., $v[A_m] = q - \epsilon$ where $\epsilon$ is arbitrarily close to 0.  Note that this query will never return empty because the above construction guarantees that $A_{m-1}$ satisfies the selection conditions in the query.

There are two key observations from this construction. First, this constructed $v$ is guaranteed to satisfy both conditions described above. The proof is straightforward, given that $v[A_m]$ is equal to (sans an arbitrarily small noise) the MIN($A_m$) among all tuples dominating $v$ on the other $m-1$ attributes.

Second, each different combination of $m-1$ tuples in $t_1, \ldots, t_{|S|}$ yields a different point $v$. The reason for this is also simple: each of the first $m-1$ attributes of $v$ comes from a different tuple. Since each attribute of each tuple features a unique value (thanks to the inserted noise $\epsilon_{ij}$), each unique combination of $m-1$ tuples yields a unique $v$.

One can see that, given these two observations, there are at least $\left({|S|}\atop{m}\right)$ points $v$ that satisfy both of the above conditions.  Thus, the query cost for a skyline discovery algorithm is $O(|S|^m)$.
\end{proof}

\subsection{Key Idea: Algorithm SQ-DB-SKY}

Our SQ-DB-SKY algorithm is an iterative divide-and-conquer one that starts by issuing broad queries, determines which queries to issue next based on the tuples received so far, and then gradually narrowing them to more specific ones. For the ease of understanding, consider the example of a 3-dimensional database. Suppose the tuple returned by $q_1:$ SELECT * FROM D is $t_1$. Algorithm SQ-DB-SKY first issues the following three queries:

\vspace{1mm}
\indent $q_2$: SELECT * FROM D WHERE $A_1 < t_1[A_1]$\\
\indent $q_3$: SELECT * FROM D WHERE $A_2 < t_1[A_2]$\\
\indent $q_4$: SELECT * FROM D WHERE $A_3 < t_1[A_3]$
\vspace{1mm}

\noindent A key observation here is that the {\em comprehensiveness} of skyline discovery is maintained when we divide the problem to the subspaces defined by $q_2$, $q_3$, $q_4$. Specifically, every skyline tuple (besides $t_1$) must satisfy at least one of $q_2,q_3,q_4$ because otherwise it would be dominated by $t_1$.  Now suppose $q_2$ returns $t_2$ as top-1 (which must be on the skyline because no tuple with $A_i \geq v$ can dominate one with $A_i < v$). We continue with further ``dividing'' (the subspace defined by) $q_2$ into three queries according to $t_2$:

\vspace{1mm}
\indent $q_5$: {WHERE $A_1 < t_2[A_1]$}\\
\indent $q_6$: {WHERE $A_1 < t_1[A_1]$ AND $A_2 < t_2[A_2]$}\\
\indent $q_7$: {WHERE $A_1 < t_1[A_1]$ AND $A_3 < t_2[A_3]$}
\vspace{1mm}

\noindent Again, any skyline tuple that satisfies $q_2$ (i.e., with $A_1 < t_1[A_1]$) must match at least one of the three queries. One can see that this process can be repeated recursively from here: Every time a query $q_j$ returns a tuple $t$, we generate $m$ queries by appending $A_1 < t[A_1], \ldots, A_m < t[A_m]$ to $q_j$, respectively. A critical observation here is that any skyline tuple matching $q_j$ must match at least one of the $m$ generated queries, because it has to surpass $t$ on at least one attribute in order to be on the skyline. As such, so long as we follow the process to traverse a ``query tree'' as shown in Figure~\ref{fig:tree}, we are guaranteed to discover all skyline tuples.

\begin{theorem}\label{thm:sq-db-completeness}                                                       
Algorithm SQ-DB-SKY is guaranteed to discover all skyline tuples.                                   
\end{theorem}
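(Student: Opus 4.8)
The plan is to recast ``discovering all skyline tuples'' as the statement that every skyline tuple of $D$ appears as the top-$1$ answer of \emph{some} node in the query tree, and then to prove this by a well-founded induction over that tree. I would isolate two ingredients that the preceding discussion has essentially already supplied. The first is \emph{soundness}: any tuple $t$ returned as top-$1$ of a node whose query $q$ consists of predicates of the form $A_i < v_i$ is a genuine skyline tuple of $D$. Indeed, by domination-consistency no tuple matching $q$ can dominate the top-$1$ answer $t$, and any tuple $u$ not matching $q$ violates some constraint, i.e. $u[A_i] \ge v_i > t[A_i]$, so $u$ cannot dominate $t$ either. The second ingredient is the \emph{comprehensiveness} (coverage) property already argued in the text: if a node with query $q$ returns $t$, then every skyline tuple $s \neq t$ matching $q$ matches at least one of the $m$ children $q \wedge (A_i < t[A_i])$. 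The one-line justification I would reproduce is that, were $s$ to match none of the children, we would have $s[A_i] \ge t[A_i]$ for all $i$; with $t \neq s$ and the general-positioning assumption this gives $t[A_i] \le t[A_i]$, i.e. $t$ dominates $s$, contradicting that $s$ is on the skyline.

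Given these two facts, the real work is promoting ``$s$ matches some query in the tree'' to ``$s$ is \emph{returned} by some node,'' which I would do by strong induction on a progress measure. For a node with query $q$ and a skyline tuple $s$ matching $q$, define $r_q(s)$ to be the number of tuples of $D$ that match $q$ and rank strictly higher than $s$. I claim every skyline tuple matching $q$ is returned somewhere in the subtree rooted at that node, and prove it by induction on $r_q(s)$. In the base case $r_q(s)=0$ the tuple $s$ is itself the top-$1$ of $q$, hence returned. For the inductive step, $r_q(s) > 0$ forces $s \neq t$, so by comprehensiveness $s$ matches some child $q^* = q \wedge (A_i < t[A_i])$. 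The key observation is that the parent's top-$1$ answer $t$ fails \emph{every} child constraint (since $t[A_i] \not< t[A_i]$), so $t \notin \mathrm{matches}(q^*)$, while $\mathrm{matches}(q^*) \subseteq \mathrm{matches}(q)$ introduces no new tuples outranking $s$. Since $t$ strictly outranks $s$, this yields $r_{q^*}(s) \le r_q(s) - 1 < r_q(s)$, and the induction hypothesis applied at the child node finishes the step. Applying the claim at the root $q_1$ (which every tuple, hence every skyline tuple, matches) gives the theorem.

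The step I expect to be the main obstacle is precisely the termination/progress argument in the inductive step, and it is where the single-ended nature of SQ bites. Because SQ subqueries \emph{overlap} (a point the introduction contrasts with RQ), a skyline tuple $s$ can satisfy many subqueries simultaneously, so it is not syntactically obvious that following the ``branch that keeps matching $s$'' ever terminates with $s$ being the top-$1$. The crux is finding a strictly decreasing well-founded measure, and verifying the decrease rests on two facts that must be checked carefully: that the parent's top-$1$ tuple $t$ is excluded from \emph{all} $m$ children, and that passing to a child never introduces a tuple ranking above $s$ that was not already counted. I would also note explicitly the mild assumptions being leaned on---a total (tie-broken) ranking so that ``top-$1$'' is unique, and general positioning so that the coverage argument converts $s[A_i]\ge t[A_i]$ for all $i$ into genuine domination---and remark that the argument uses only the top-$1$ answer, so it is unaffected by whether a query overflows and returns up to $k$ tuples.
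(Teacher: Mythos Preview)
Your proof is correct, and the core idea---trace a path down the query tree along which the skyline tuple $s$ keeps matching, using the coverage lemma---is the same as the paper's. The paper frames it as a contradiction terminating at a \emph{leaf}: a leaf is by construction a non-overflowing (valid or underflowing) query, so it returns all its matching tuples, including $s$. You instead supply an explicit well-founded measure $r_q(s)$ and show it strictly decreases at each step, which makes the termination argument self-contained (and, incidentally, bounds the depth at which $s$ appears). One small point worth tightening: your inductive step tacitly assumes the node has children whenever $r_q(s)>0$, which is automatic only when $k=1$; for general $k$ you need the additional (trivial) case that if the node does \emph{not} overflow then it returns all $<k$ matching tuples, $s$ among them---this is exactly the paper's leaf observation, and it closes your argument with no further work.
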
 

\begin{proof}
Consider any skyline tuple $t$.                                    
To prove that $t$ will always be discovered by SQ-DB-SKY, we construct the proof by contradiction.  
Suppose that $t$ is not discovered, i.e., it is not returned by any node in the tree.               
We start by considering the $m$ branches of the root node.                                          
Since $t$ is a skyline tuple, it must satisfy at least one of these branches, as otherwise it would be dominated by the tuple returned by the root node (contradicting the assumption that $t$ is a skyline tuple).
When there are multiple branches matching $t$, choose one branch arbitrarily. 
Consider the node corresponding to the branch, say $q_i: A_i < t_1[A_i]$.
Since $q_i$ matches $t$ yet does not return it (because otherwise $t$ would have been discovered), it must overflow and therefore have $m$ branches of its own.

Once again, $t$ has to satisfy at least one of these $m$ branches (of $q_i$), as otherwise $t$ would have been dominated by the tuple returned by $q_i$ (contradicting the skyline assumption).
Repeat this process recursively; and one can see that there must exist a path from the root to a leaf node in the tree,
such that $t$ satisfies each and every node on the path.
Since every leaf node of the tree is a valid or underflowing query, this means that the leaf node must return $t$,
contradicting the assumption that $t$ is not discovered.                                            
This proves the completeness of skyline discovery by SQ-DB-SKY. 
\end{proof}

\noindent
In order to better understand the correctness of the algorithm, consider the dummy example provided in Figure~\ref{table:example}, and its corresponding SQ-DB-SKY tree in Figure~\ref{fig:sqexample}. One can see that each skyline tuple appears in at least one of the branches, as otherwise it would have been dominated by another (skyline) tuple. 

Algorithm~\ref{alg:SQ-DB-SKY} depicts the pseudo code for SQ-DB-SKY. Note from the algorithm that a larger $k$ (as in top-$k$ returned by the database) reduces query cost for two reasons: First, every returned tuple that is not dominated by another in the top-$k$ is guaranteed to be a skyline tuple. Second, a larger $k$ also makes the tree shallower because a node becomes leaf if it returns fewer than $k$ tuples. This phenomenon is verified in our experimental studies.


\begin{figure}
\center
\includegraphics[width=0.45\textwidth]{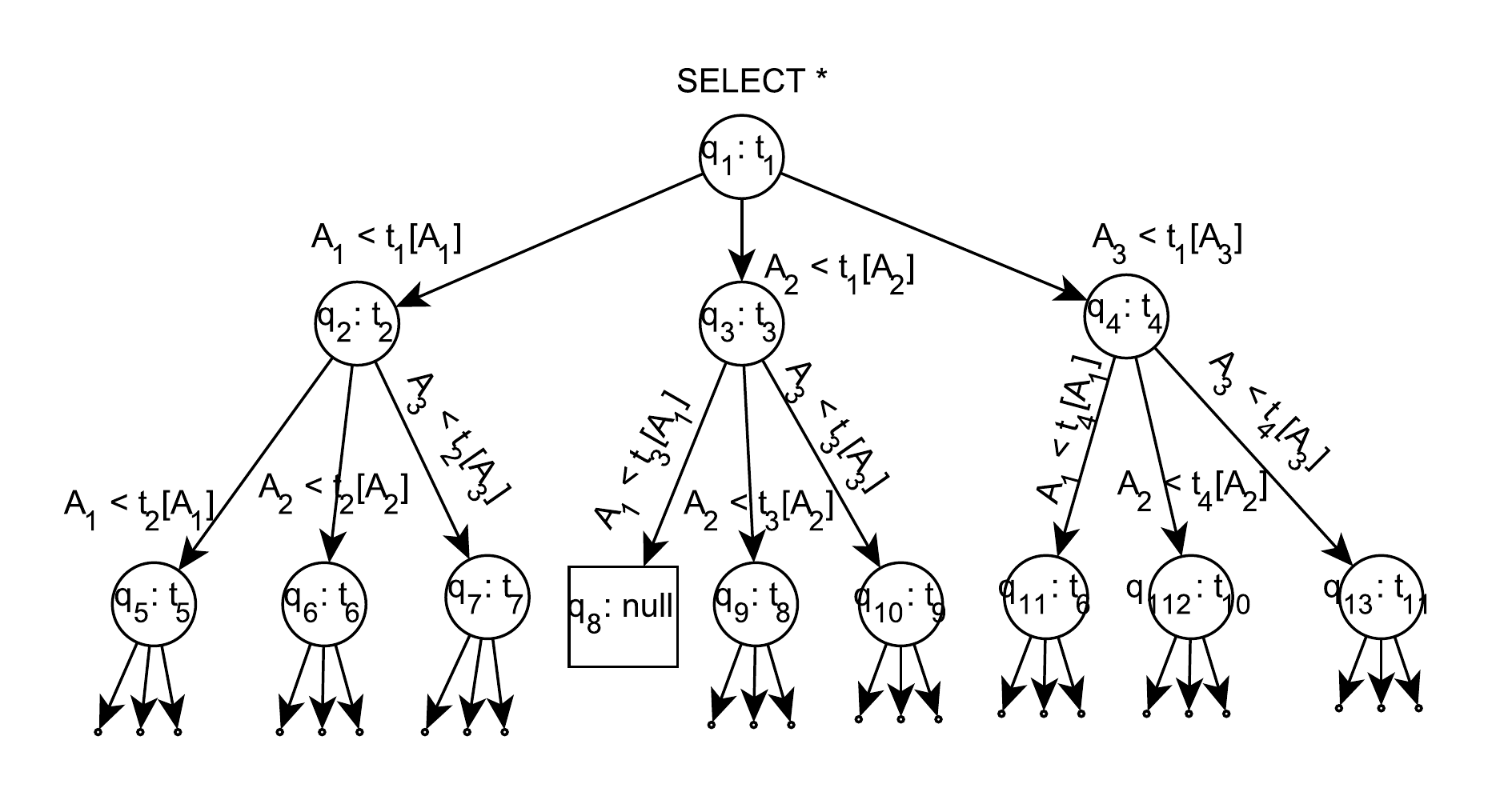}
\caption{Tree illustration} \label{fig:tree}
\end{figure}

We would like to clarify that, it is {\em not} needed to find {\em the largest domain value} of $A_i$ smaller than $v$.  
Instead, so long as we find $v^\prime < v$ such that replacing the predicate $A_i \leq v$ 
with $A_i \leq v^\prime$ still leads to an non-empty query answer, 
the algorithms will work. 
The only case where we may have trouble with a $\leq$ interface is when $A_i \leq v$ overflows, 
yet it takes a larger number of queries to perform binary search to find $v^\prime < v$ with nonempty $A_i \leq v^\prime$.  
This means that there is a tuple with value $v - \epsilon$ on $A_i$, with $\epsilon$ extremely close to $0$.  
While it is true that this situation may lead to a high query cost for our algorithm, 
we have not seen this behavior in any real-world database for the simple reason that 
it will make it extremely difficult for a normal user of the hidden database to specify a query that unveils the tuple with $A_i = v - \epsilon$.

\begin{figure}
	\centering        
	\begin{tabular}{|@{}c@{}|@{}c@{}|@{}c@{}|@{}c@{}|}
	 \hline
	 &$A_1$&$A_2$&$A_3$ \\ \hline
	 $t_1$&$5$&$1$&$9$ \\ \hline
	 $t_2$&$4$&$4$&$8$ \\ \hline
	 $t_3$&$1$&$3$&$7$ \\ \hline
	 $t_4$&$3$&$2$&$3$ \\ \hline   
	\end{tabular}
	\caption{Illustration of example} \label{table:example}
\end{figure}

\begin{figure}
\center
\includegraphics[width=0.35\textwidth]{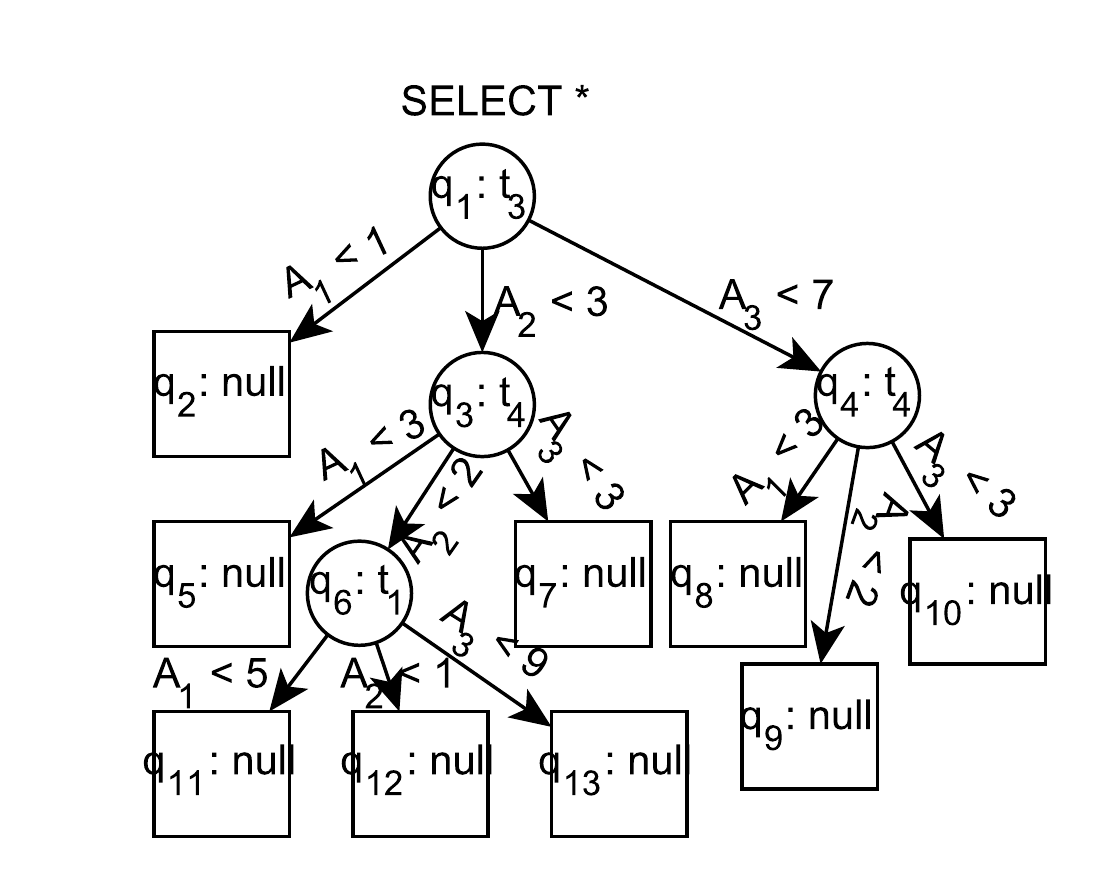}
\caption{\small{SQ-DB-SKY example tree}}
\label{fig:sqexample}
\end{figure}

\begin{algorithm}[!htb]

\caption{{\bf SQ-DB-SKY}}
\begin{algorithmic}[1]
\label{alg:SQ-DB-SKY}
\STATE QueryQ = \{SELECT * FROM D\}; \qquad $S=\{\}$
\STATE {\bf while} QueryQ is not empty
    \STATE \hindent $q = $ QueryQ.deque(); \qquad $T = $ Top-$k$($q$)
    \STATE \hindent {\bf if} $T$ is not empty
    	\STATE \hindent[2] Append the none-dominated tuples in $T$ to $S$
    	\STATE \hindent {\bf if} $T$ contains $k$ tuples
        	\STATE \hindent[2] Construct $m$ queries $q_1, \ldots, q_m$ where query $q_i$ appends 
            	\STATE \hindent[3] predicate ``$A_i < T_0[A_i]$'' to $q$
        	\STATE \hindent[2] Append $q_1, \ldots q_m$ to QueryQ
\end{algorithmic}
\end{algorithm}

\subsection{Query-Cost Analysis}

Algorithm SQ-DB-SKY has one nice property and one problem in terms of query cost: The nice property is that the top-1 tuple returned by every node (i.e., query) must be on the skyline (because it cannot be dominated by a tuple not matching the query).  The problem, however, is that a skyline tuple $t$ might be returned as No.~1 by {\em multiple} nodes, potentially leading to a large tree size and thus a high query cost.  For example, if $t$ has $t[A_1] < t_1[A_1]$ and $t[A_2] < t_2[A_2]$, then it might be returned by both $q_2$ and $q_3$.

\vspace{1mm}
\noindent{\bf Worst-Case Analysis:} Given the overlap between tuples returned by different nodes, the key for analyzing the query cost of SQ-DB-SKY is to count how many nodes in the tree return a tuple.  Because we are analyzing the worst-case scenario, we have to consider $k = 1$ and any arbitrary, ill-behaved, system ranking functions.  In other words, so long as a tuple matches a node, it may be returned by it. To this end, there is almost no limit on how many times a tuple can be returned, except the following {\em prefix-free} rule:

Note that each node in the tree can be (uniquely) represented by a sequence of 2-tuples $\langle t_i, A_j\rangle$, where $t_i$ is a skyline tuple returned by a node, and $A_j$ is an attribute corresponding to the branch taken from the node. For example, the nodes corresponding to $q_2$ and $q_5$ are represented as $\langle t_1, A_1\rangle$ and $\langle t_1, A_1\rangle$, $\langle t_2, A_1\rangle$, respectively. The one property that all nodes returning the same tuple $t$ must satisfy is that the sequence representing one node, say $q$, {\em cannot be a prefix} of the sequence representing another, say $q^\prime$. The reason is simple: if the sequence of $q$ is a prefix of $q^\prime$, then $q^\prime$ must be in the subtree of $q$. However, according to the design of SQ-DB-SKY, since $q$ returns $t$, none of the nodes in the subtree of $q$ matches $t$. This contradicts the assumption that both $q$ and $q^\prime$ return $t$.

Given the prefix-free rule, a crude upper bound for the number of nodes returning a tuple is $w \leq |S|^m$,
where $|S|$ is the number of skyline tuples.  This is because a query can have at most $m$ predicates, each with a different attribute and a value (i.e., $v$ as in $A_i < v$) equal to that of one of the skyline tuples (i.e., $v = t[A_i]$ where $t$ is a skyline tuple). Since no query of concern can be the prefix of another, the maximum number of such queries is $O(|S|^m)$.  Given this bound, the maximum number of nodes in the tree is $O(|S| \cdot (|S|^m) \cdot (m + 1)) = O(m \cdot |S|^{m+1})$.

One can make two observations from this worst-case bound: First, the query cost of SQ-DB-SKY depends on the number of {\em skyline} tuples, not the total number of tuples. This is good news because, as prior research on skyline sizes \cite{buchta1989average} shows, the number of skyline tuples is likely orders of magnitude smaller than the number of tuples.  Another observation, however, is seemingly bad news: the worst-case cost grows exponentially with the number of attributes $m$. Fortunately, this is mostly the artifact of an arbitrary system ranking function we must assume in the worst-case analysis, rather than an indication of what happens in practice. To understand why, consider what really happens when the worst-case result strikes, i.e., a tuple $t$ is returned by queries with $\Omega(m)$ predicates.

Consider a Level-$m$ node returning $t$.  Let its 2-tuple sequence be $\langle t_1, A_1\rangle$, $\ldots$, $\langle t_m, A_m\rangle$.  What this means is not only that $t$ outperforms $t_i$ on $A_i$ for all $i \in [1, m]$, but also that $t_m$ does the same (i.e., outperforms $t_i$ on $A_i$) for all $i \in [1, m - 1]$, $t_{m-1}$ for all $i \in [1, m - 2]$, etc. In other words, this tuple $t$ is likely ranked highly on many attributes - yet its overall rank is too low to be returned by any of the $m$ predecessor queries. While this could occur for an ill-behaved system ranking function, it is difficult to imagine a reasonable ranking function doing the same. As we show as follows, so long as we assume a ``reasonable'' ranking function, the worst-case query cost can indeed be reduced by orders of magnitude, no matter what the underlying data distribution is.


\vspace{1mm}
\noindent{\bf Average-Case Analysis:} By ``average-case'' analysis, we mean an analysis done based on a single assumption: the system ranking function is {\em random among skyline tuples} - i.e., for any query $q$, the ranking function returns a tuple chosen uniformly at random from $S(q)$, i.e., the set of skyline tuples matching $q$.  One can see that this represents the ``average'' case as a randomly chosen skyline tuple from $S(q)$ can be considered an average of the top-1 selections of all legitimate ranking functions given $q$ and the database $D$. As we shall discuss after this analysis, this is likely still ``worse'' than what happens in practice. Yet even this conservation assumption is enough to significantly reduce the worst-case query cost.

The most important observation for our average-case analysis can be stated as follows: The expected query cost (taken over the aforementioned randomness of the system ranking function) of SQ-DB-SKY is a {\em deterministic function} of the number of skyline points $|S|$, regardless of how the tuple are actually distributed. 

To understand why, we start from the simplest case of $|S| = 1$. In this case, the SELECT * query returns the single skyline tuple, while the $m$ branches of it all return empty, finishing the algorithm execution. In other words, the query cost is always $C_1 = m + 1$ (where the subscript 1 stands for $|S| = 1$). Now consider $|S| = 2$. Here, depending on which tuple is returned by SELECT *, some of its $m$ branches may be empty; while some others may return the other skyline tuple.  Let $m_0$ be the number of empty branches. For the $(m - m_0)$ non-empty branches, we essentially need $C_1$ queries to examine each and its $m$ sub-branches (all of which will return empty). One can see that the overall query cost will be
\begin{align}
C_2 = 1 + m_0 + (m - m_0) \cdot C_1. \label{equ:2dr}
\end{align}

Interestingly, regardless of how tuples are distributed, the above-described random ranking always yields $E(m_0) = m/2$ and thus
\begin{align}
E(C_2) = 1 + m/2 + C_1 \cdot m/2,
\end{align}
where the expected value $E(\cdot)$ is taken over the randomness of the ranking function. To see why, note that $m_0$ is indeed the number of attributes on which the tuple returned by SELECT * outperforms the other tuple in the database. Since the ranking function chooses the returned tuple uniformly at random, the expected value of $m_0$ is always $m/2$ regardless of what the actual values are.


Similarly when $|S|>2$, $C_s = 1 + m_0 + m_1 \cdot C_1 + \ldots + m_{s-1} \cdot C_{s-1}$, where $m_i$ is the number of attributes on which $i$ skyline tuples outrank the tuple returned by SELECT * ($t_0$). Since the probability that $t_0$ is outranked by $i$ skyline tuples on a given attribute is $1/s$, the expected number of such attributes is $m/s$. Consequently, the expected query cost of SQ-DB-SKY is
\begin{equation} \label{formula:8}
E(C_s) = 1 + \frac{m}{s} \cdot \sum^{s-1}_{i=0} E(C_i)
\end{equation}
where $C_0 = 1$.
%
With Z-transform and differential equations,
\begin{align}
E(C_s)
= \frac{m((m+s-1)! - (m-1)!s!)}{(m-1)(m-1)!s!}. \label{eq:Cs}
\end{align}
For example, when $m = 2$, we have $E(C_s) = 2s$.

We now show why this average-case query cost is orders of magnitude smaller than the worst-case result. First, since $E(C_i) \geq m + 1$ for all $i \geq 1$, we can derive from (\ref{formula:8}) that
\begin{equation}
E(C_s) \leq \frac{m+1}{m} \cdot \frac{m}{s} \cdot \sum^{s-1}_{i=0} E(C_i) = \frac{m+1}{s} \cdot \sum^{s-1}_{i=0} E(C_i) \label{equ:nb1}
\end{equation}
Clearly, if we set $F_i$ such that $F_0 = 1$ and $F_s = ((m+1)/s) \cdot \sum^{s-1}_{i=0} F_i$, then we have $E(C_i) \leq F_i$ for all $i \geq 0$.  Consider the ratio between $F_s$ and $F_{s-1}$ when $s \gg m$. Note that $F_{s-1} = (m+1)/(s-1) \cdot \sum^{s-2}_{i=1} F_i$ - i.e.,
\begin{align} 
\sum^{s-1}_{i=1} F_i = \frac{s+m}{m+1} \cdot F_{s-1}. \label{equ:nbn}
\end{align}
In other words,
\begin{align}
E(C_s) \leq F_s &= \frac{m+1}{s} \cdot \frac{s+m}{m+1} \cdot F_{s-1} = \frac{s+m}{s} F_{s-1}\\
&=\frac{(s+m)!}{s! \cdot m!} = \left(s+m \atop m\right)\\
&\leq \left(\frac{(s+m) \cdot e}{m}\right)^m = \left(e + \frac{e \cdot s}{m}\right)^m
\end{align}
One can see that the growth rate of $F_s$ (with $|S|$) is much slower than what is indicated by the worst-case analysis - specifically, the base of exponentiation is approximately $(e/m) \cdot |S|$ instead of $|S|$.  Figure~\ref{fig:avgcost} confirms this finding by showing the average and worst-case cost of SQ-DB-SKY for the cases where $m=4$ and $m=8$. One can observe from the figures the significantly smaller query cost indicated by the average-case analysis.

Before concluding the average case analysis, we would like to point out that even this analysis is likely an overly conservative one. To understand why, note from (\ref{equ:2dr}) that the smaller $m_0$ is, i.e., the more branches return empty, the smaller the query cost will be. In the average-case analysis, since we assume a random order of skyline tuples, $E(m_0) = m/|S|$, i.e., the top-ranked tuple returned by SELECT * features the top-ranked value on an average of $m/|S|$ attributes. Clearly, with a real-world ranking function, this number is likely to be much higher, simply because the more ``top'' attributes values a tuple has, the more likely a reasonable ranking function would rank the tuple at the top. As a result, the query cost in practice is usually even lower than what the average-case analysis suggests, as we show in the experimental results.

\begin{figure}
            \centering
        \begin{subfigure}{0.22\textwidth}
                \includegraphics[width=\textwidth]{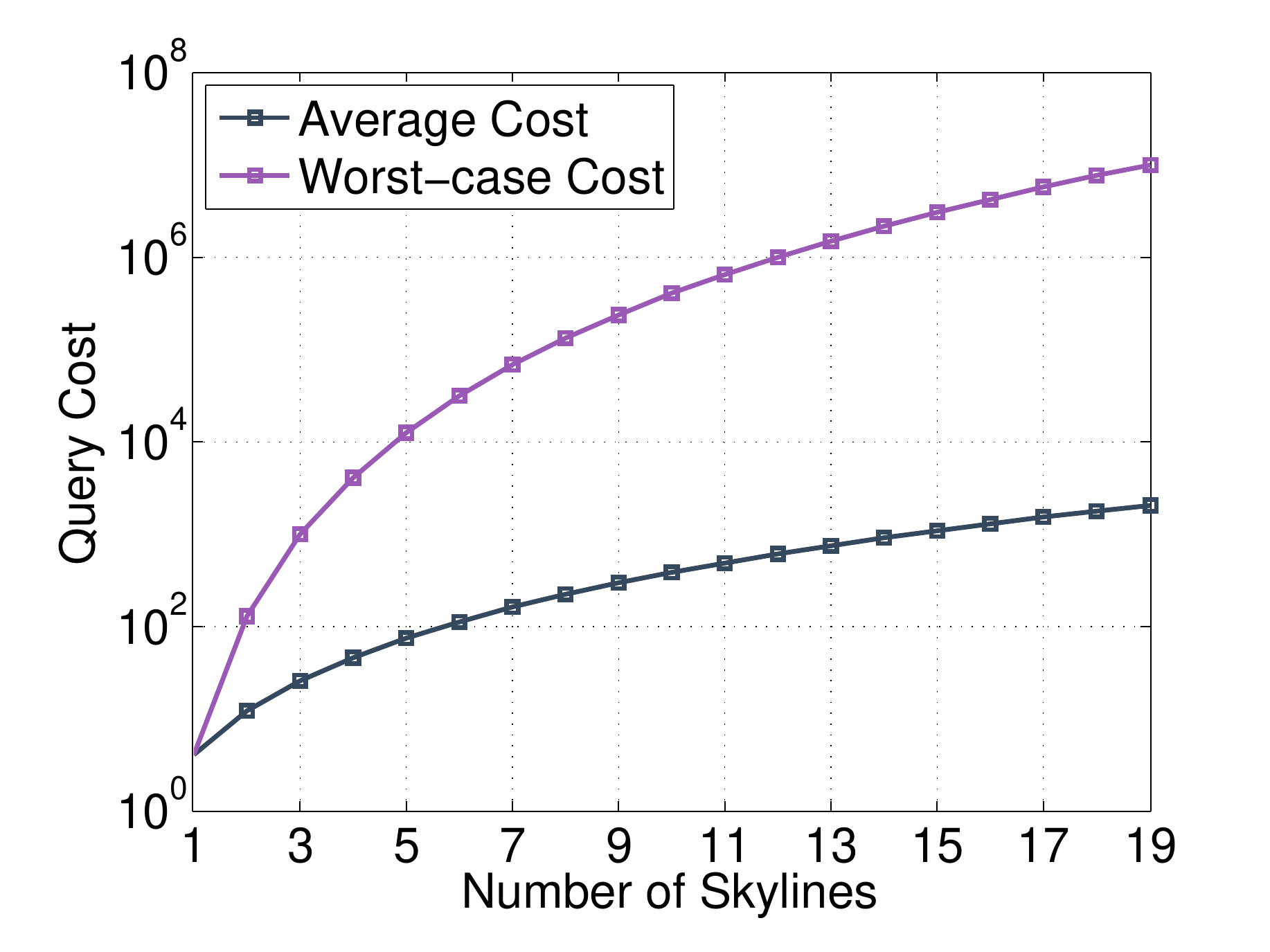}
                \caption{m=4} 
        \end{subfigure}\vspace{-1mm}
        ~ 
        \begin{subfigure}{0.22\textwidth}
                \includegraphics[width=\textwidth]{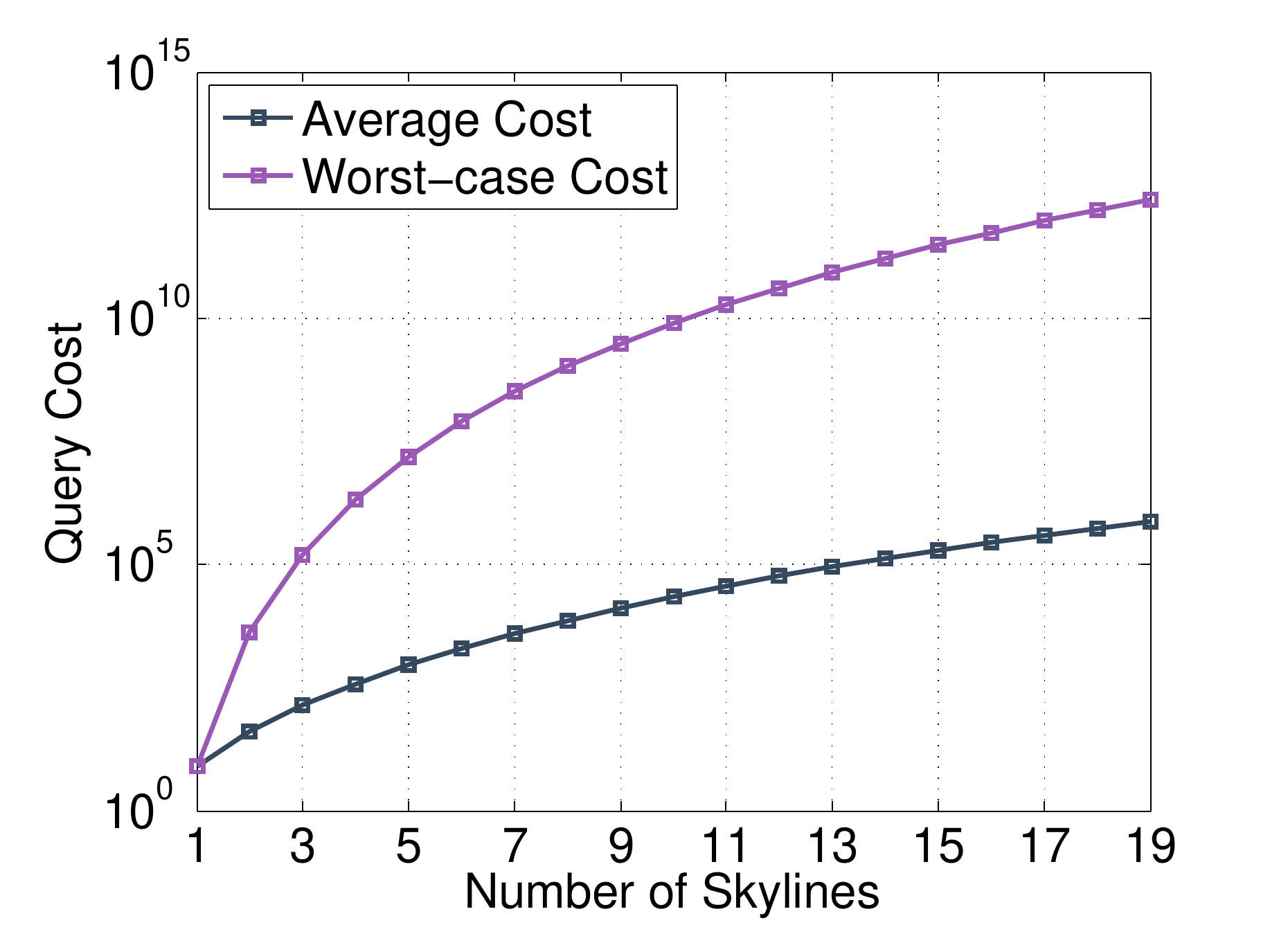}
                \caption{m=8} 
        \end{subfigure}
        \caption{Comparing worst and average cost of SQ-DB-SKY} \label{fig:avgcost}
\end{figure}

\section{Skyline Discovery for RQ-DB}
\label{sec:rqdb}

We now consider the RQ-DB case where range queries support two-ended ranges, rather than one-ended as in the SQ-DB case. Since RQ-DB has a more powerful interface, a straightforward solution here is to directly use Algorithm SQ-DB-SKY. One can see that the algorithm still guarantees complete skyline discovery.

The problem with this solution, however, lies in cases where $|S|$, the number of skyline tuples, is large. Specifically, when $|S|$ approaches the database size $n$, the worst-case query cost may actually be larger than the baseline query cost of $O(m \cdot n)$ for crawling the entire database over a RQ-DB interface \cite{sheng2012optimal}. This indicates what SQ-DB-SKY fails to (or cannot, as it was designed for SQ-DB) leverage - i.e., the availability of both ends on range queries - may reduce the query cost significantly when $|S|$ is large. We consider how to leverage this opportunity in this section.

\subsection{Key Idea: Algorithm RQ-DB-SKY}

\noindent{\bf A Simple Revision and Its Problem:} \label{sec:asr} Our first idea for reducing the query cost stems from a simple observation on the design of $q_2$ to $q_4$ described above: Instead of having them as three {\em overlapping} queries, we can revise them to be {\em mutually exclusive}: 

\vspace{1mm}
\indent $q_2$: {WHERE $A_1 < t_1[A_1]$}\\
\indent $q_3$: {WHERE $A_1 \geq t_1[A_1]$ \& $A_2 < t_1[A_2]$}\\
\indent $q_4$: {WHERE $A_1 \geq t_1[A_1]$ \& $A_2 \geq t_1[A_2]$ \& $A_3 < t_1[A_3]$}
\vspace{1mm}

With this new design, all $m$ branches from a node in the tree (Figure~\ref{fig:tree}) represent mutually exclusive queries. Interestingly, the completeness of skyline discovery is not affected! For example, any skyline tuple other than $t_1$ still belongs to at least one of $q_2$ to $q_4$.

The effectiveness of this revision is evident from one key observation - because of the mutual exclusiveness and the (still valid) completeness of skyline discovery, now every skyline tuple is returned by {\em exactly} one node in the tree. While this seemingly solves all the problems in the query-cost analysis for SQ-DB-SKY, it unfortunately introduces another challenge:

Unlike in SQ-DB-SKY where the top-1 tuple returned by every node is a skyline tuple, with this revised tree, a node might return a tuple {\em not} on the skyline as the No.~1. This can be readily observed from the design of $q_2$ and $q_3$: it is now possible for a tuple returned by $q_3$ to be dominated by $q_2$ - as the space covered by $q_3$ now excludes the space of $q_2$. Because of this new problem, the worst-case query cost for this revised algorithm becomes $O(n \cdot m)$, as it is now possible for each of the $n$ tuples in the database (even those not on the skyline) to be returned by a interior node in the tree. While this bound may still be smaller than that of SQ-DB-SKY when $|S|$ approaches $n$, it may also be much worse when $|S|$ is small. Since we do not have any prior knowledge of $|S|$ before running the algorithm, we need a solution that adapts to the different $|S|$ and offers a consistently small query cost in all cases.

\vspace{2mm}
\noindent{\bf Algorithm RQ-DB-SKY:} To achieve this, our key idea is to combine SQ-DB-SKY with the above-described revision to be the more efficient of the two. To understand the idea, note a 1-1 correspondence between the tree constructed in SQ-DB-SKY and the revised tree: In the revised tree, we map every query $q$ in the tree of SQ-DB-SKY to a query $R(q)$ covering all value combinations matching $q$ but not any $q^\prime$ in SQ-DB-SKY which appears before $q$ in the (depth-first) post-order traversal of the tree. Based on this 1-1 mapping, RQ-DB-SKY works as follows.

We traverse the tree in SQ-DB-SKY and issue queries in depth-first preorder.  A key additional step here is that, for each query $q$ in the tree, before issuing it, we first check all tuples returned by previously issued queries and check if any of these tuples match $q$. If none of them does, then we proceed with issuing $q$ and continuing on with the traversal process.

Otherwise, if at least one previously retrieved tuple matches $q$, then instead of issuing $q$, we issue its {\em counterpart} $R(q)$.  If $R(q)$ is empty, no new skyline tuple can be discovered from the subtree of $q$. Thus, we should abandon this subtree and move on. If $R(q)$ returns as No.~1 a tuple $t$, then either $t$ is dominated by a previously retrieved (skyline) tuple, or it must be a (new) skyline tuple itself. Either way, we must have never seen $t$ before in the answers to the issued queries. If $t$ is dominated by a previously retrieved tuple, say $t^\prime$, then we generate the children of $q$ according to $t^\prime$. Otherwise, we generate them according to $t$. In either case, we continue on with exploring the subtree of $q$ in depth-first preorder. Algorithm~\ref{alg:RQ-DB-SKY} depicts the pseudocode of RQ-DB-SKY.

\begin{algorithm}[!htb]
\caption{{\bf RQ-DB-SKY}}
\begin{algorithmic}[1]
\label{alg:RQ-DB-SKY}
\STATE $S=\{\}$; \quad Seen = \{\}
\STATE traverse the SQ-DB-SKY tree in depth first preorder and at each $q$ in the tree
\STATE \hindent{\bf if} $\nexists$ $t\in $Seen that matches $q$
\STATE \hindent \hindent $T=$ Top-$k$($q$)
\STATE \hindent \hindent {\bf if} $T$ contains $k$ tuples
\STATE \hindent \hindent \hindent generate the children of $q$ based on $T_0$
\STATE \hindent {\bf else}
\STATE \hindent \hindent $T=$ Top-$k$($R(q)$)
\STATE \hindent \hindent {\bf if} $T$ contains $k$ tuples
\STATE \hindent \hindent \hindent {\bf if} $\exists t\prime\in S$ that dominates $T_0$
\STATE \hindent \hindent \hindent \hindent generate the children of $q$ based on $t\prime$
\STATE \hindent \hindent \hindent {\bf else}, generate the children of $q$ based on $T_0$
\STATE \hindent Update $S$ by $T$; \quad Seen=Seen $\cup\, T$
\end{algorithmic}
\end{algorithm}

The correctness of RQ-DB-SKY follows directly from that of SQ-DB-SKY, because these two algorithms essentially follow the exact same query sequence with only one exception: In RQ-DB-SKY, when we are certain from the answer to $R(q)$ that no skyline tuple could possibly be discovered from the subtree of $q$, we forgo the exploration of this subtree and move on. Instead, SQ-DB-SKY does not have this early-termination detection (because the SQ-DB interface does not support $R(q)$), and therefore has to complete the useless subtree exploration process. As we shall show in the next subsection, this early-termination detection can lead to a significant saving of query cost, especially when the number of skyline tuples $|S|$ is large.

\begin{theorem}\label{thm:rq-db-completeness}                                                       
Algorithm RQ-DB-SKY is guaranteed to discover all skyline tuples.                                   
\end{theorem}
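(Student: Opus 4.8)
The plan is to reduce the completeness of RQ-DB-SKY to the already-established completeness of SQ-DB-SKY (Theorem~\ref{thm:sq-db-completeness}) by coupling the two executions on the \emph{same} query tree and arguing that the only ways in which RQ-DB-SKY deviates from SQ-DB-SKY---replacing a query $q$ by its counterpart $R(q)$, pruning a subtree when $R(q)$ underflows, and occasionally generating the children of $q$ from a dominating tuple $t'$ rather than from the top-1 of $q$---never cause a skyline tuple to be missed. As in the proof of Theorem~\ref{thm:sq-db-completeness}, I would argue by contradiction: assume some skyline tuple $t$ is never returned, and trace a root-to-leaf path of tree nodes, each of which $t$ matches, deriving a contradiction at the leaf.

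First I would establish a \emph{covering lemma}: whenever RQ-DB-SKY expands a node $q$, the $m$ children it generates collectively match every still-undiscovered skyline tuple that matches $q$. This is the same argument used for SQ-DB-SKY, and it survives the modified child generation: the child-generating tuple $c$ is either the top-1 of $q$ (or of $R(q)$) or a previously retrieved skyline tuple $t'$ dominating that top-1; since $t$ is undiscovered we have $c \neq t$, and in every case $c$ cannot dominate a skyline tuple (nothing dominates a skyline tuple), so $t$ must beat $c$ on some attribute $A_i$ and therefore matches the child obtained by appending $A_i < c[A_i]$. Consequently, along the hypothesized path, whenever a node overflows and is expanded, $t$ continues to match one of its children, exactly as in the SQ-DB-SKY argument.

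The path can only terminate at a leaf, reached in one of two ways. If RQ-DB-SKY issues $q$ itself and $q$ underflows, then $q$ returns \emph{all} tuples matching it, including $t$---a contradiction. The delicate case, and the main obstacle, is when RQ-DB-SKY issues the counterpart $R(q)$ and it underflows, causing the subtree of $q$ to be pruned. Here $R(q)$ returns all tuples in its region $\mathrm{Reg}(R(q)) = \mathrm{Reg}(q) \setminus \bigcup_{q'} \mathrm{Reg}(q')$, where the union is over queries already processed before $q$. If $t \in \mathrm{Reg}(R(q))$ then $t$ is returned and we are done; otherwise $t$ lies in $\mathrm{Reg}(q')$ for one of the excluded, already-processed queries $q'$, and I must show that $t$ was discovered there.

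To close this gap I would prove an \emph{invariant} by induction on the depth-first preorder traversal: once RQ-DB-SKY has finished processing a node $q'$ together with its entire subtree, every skyline tuple lying in the portion of $\mathrm{Reg}(q')$ owned by that subtree (namely $\mathrm{Reg}(q')$ minus the regions of $q'$'s left siblings, which the mutually-exclusive decomposition assigns to $q'$) has already been returned. The base case follows from the underflow argument above, and the inductive step combines the covering lemma (the children partition the owned region up to the single tuple dominated by the tuple returned at $q'$, which is itself returned) with the induction hypothesis applied to each child subtree, each of which is completed before the traversal advances. Because every skyline tuple belongs to exactly one such owned region and the preorder traversal completes each left subtree before moving right, the excluded query $q'$ in the pruning case has already been fully processed, so by the invariant $t$ was discovered---a contradiction. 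The heart of the proof, and where I expect the real work to lie, is making this region bookkeeping precise: reconciling the preorder order in which queries are \emph{issued} with the order used to \emph{define} the exclusions in $R(\cdot)$, and verifying that the mutually-exclusive regions genuinely partition the space covered by the SQ-DB-SKY tree so that no skyline tuple can fall through the cracks.
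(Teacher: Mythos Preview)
Your proposal is correct and follows the same root-to-leaf, proof-by-contradiction strategy as the paper. In fact you are considerably more careful than the paper's own proof, which is only a few sentences: it simply invokes the mutual exclusivity of the $R(\cdot)$-mapped branches to assert that $t$ matches \emph{exactly} one branch at each node and must therefore be returned at the leaf, without explicitly working out the invariant for the $R(q)$-pruning case or the modified child-generation rule that you (rightly) flag as the places needing attention.
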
 

\begin{proof}
The proof can be constructed in analogy to that of Theorem~\ref{thm:sq-db-completeness}. 
The only difference is that, unlike in the proof for SQ-DB-SKY where $t$ might match more than one of the $m$ branches of a node, here $t$ must match {\em exactly} one of the $m$ branches, simply because these $m$ branches are mutually exclusive by design in RQ-DB-SKY.  Despite of this difference, the logic of the proof stays exactly the same: there must be exactly one branch of the root satisfying $t$ because otherwise $t$ would be dominated by the tuple returned by the root.  Recursively, we can construct a path from the root to a leaf node in the tree,
such that $t$ satisfies each and every node on the path.  Since every leaf node of the tree is a valid or underflowing query, this means that the leaf node must return $t$, contradicting the assumption that $t$ is not discovered.
\end{proof}

\noindent
Once again, let us consider the dummy example provided in Figure\ref{table:example}, and its corresponding RQ-DB-SKY tree in Figure~\ref{fig:rqexample}. One can see that applying $R(q_4)$= WHERE $A_2\geq 3$ AND $A_3<7$, instead of $q_4$, causes that each skyline tuple appears in exactly one of the branches.

\begin{figure}
	\center
	\includegraphics[width=0.35\textwidth]{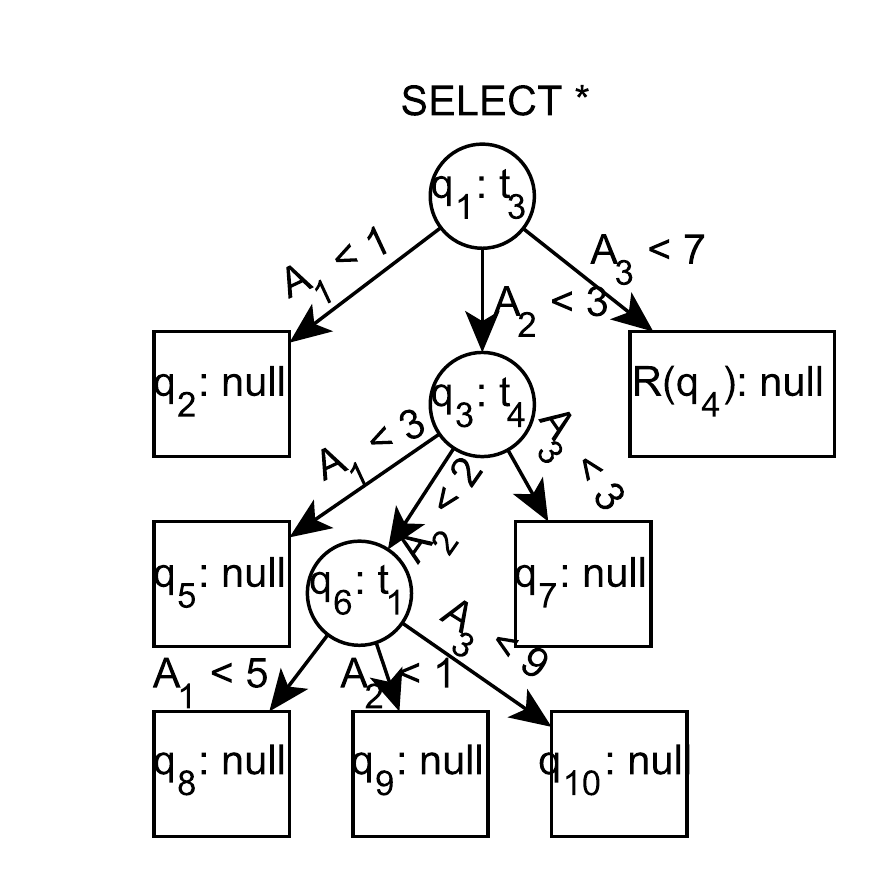}
	\caption{\small{RQ-DB-SKY example tree}}
    \label{fig:rqexample}
\end{figure}

\subsection{Query-Cost Analysis}

The key to the query-cost analysis of RQ-DB-SKY is to count the number of internal, i.e., interior, nodes of the tree.  There are two important observations: First, the SQ-query $q$ of a interior node must match at least one skyline tuple, as otherwise it would have to return empty which makes the node a leaf. Second, if a interior node is not the first (according to preorder) which returns the skyline tuple, then the node's RQ-query (i.e., $R(q)$) must return a unique tuple in the database that does not match any node accessed before it, because otherwise the node would return empty and become a leaf. With these two observations, an upper bound on the number of internal nodes is $\min(|S|^{m+1}, n)$. As a result, the total query cost of RQ-DB-SKY is $O(m \cdot \min(|S|^{m+1}, n))$.

\begin{figure}
            \centering
        \begin{subfigure}{0.22\textwidth}
                \includegraphics[width=\textwidth]{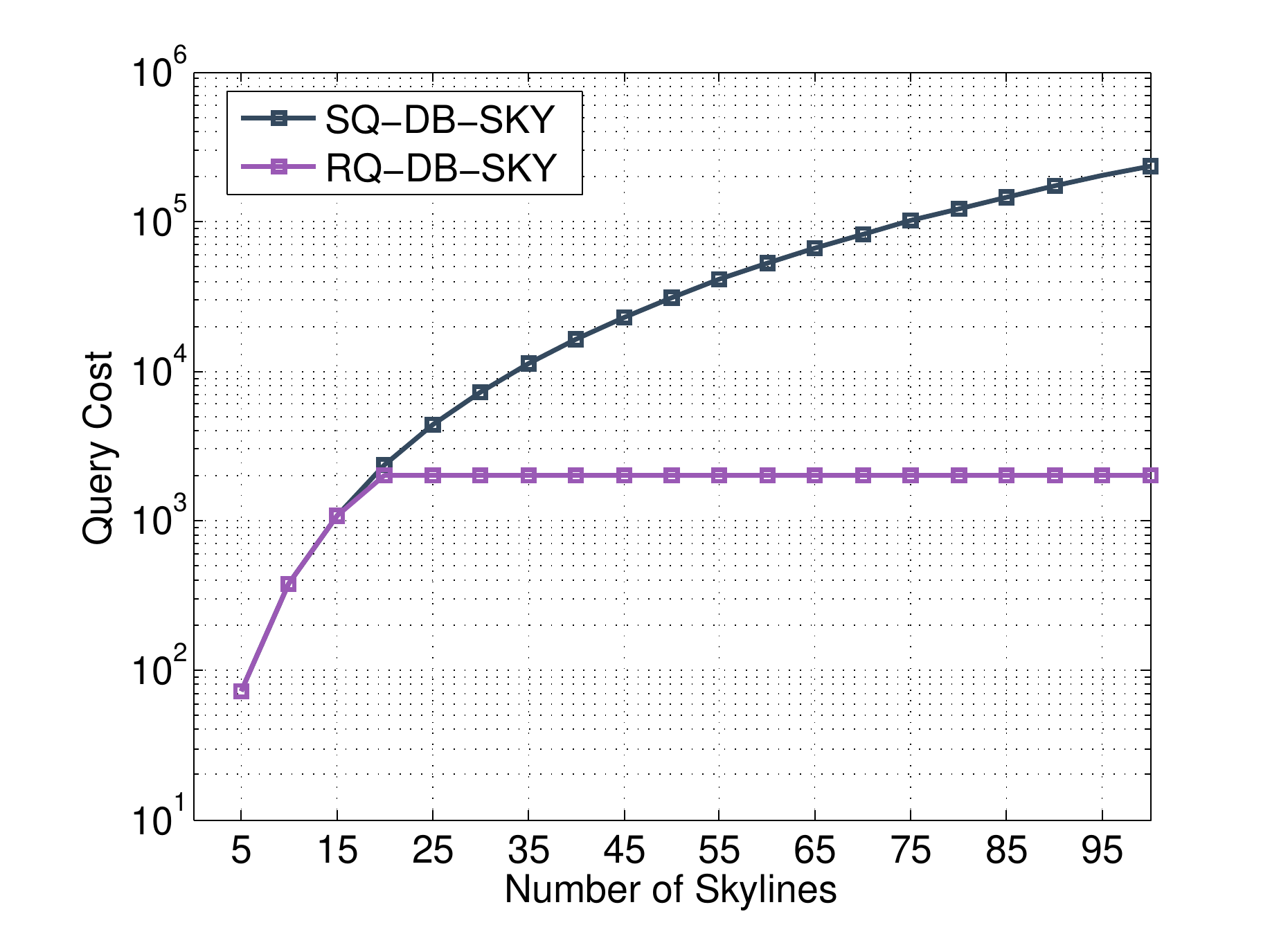}
                \caption{4D} 
        \end{subfigure}\vspace{-1mm}
        ~ 
        \begin{subfigure}{0.22\textwidth}
                \includegraphics[width=\textwidth]{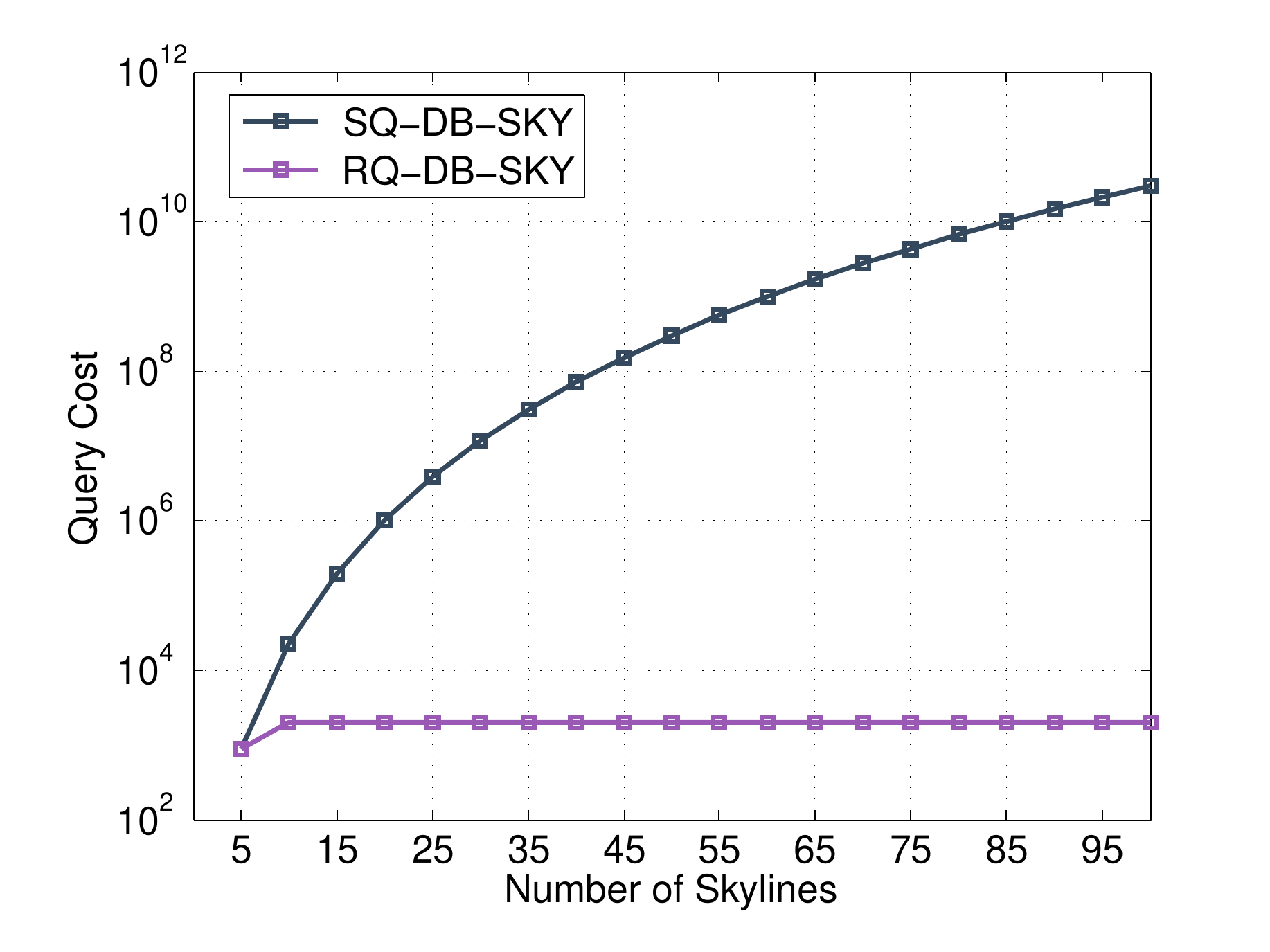}
                \caption{8D} 
        \end{subfigure}
        \caption{simulation results for RQ-DB-SKY, in comparison with SQ-DB-SKY} \label{fig:avgcost2}
\end{figure}

One might wonder if, for RQ-DB-SKY, we can derive a similar result to the average-case analysis of SQ-DB-SKY which is oblivious to the data distribution.
Unfortunately, the query cost of RQ-DB-SKY is data-dependent. The reason is simple: the query cost of RQ-DB-SKY is essentially determined by how many non-skyline tuples match and are returned by the RQ-queries $R(q)$. This number, however, depends on the data distribution: e.g., if all non-skyline tuples are dominated by the skyline tuple returned by SELECT *, then the query cost of RQ-DB-SKY can be extremely small ($\leq m \cdot |S|$). Meanwhile, if very few non-skyline tuples are dominated by skyline tuples returned from nodes at the top of the tree, then RQ-DB-SKY requires many more queries.

Because of the data-dependent nature of RQ-DB-SKY's query cost, to demonstrate the power of its early-termination idea, we resort to the numeric simulations conducted in Section~\ref{sec:sqdb}. Figure~\ref{fig:avgcost2} depicts how the query costs of SQ- and RQ-DB-SKY change with the percentage of tuples on the skyline (when the database contains 2000 tuples each with 2 Boolean i.i.d.~uniform-distribution attributes). Note that we control the percentage of skyline tuples by adjusting the correlation between the two attributes, where positive correlation leads to fewer skyline tuples. Interestingly, one can observe from the figure that while the performance of RQ- and SQ- do not differ much when $|S|$ is small, RQ- has a much smaller query cost when $|S|$ is large - consistent with the theoretical analysis.

\section{Skyline Discovery for PQ-DB}
\label{sec:pqdb}

We now turn our attention point-query PQ-predicates. We first discuss the 2D case (i.e., a database with two attributes) and present an instance-optimal solution PQ-2D-SKY. Then, after pointing out the key differences between 2D and higher dimensional cases, we present Algorithm PQ-DB-SKY, which discovers all skyline tuples from a higher dimensional database by calling (a variation of) PQ-2D-SKY as a subroutine.

\subsection{2D Case} \label{sec:2d}

\noindent{\bf Design of Algorithm PQ-2D-SKY:} We start with SELECT * which is guaranteed to return a skyline tuple, say $(x_1, y_1)$. As shown in Figure~\ref{fig:PQ2D}, we can now prune the 2D search space (for skyline tuples) into two disconnected subspaces, both rectangles. One has diagonals $(0, y_{\max})$ and $(x_1, y_1)$, while the other has $(x_1, y_1)$ and $(x_{\max}, 0)$, where $x_{\max}$ and $y_{\max}$ are the maximum values for $x$ and $y$, respectively. We do not need to explore the rectangle with diagonals $(0, 0)$ and $(x_1, y_1)$ because there is no tuple in it (as otherwise it would dominate $(x_1, y_1)$). We do not need to explore the rectangle with diagonals $(x_1, y_1)$ and $(x_{\max}, y_{\max})$ either because all tuples in it must be dominated by $(x_1, y_1)$.

\begin{figure}
\center
\includegraphics[width=5cm]{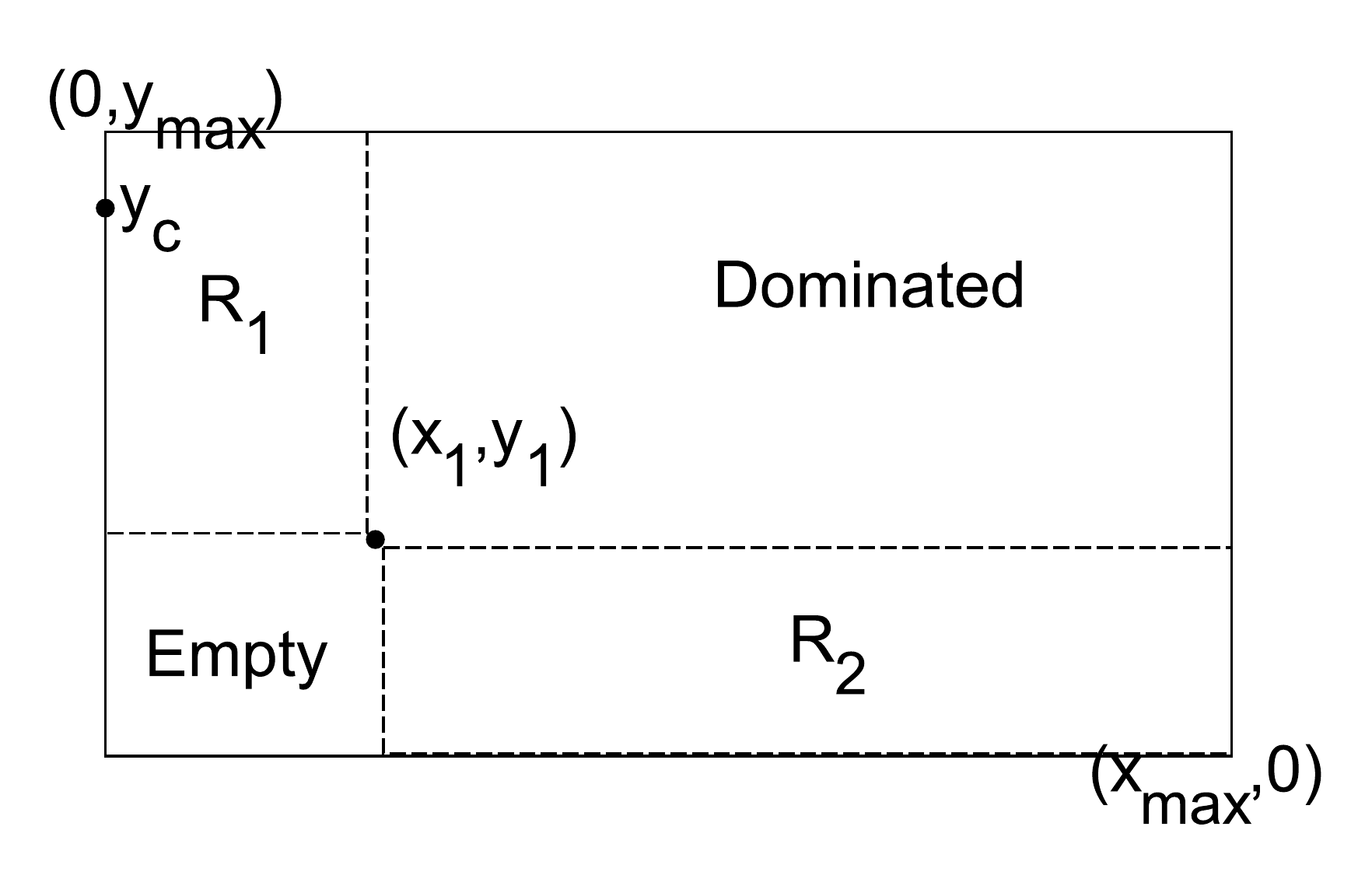}
\caption{Pruning, $R_1$, $R_2$, \& demo of algorithm execution}
\label{fig:PQ2D}
\end{figure}


From this point forward, our goal becomes to discover skyline tuples by issuing 1D queries - i.e., queries of the form of either $x = x_0$ or $y = y_0$. An important observation here is that any 1D query we issue will ``affect'' (precise definition to follow) exactly {\em one} of the two above-described subspaces. For example, if $x_0 > x_1$, query $x = x_0$ affects only $R_2$ in Figure~\ref{fig:PQ2D}:  It either proves part of the rectangle to be empty (when the query returns empty or a tuple with $y > y_1$), or returns a tuple in the second rectangle that dominates all other tuples with $x = x_0$. In either case, Rectangle $R_1$ remains the same and still needs to be explored. As another example, if $y_0 > y_1$, then query $y = y_0$ affects only $R_1$.

This observation actually leads to a simple algorithm that is guaranteed to be optimal in terms of query cost: at any time, pick one of the remaining (rectangle) subspaces to explore. Let the diagonal points of the subspace be $(x_\mathrm{L}, y_\mathrm{T})$ and $(x_\mathrm{R}, y_\mathrm{B})$, where $x_\mathrm{L} \leq x_\mathrm{R}$ and $y_\mathrm{T} \geq y_\mathrm{B}$. If $x_\mathrm{R} - x_\mathrm{L} < y_\mathrm{T} - y_\mathrm{B}$, then we issue query $x = x_\mathrm{L}$. Otherwise, we issue $y = y_\mathrm{B}$.  For example, in Figure~\ref{fig:PQ2D}, if $x_{\max}-x_1 > y_1$, we issue $y=0$.

Note the implications of the query answer on the remaining subspace to search: Consider query $q$: $x = x_\mathrm{L}$ as an example. If $q$ returns empty, then the subspace is shrunk to between $(x_\mathrm{L} + 1, y_\mathrm{T})$ and $(x_\mathrm{R}, y_\mathrm{B})$. Otherwise, if $q$ returns $(x_\mathrm{L}, y_2)$, then the subspace is shrunk to between $(x_\mathrm{L} + 1, y_2)$ and $(x_\mathrm{R}, y_\mathrm{B})$. Either way, the subspace becomes smaller and remains disjoint from other remaining subspace(s).  For example, in Figure~\ref{fig:PQ2D}, if $y=0$ is empty, $R_2$ is shrunk to between $(x_1, y_1)$ and $(x_{max}, 1)$. Otherwise, if it returns $(x_2, 0)$, then the subspace is now between $(x_1, y_1)$ and $(x_2, 1)$.

What we do next is to simply repeat the above process, i.e., pick a subspace, determine whether the width or height is larger, and issue the corresponding query. This continues until no subspace remains. Algorithm~\ref{alg:pq-2D-SKY} depicts the pseudo code for PQ-2D-SKY.

\begin{algorithm}[!htb]
\caption{{\bf PQ-2D-SKY}}
\begin{algorithmic}[1]
\label{alg:pq-2D-SKY}
\STATE $T$ = Top-$k$(SELECT * FROM D); $S = \{T_0\}$
\STATE Partition search space into rectangles $R_1$ and $R_2$ based on $T_0$
\STATE {\bf while} search space is not fully explored
    \STATE \hindent Pick a rectangle and identify point query $q$ to issue
    \STATE \hindent $T$ = Top-$k$($q$); \qquad $S = S \cup T_0$
    \STATE \hindent {\bf if} $T$ contains $k$ tuples, prune search space based on $T_0$ 
\end{algorithmic}
\end{algorithm}

\vspace{1mm}
\noindent{\bf Instance Optimality Proof:} We now prove the instance optimality of PQ-2D-SKY, i.e., for any given database, there is no other algorithm that can use fewer queries to discover all skyline tuples {\em and prove} that all skyline tuples have been discovered.  Note that the latter requirement (i.e., proof of completeness) is important. To see why, consider an algorithm that issues SELECT * and then stops. For a specific database that contains only one skyline tuple, this algorithm indeed finds all skyline tuples extremely efficiently. But it is not a valid solution because it cannot guarantee the completeness of skyline discovery.

We prove the instance optimality of PQ-2D-SKY by contradiction: Suppose there exists an algorithm $\mathcal{A}$ that requires fewer queries. Consider the (rectangle) subspace between $(x_\mathrm{L}, y_\mathrm{T})$ and $(x_\mathrm{R}, y_\mathrm{B})$. If $x_\mathrm{R} - x_\mathrm{L} < y_\mathrm{T} - y_\mathrm{B}$ yet $\mathcal{A}$ does {\em not} issue $x = x_\mathrm{L}$, then the only alternative is to issue queries $y = y_\mathrm{B}$, $y_\mathrm{B} + 1$, $\ldots$, $y_c$, where $y_c$ is the $y$-coordinate value of the tuple returned by $x = x_\mathrm{L}$ or, in the case where $x = x_\mathrm{L}$ returns empty, $y_c = y_\mathrm{T}$.
An example of this is illustrated in Figure~\ref{fig:PQ2D}: Suppose $y_{max}-y_1 > x_1$. If $\mathcal{A}$ does not issue $x = 0$, then it must issue $y = y_1$, $y_2$, $\dots$, $y_c$.
This is because, in order to guarantee the completeness of skyline discovery, one must ``prove'' the emptiness of points $(x_\mathrm{L}, y_\mathrm{B})$, $(x_\mathrm{L}, y_\mathrm{B} + 1)$, $\ldots$, $(x_\mathrm{L}, y_c - 1)$, (resp.~$(x_0,y_1),\dots , (x_0,y_c-1)$ in Figure~\ref{fig:PQ2D}) while retrieving tuple $(x_\mathrm{L}, y_c)$ (resp.~$(x_0,y_c)$ in Figure~\ref{fig:PQ2D}). Given that $x = x_\mathrm{L}$ is not issued, the only feasible solution is to issue the above-described $y = y_i$ queries.

Yet this contradicts the optimality of Algorithm $\mathcal{A}$. To understand why, consider two cases respectively: First is when $x = x_\mathrm{L}$ returns empty. In this case, $\mathcal{A}$ calls for $y_\mathrm{T} - y_\mathrm{B} + 1$ queries to be issued, while PQ-2D-SKY issues at most $x_\mathrm{R} - x_\mathrm{L}$ queries. Since $x_\mathrm{R} - x_\mathrm{L} < y_\mathrm{T} - y_\mathrm{B}$, $\mathcal{A}$ is actually worse. Now consider the second case, where $x = x_\mathrm{L}$ does return a tuple $(x_\mathrm{L}, y_c)$. In this case, $\mathcal{A}$ calls for $c$ queries to be issued. We also require at most $c$ queries, as $y = y_c$ is no longer needed given the answer to $x = x_\mathrm{L}$. This again contradicts the superiority of $\mathcal{A}$.

\vspace{1mm}
\noindent{\bf Query Cost Analysis:} Having established the instance optimality of PQ-2D-SKY, we now analyze exactly how many queries it needs to issue.
Let $A_1$ and $A_2$ be the two attributes and $t_1, \ldots, t_{|S|}$ be the skyline tuples in the database. Without loss of generality, suppose $t_i$ is sorted in the increasing order of $A_1$, i.e., $t_i[A_1] \leq t_{i+1}[A_1]$. Note that, since $t_i$ are all skyline tuples, correspondingly there must be $t_i[A_2] \geq t_{i+1}[A_2]$. Denote as $t_0$ and $t_{|S|+1}$ the two diagonal points of the domain, i.e., $t_0 = \langle 0, \max(Dom(A_2)) \rangle$ and $t_{|S|+1} = \langle \max(Dom(A_1)), 0\rangle$. One can see from the design of PQ-2D-SKY that its query cost is simply
\begin{equation}\label{eq-10}
C = \sum_{i=0}^{|S|} \min (t_{i+1}[A_1] - t_i[A_1], t_i[A_2] - t_{i+1}[A_2]).
\end{equation}

Immediately, following Equation~\ref{eq-10}, a few upper bounds on $C$ are: e.g., $C \leq t_1[A_2]$, $C \leq t_{|S|}[A_1]$, and $C$ $\leq$ $\min_{i \in [1, |S|]}$ $(t_i[A_1] + t_i[A_2])$. These upper bounds indicate a likely small query cost in practice. To understand why, recall that most web interfaces only present a ranking attribute as PQ when it has a small domain. In addition, it is highly unlikely for such an attribute to have empty domain values - i.e., $v \in Dom(A_i)$ that is not taken by any tuple in the database - because otherwise users of the PQ interface would be frustrated by the empty result returned after selecting $A_i = v$. When every value in $Dom(A_1)$ and $Dom(A_2)$ is occupied, unless the number of skyline tuples is very large, $t_i[A_j]$ is likely small for $t_i$ to be on the skyline, leading to a small query cost in practice. We verify this finding through experimental results in Section~\ref{sec:experiments}.

\subsection{Higher-D Case: Negative Results}

Unfortunately, the optimal 2D skyline discovery algorithm cannot be directly extended to solve the higher-dimensional cases. This subsection describes two main obstacles which explains why: The first proves that there does not exist any deterministic algorithm that can be instance optimal for higher-D databases like what PQ-2D-SKY achieves for 2D. The second obstacle shows that even if we are willing to abandon optimality and consider a greedy algorithm that deals with each 2D subspace at a time for higher-D databases, PQ-2D-SKY still cannot be directly used.

Fortunately, the second negative result also sheds positive lights towards solving the higher-D problem. As we shall show in the next subsection, it is indeed possible to revise PQ-2D-SKY and retain instance optimality for any 2D subspace of a higher-D database - a result that eventually leads to our design of PQ-DB-SKY for higher-D databases.

\vspace{2mm}
\noindent{\bf Non-existence of Optimal Higher Dimensional Skyline Discovery Algorithms:} The first obstacle brought by higher-D skyline discovery that makes it impossible for any deterministic algorithm to achieve instance optimality as in the 2D case discussed above. Here we shall first describe the obstacle, and then discuss why it eliminates the possibility of having an optimal (deterministic) skyline discovery algorithm.

The obstacle here is the loss of a property which we refer to as ``guaranteed single skyline return'' - i.e., every 1D query (which is the focus of consideration in 2D skyline discovery) is guaranteed to return the (at most one) skyline tuple covered by the query. 2D and higher dimensional queries, on the other hand, may {\em not} reveal all skyline tuples. Specifically, even when $k > 1$, some of the returned tuples may not be skyline tuples even when there are skyline tuples matching the query that are not returned.

\begin{figure}
\center
\includegraphics[width=5cm]{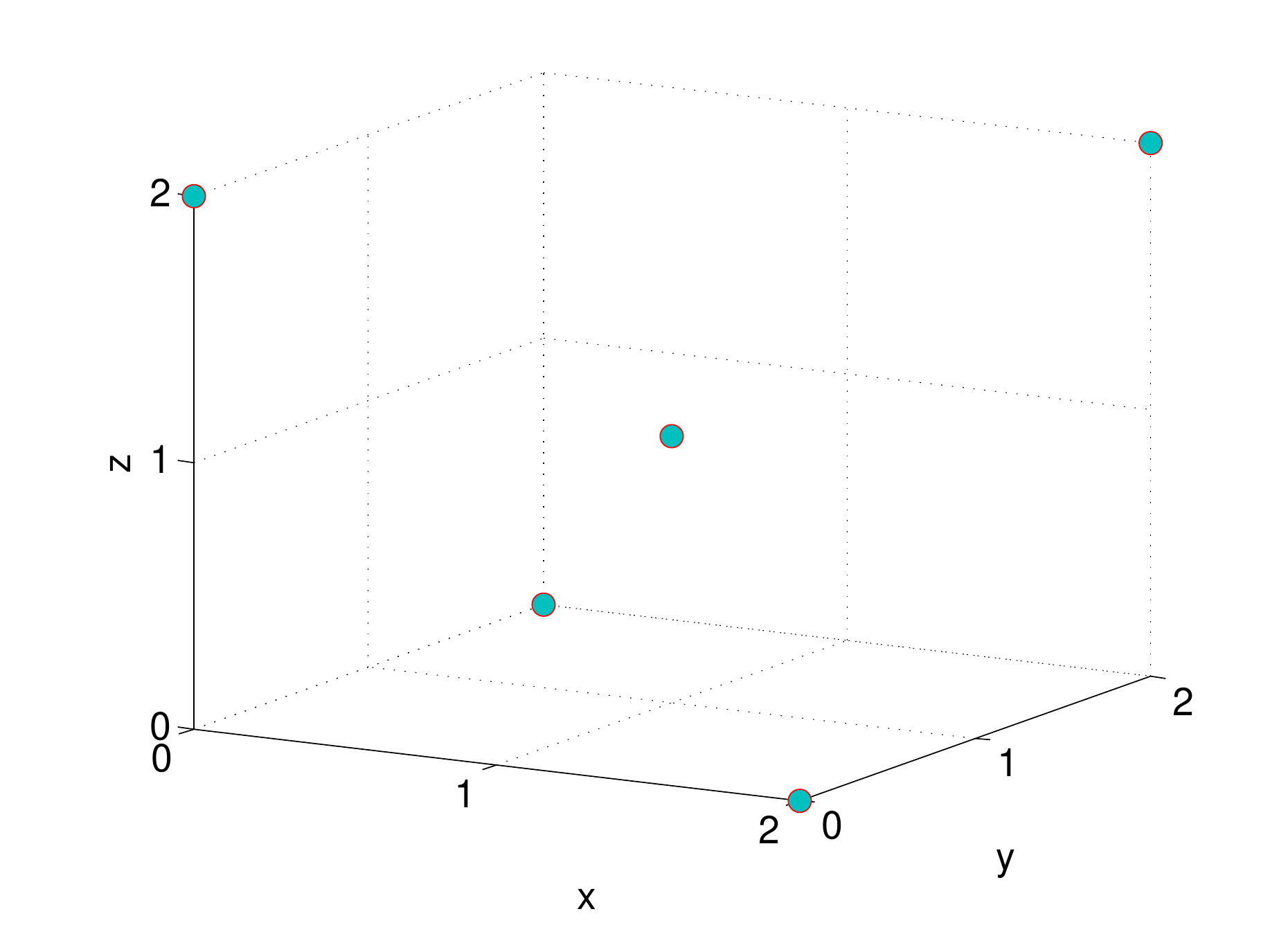}
\caption{Illustration of negative-proof construction}
\label{fig:NegProof}
\end{figure}

This property makes it no longer possible to guarantee the optimality of skyline discovery without knowledge of the actual data distribution. To understand why, consider a simple example depicted in Figure~\ref{fig:NegProof}, where the database features a top-2 interface (i.e., $k = 2$) and contains the following five tuples (in addition to potentially many others): (1, 1, 1), (2, 2, 2), (2, 0, 0), (0, 2, 0), (0, 0, 2). Suppose that the SELECT * query returns (1, 1, 1) and (2, 2, 2); while SELECT * FROM D WHERE $z = 0$ returns (2, 0, 0) and (0, 2, 0). Further assume that neither query SELECT * FROM D WHERE $x = 0$ nor WHERE $y = 0$ returns more than one skyline tuples - e.g., say the first one returns (0, 2, 0) and (0, 3, 0) and the latter returns (2, 0, 0) and (3, 0, 0).

The first observation we make here is that the optimal query plan consists of only 3 queries (no matter what the other tuples are):

SELECT * FROM D\\
\indent SELECT * FROM D WHERE $z = 0$\\
\indent SELECT * FROM D WHERE $x = 0$ AND $y = 0$

\noindent One can see that, given the above setup, these three queries are guaranteed to return all five aforementioned tuples, which by themselves prove that there are only four skyline tuples in the database: (1, 1, 1), (2, 0, 0), (0, 2, 0), and (0, 0, 2), because any other possible value combination must fall into one of the two categories: Either it is dominated by at least one of the four tuples, or it must be one of (0, 0, 0), (1, 0, 0), (0, 1, 0), (0, 0, 1). Nonetheless, these four value combinations have been proven non-existent in the database as otherwise it must be returned by the 3 issued queries.

The next critical observation is that any optimal query plan {\em must} contain SELECT * FROM D WHERE $z = 0$. The reason here is simple: given the four skyline tuples and the above assumptions, the only query that returns more than one skyline tuple is SELECT * FROM D WHERE $z = 0$. In other words, if this query is not included in a query plan, then the query plan must contain at least four queries - i.e., it is {\em not} an optimal plan. 

It is exactly this observation which eliminates the existence of a deterministic yet instance-optimal skyline discovery algorithm. To understand why, consider a slight change to the query-answer setup when all tuples in the database remain exactly the same: now query WHERE $z = 0$ returns $(0, 2, 0)$ and $(0, 3, 0)$, while WHERE $x = 0$ returns $(0, 2, 0)$ and $(0, 0, 2)$. In analogy to the above analysis, now the optimal query plan {\em must} contain SELECT * FROM D WHERE $x = 0$ (along with, say, SELECT * and SELECT * FROM D WHERE $y = 0$ AND $z = 0$, to make a 3-query optimal plan) and must {\em not} contain SELECT * FROM D WHERE $z = 0$.

The problem with this new setup, however, is that no deterministic algorithm can achieve optimality in both this setup and the original one, because it simply cannot distinguish between the two cases without committing to a query that is part of the optimal plan for one but {\em cannot} be part of the optimal plan for the other. For example, the SELECT * query returns the exact same answer in the two cases - making it impossible to make the distinction. On the other hand, while SELECT * FROM D WHERE $z = 0$ does distinguish between the two cases, the very issuance of this query already means the loss of instance optimality, as it cannot be part of an optimal query plan for the second setup. More formally, the only queries that enable the distinction are those that have different query answers between the two cases - i.e., query SELECT * FROM D WHERE $z = 0$ and query WHERE $x = 0$ - yet neither can appear in both optimal query plans.

One can observe from the above discussions what makes instance-optimal skyline discovery impossible over a higher-D database: Unlike in the 2D case where every 1D query always returns the one and only skyline tuple matching the query (or returns empty) no matter what the ranking function actually is, in higher-D cases whether and how many skyline tuples are ``hidden'' from the answer to a matching 2D query depends on the ranking function. Since the algorithm has no prior knowledge of the ranking function (which might even differ for different queries), it has to rely on the returned query answers to determine which queries to issue next. This eliminates the possibility of an instance-optimal algorithm because, by the time the algorithm can learn enough information about the ranking function and the underlying database, it may have already issued unnecessary queries, making the algorithm suboptimal.

\vspace{2mm}
\noindent{\bf No Direct Extension to Optimal 2D Subspace Discovery:} Since the above obstacle eliminates the possibility of an instance-optimal $h$-D skyline discovery algorithm, we now turn our attention to a simpler, greedy, version of the solution: how about we partition the higher-dimensional space into mutually exclusive 2D subspaces, and then run the instance-optimal 2D skyline discovery algorithm over each subspace?

Unfortunately, even in this case, the 2D algorithm cannot be directly applied without losing its optimality. To understand why, recall that in the 2D case, there is a clean ``separation'' of effect for a query answer: no matter what the query answer is, it shrinks one and exactly one rectangle subspace (to another rectangle). This enables us to devise a divide-and-conquer approach which focuses on one (rectangle) subspace at a time.

This clean property, however, is lost once the dimensionality increases to 3 (and above). For example, consider the search space in the 3D case after pruning based on the answer of SELECT * FROM D, say point $(x, y, z)$. One can see that the pruning carves out two (small) cubes from the original space - one with diagonal points being $(0, 0, 0)$ and $(x, y, z)$; while the other with diagonal points being $(x, y, z)$ and $(x_{\max}, y_{\max}, z_{\max})$. The pruned result, however, is still one connected space with a complex shape as depicted in Figure~\ref{fig:IOE1}, which may become more and more complex once pruning is done with additional query answers.

\begin{figure}
\center
\includegraphics[width=5cm]{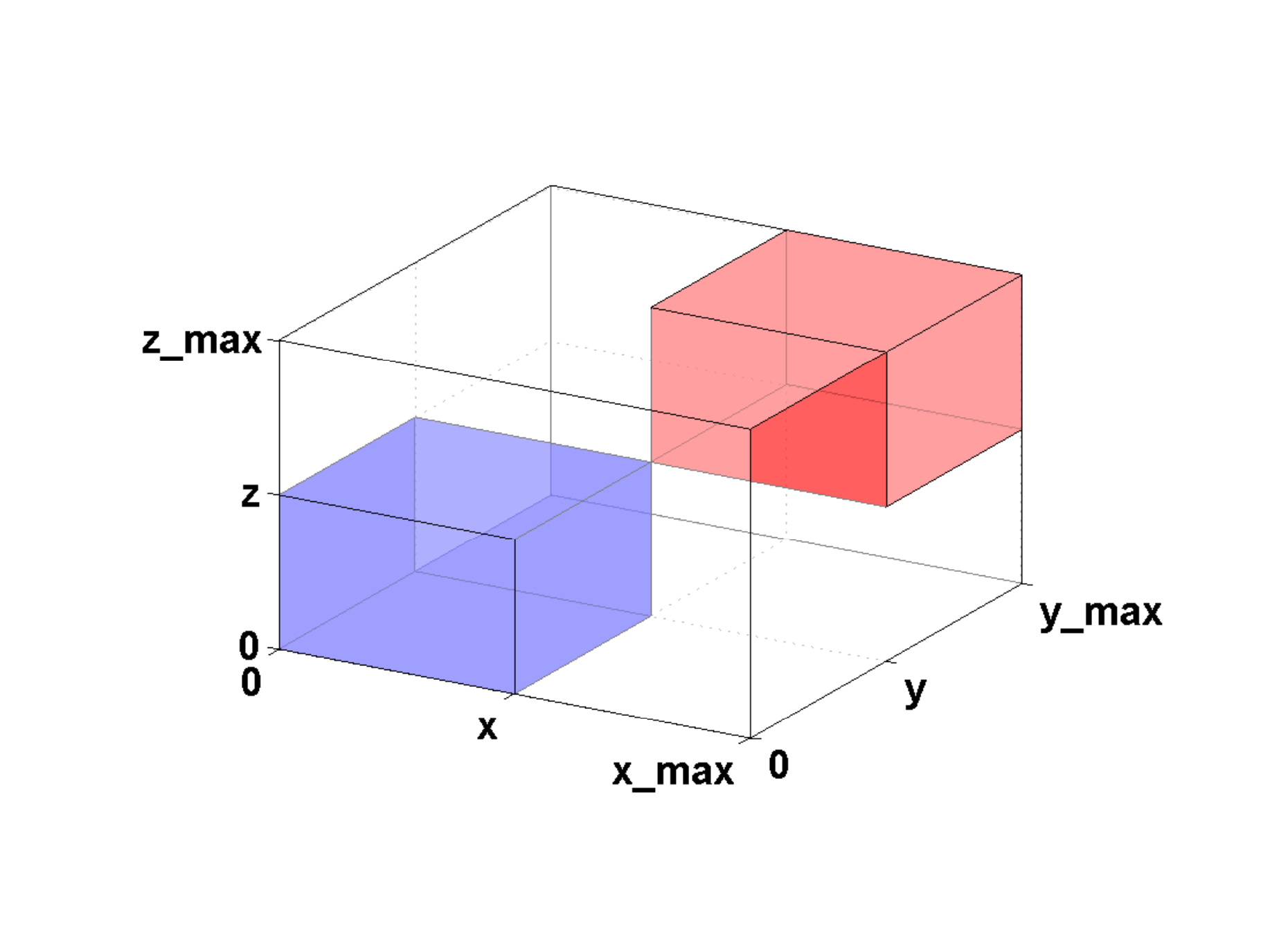}
\caption{Illustration of example}
\label{fig:IOE1}
\end{figure}

After projecting the pruned subspace to each 2D subspace, one can see that the pruned space is now of the shape of a rectangle ``minus'' a number of smaller rectangles. For example, consider a simple 3D case where the domains of $x$, $y$ and $z$ are [0, 6], [0, 9], and [0, 1], respectively. Suppose that the SELECT * query returns tuple (4, 6, 1), while SELECT * FROM D WHERE $z = 0$ returns tuple (0, 9, 0). We now consider the problem of skyline discovery the 2D subspace defined by query WHERE $z = 0$.

Figure~\ref{fig:IOE2} depicts the shape of this subspace after pruning based on the SELECT * query. One can see that three rectangles are excluded from the original space of $[0, 6] \times [0, 9]$. One is $x = 0$ - it is removed because the return of (0, 9) guarantees no other tuple with $z = 0$ could have $x = 0$. Another is $y = 9$ - it is excluded because any tuple within must be dominated by (0, 9). The final excluded rectangle has diagonals (0, 0) and (4, 6). It is  excluded because, according to the SELECT * query answer, we are assured that no tuple exists with $x \leq 4$, $y \leq 6$, and $z = 0$, as otherwise this tuple must have been returned ahead of (4, 6, 1).

\begin{figure}
\center
\includegraphics[width=5cm]{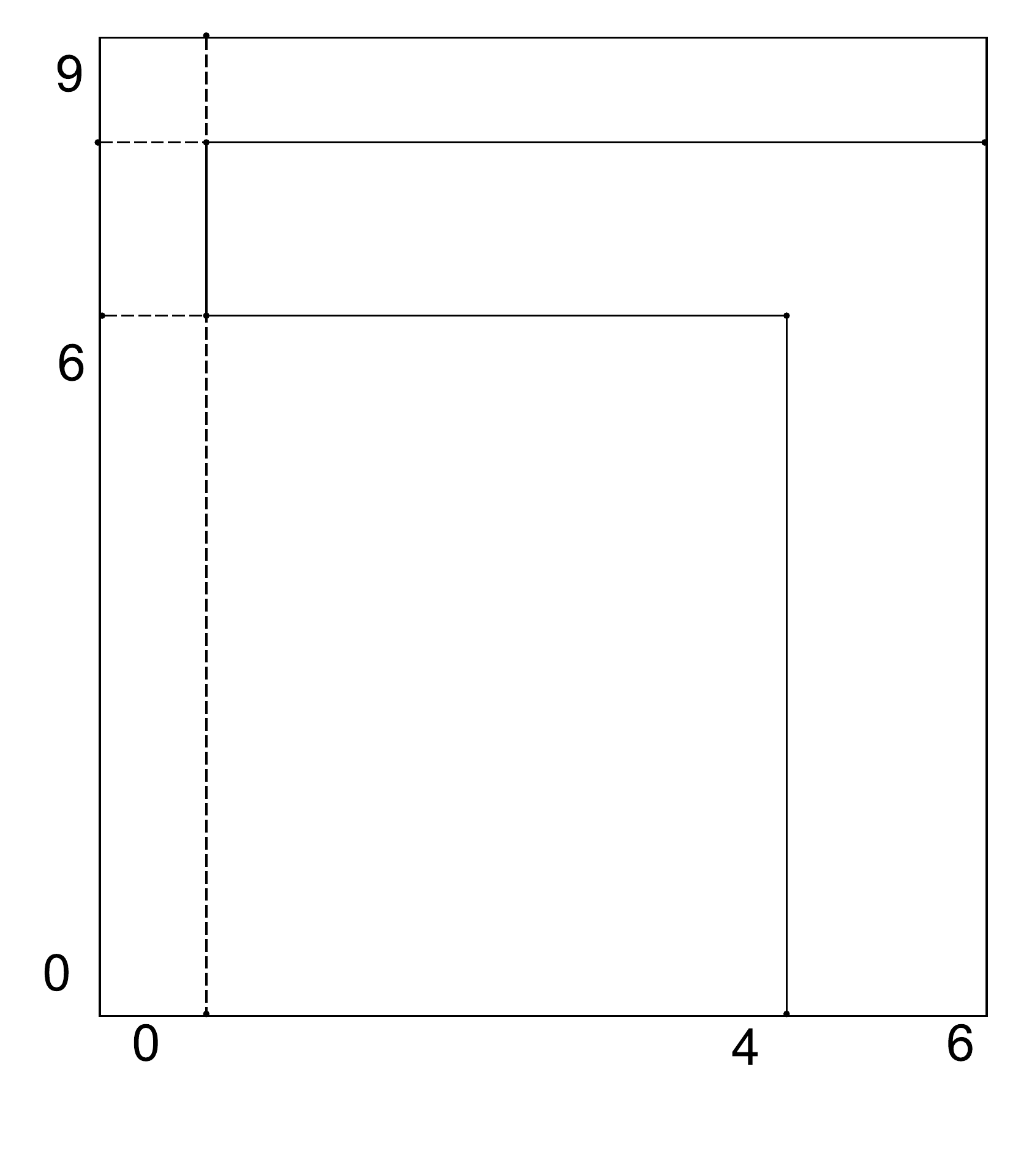}
\caption{Illustration of example}
\label{fig:IOE2}
\end{figure}

Note, however, a significant difference with the original 2D case: the rectangle with diagonals (4, 6) and (6, 9) is {\em not} removed. Unlike in the original case where the other diagonal rectangle can also be removed because all points in it are dominated by the returned tuple, in this new case the pruning of rectangle (0,0)-(4,6) is not based on a tuple with $z = 0$. As such, it is still possible for a tuple in rectangle (4,6)-(6,9) to be a skyline tuple. In the following discussions, one can see that it is exactly this change which introduces additional complexity to the design of 2D-subspace skyline discovery in higher-dimensional databases.

We now show that the 2D algorithm loses its optimality when being applied to this pruned 2D space. Note that, according to the algorithm, we shall start with issuing $x = 1, 2, \ldots$ because the domain size of $x$ (i.e., 6 after pruning) is smaller than that of $y$ (i.e., 9 after pruning). Consider the case when there is only one more tuple (in addition to (0, 9) returned by the SELECT * query) in this subspace: (5, 0). One can see that this algorithm will issue 5 queries - i.e., $x = 1, \ldots, 5$, after which it stops execution because all the subspace can then be pruned by (5, 0). Nonetheless, the optimal query plan for this subspace consists of only 3 queries - e.g., WHERE $x = 5$, WHERE $y = 7$, and WHERE $y = 8$. This shows that the original 2D algorithm is no longer optimal for this subspace skyline discovery task.

Note that this problem cannot be simply solved by partitioning the subspace into rectangles before applying the 2D algorithm over each. This can be seen from another simple construction: Consider a change to the above database which makes (2, 2, 1) the tuple returned by the SELECT * FROM D query. The pruned subspace in this case is depicted in Figure~\ref{fig:IOE3}. One can see that, if we partition the subspace into the three rectangles marked in the figure, then we would have issued queries WHERE $y = 0$ and WHERE $y = 1$ for the bottom rectangle - yet these two queries cannot be in the optimal plan when there is no tuple other than (0, 9, 0) on the plane.

\begin{figure}
\center
\includegraphics[width=5cm]{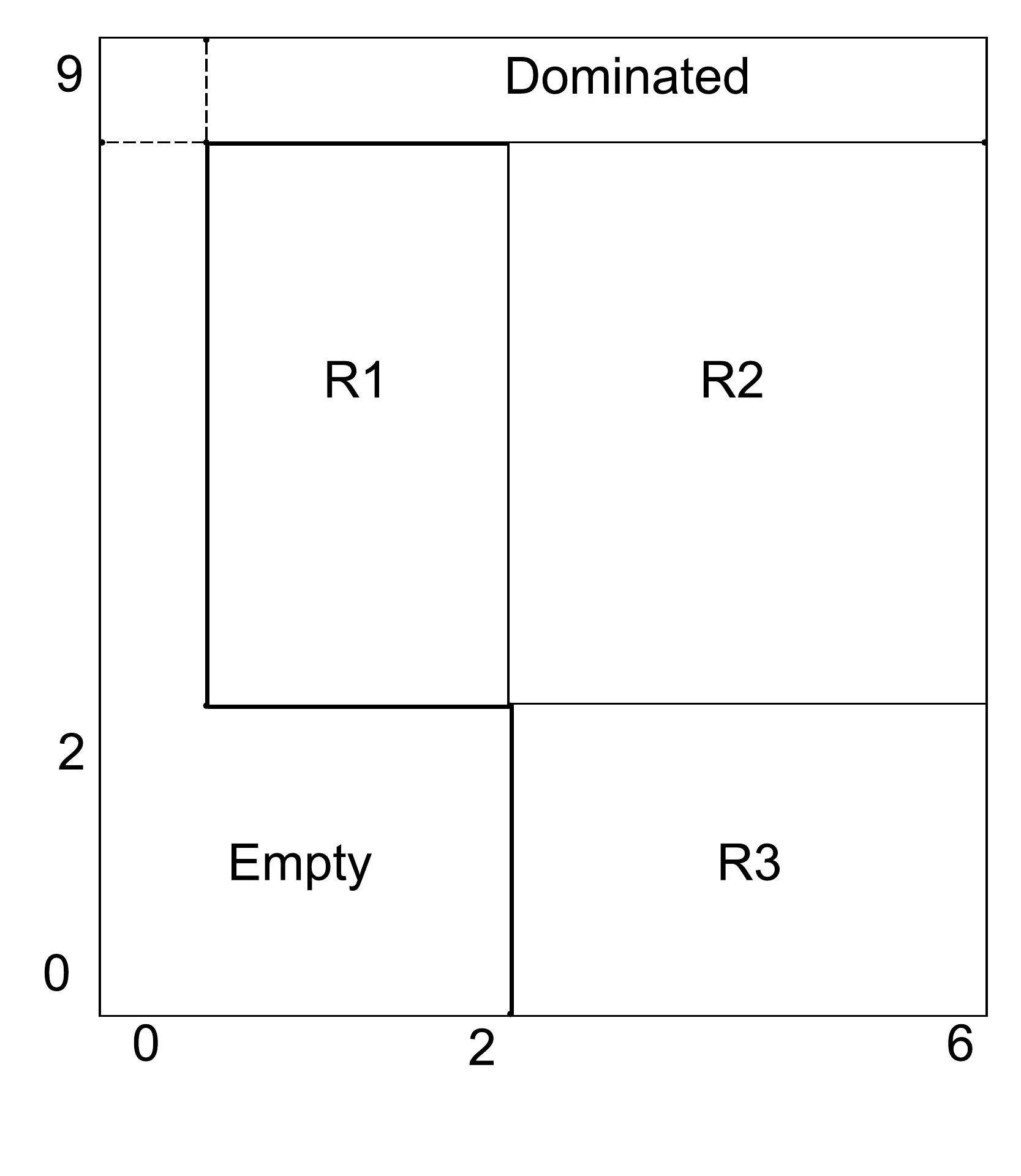}
\caption{Illustration of example}
\label{fig:IOE3}
\end{figure}

\subsection{Algorithm PQ-DB-SKY}




\subsubsection{Optimal 2D Subspace Skyline Discovery}

To develop an instance-optimal algorithm for discovering all skyline in a 2D subspace, we start by considering the possible shape of such a subspace. As discussed above, the subspace may be pruned by answers to queries that contain the subspace. Without loss of generality, consider a 2D subspace $\mathcal{S}$ ``spanning'' on attributes $A_1$ and $A_2$. Let $\mathcal{S}[A_i]$ ($i > 2$) be the value of the subspace on any other attribute. If a query containing the subspace returns a tuple $t$ such as $\forall i > 2$, $t[A_i] \geq \mathcal{S}[A_i]$, then we can prune from $\mathcal{S}$ the rectangle with diagonals (0, 0) and $(t[A_1], t[A_2])$, because any tuple in this rectangle would dominate $t$, contradicting the fact that $t$ is returned by a query containing $\mathcal{S}$. Figure~\ref{fig:IO2DSUBS1} depicts such a scenario.

Besides such pruning, another possible way to prune $\mathcal{S}$ is to exclude from it rectangles that we are no longer interested in. For example, if we have retrieved a tuple $t$ such that $\forall i > 2$, $t[A_i] \leq \mathcal{S}[A_i]$, then we are no longer interested in the rectangle corresponding to $A_1 \geq t[A_1]$ and $A_2 \geq t[A_2]$, because any other in the rectangle would be dominated by $t$ and therefore cannot be a skyline tuple. One can see that the end result of pruning is a shape like what is depicted in Figure~\ref{fig:IO2DSUBS1}.

Given the pruned shape, the key idea of our PQ-2DSUB-SKY algorithm is depicted in Figure~\ref{fig:IO2DSUB} and can be stated as follows. First, we remove all rows and columns that have already been completely pruned. Then, we consider a series of ``block-diagonal'' rectangles as depicted in Figure~\ref{fig:IO2DSUBS2}. Formally, if $t_1:(x_1, y_1)$ and $t_2:(x_2, y_2)$, as shown in the figure, are adjacent ``lower-bound'' skyline points in the subspace, then we add to the series a rectangle with diagonals $(x_1, y_1)$ and $(x_2, y_2)$.

\begin{figure}
\center
	\begin{subfigure}[b]{0.3\textwidth}
        \includegraphics[width=\textwidth]{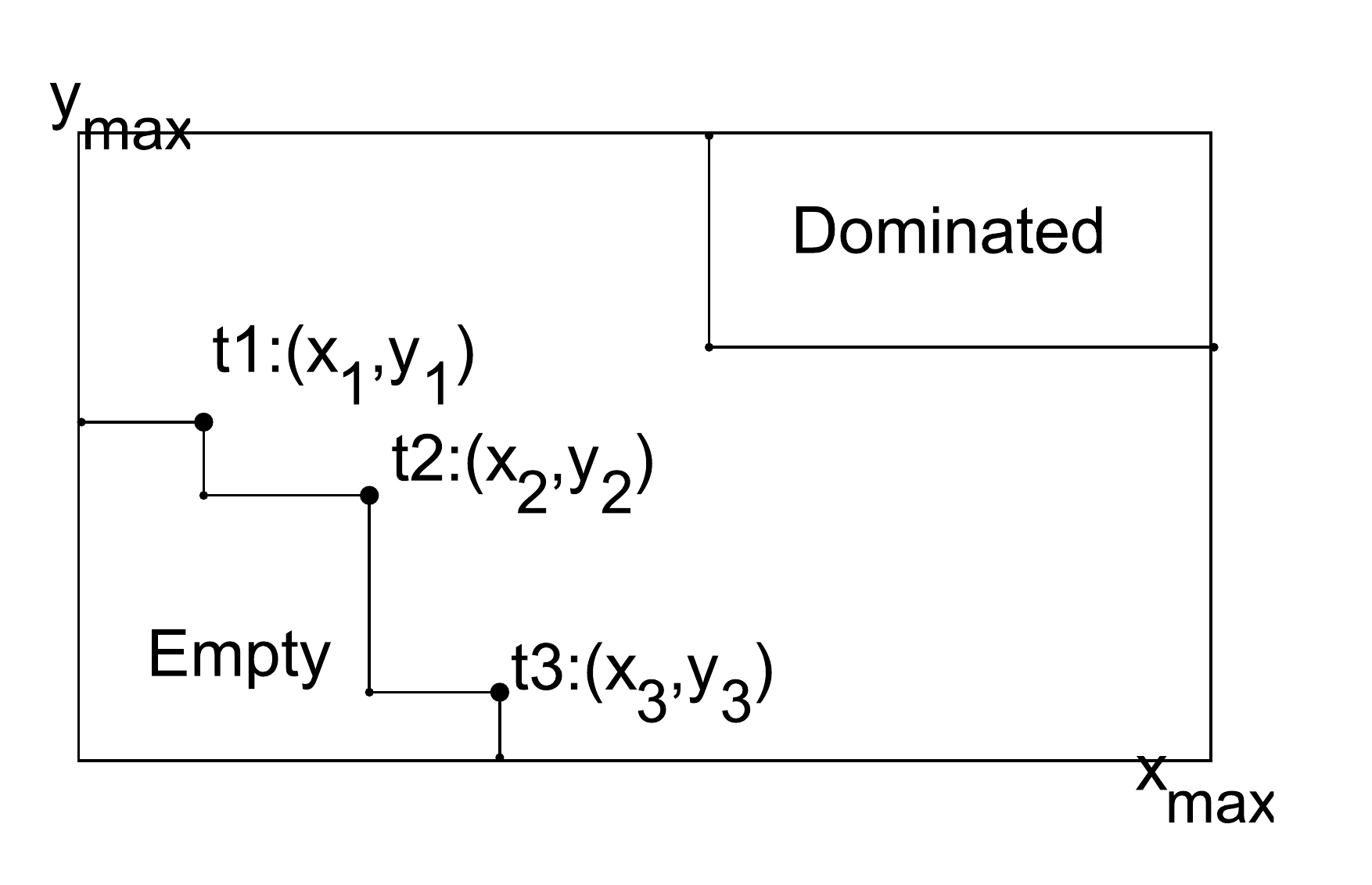}
        \caption{Example of empty and dominated area}
        \label{fig:IO2DSUBS1}
    \end{subfigure}
    
    \begin{subfigure}[b]{0.3\textwidth}
        \includegraphics[width=\textwidth]{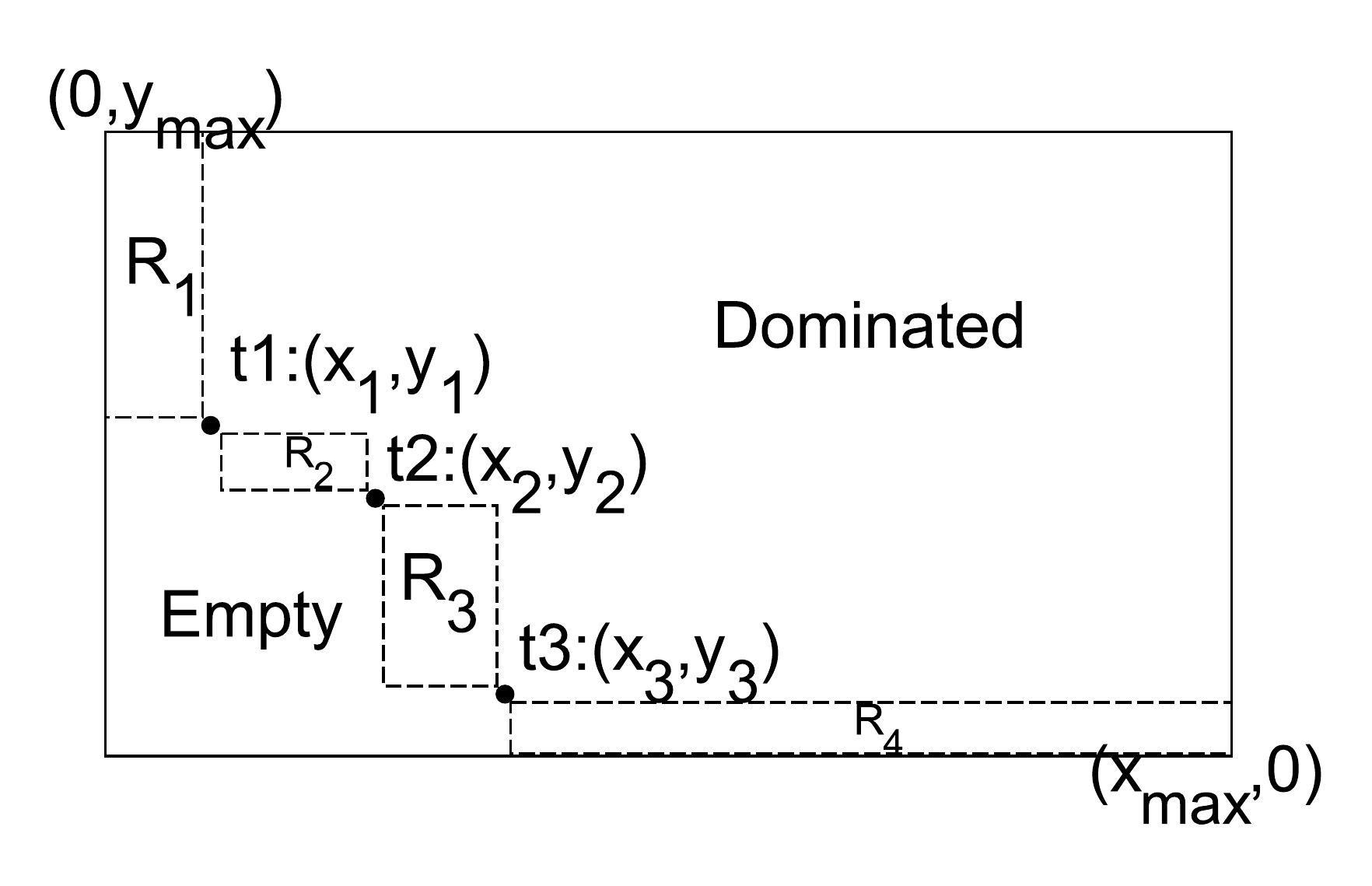}
        \caption{``Block-diagonal'' rectangles}
        \label{fig:IO2DSUBS2}
    \end{subfigure}
\caption{Illustration of idea for PQ-2DSUB-SKY}
\label{fig:IO2DSUB}
\end{figure}

%
There are two critical observations here that lead to the instance optimality of this idea: First, no matter which tuples there are in the database, these rectangles must be covered for a complete discovery of all skyline tuples. For example, consider a point outside these rectangles, say $(x_2 + 1, y_1 + 1)$. If $(x_2, y_1)$ turns out to be occupied by a tuple, then $(x_2 + 1, y_1 + 1)$ can be pruned without any query covering the rectangle containing it (say the one with diagonals $(x_2 + 1, y_1 + 1)$ and $(x_{\max}, y_{\max})$). On the other hand, any point inside one of the series of rectangles cannot be pruned unless there has been at least one query ``hitting'' the rectangle containing the point.

The second critical observation is that {\em at least} one of the series of rectangle must ``agree'' with the overall subspace (i.e., the one after removing all completely pruned rows and columns) on which dimension to follow for discovery (as dictated by the skyline discovery rule in Algorithm PQ-2D-SKY). The underlying reason is straightforward: let the width and height of each rectange in the series, say $R_i$, be $w_1$ and $h_1$, respectively. Note that the overall width and height of the subspace satisfy
\begin{align}
w &= w_1 + \cdots + w_s,\\
h &= h_1 + \cdots + h_s,
\end{align}
where $s$ is the number of rectangles in the series. One can see that, clearly, if $w < h$, there must be at least one rectangle in the series with $w_i < h_i$ - i.e., the rectangle ``agrees'' with the overall subspace to discover skylines along the $y$-dimension (by issuing queries of the form SELECT * FROM D WHERE $x = v$).

Based on these two observations, we can now develop Algorithm PQ-2DSUB-SKY. It starts with the above two steps, and then selects any arbitrary rectangle in the series so long as it agrees with the overall pruned subspace on which dimension to follow. Based on the second observation, there must exist at least one such rectangle. We crawl this rectangle using the previously developed 2D skyline discovery algorithm, and then repeat the entire process - i.e., starting once again by removing rows and columns that have been completely discovered.
\begin{algorithm}[!htb]
\caption{{\bf PQ-2DSUB-SKY}}
\begin{algorithmic}[1]
\label{alg:pq-2DSUB-SKY}
\STATE Assuming that $A_1$ and $A_2$ create the current subspace $\mathcal{S}$
\STATE {\bf foreach} query $q$ that contains $\mathcal{S}$ and tuple $t$ discovered by $q$
    \STATE \hindent {\bf if} $\forall i > 2$, $t[A_i] \geq \mathcal{S}[A_i]$
    \STATE \hindent \hindent Remove the rectangle (0,0) and $(t[A_1],t[A_2])$ from $\mathcal{S}$
\STATE {\bf foreach} discovered tuple $t$ that $\forall i > 2$, $t[A_i] \leq \mathcal{S}[A_i]$
    \STATE \hindent Remove the rectangle corresponding to $A_1 \geq t[A_1]$ and $A_2 \geq t[A_2]$ from $\mathcal{S}$
\STATE {\bf while} $\mathcal{S}$ is not completely pruned
\STATE \hindent Remove the pruned rows and columns
\STATE \hindent Construct the ``block-diagonal'' rectangles ($\mathcal{R}$) between adjacent ``lower-bound'' skyline points in the subspace
\STATE \hindent Apply PQ-2D-SKY on a rectangle $r$ in $\mathcal{R}$ that agrees with the overall pruned subspace on the dimension to follow
\end{algorithmic}
\end{algorithm}
Algorithm~\ref{alg:pq-2DSUB-SKY} depicts the pseudo code of PQ-2DSUB-SKY. The proof of instance optimality for this algorithm is straightforward: As proved in Section~\ref{sec:2d}, the skyline discovery in each rectangle in the series is instance optimal. As discussed above, any complete discovery of skyline tuples in the subspace must cover all rectangles. Thus, the only remaining issue to ensure that the issued queries indeed cover the entire subspace (i.e., containing not only the series of rectangles but the other unpruned part as well). Since the rectangle we choose at each step always has the same discovery direction as the entire subspace, one can see that either the discovery of all rectangles are along the same dimension - i.e., a complete discovery, or the direction changes when a returned tuple triggers pruning of not only the rectangle being processed but also the part of the subspace dominated by the rectangle - i.e., the skyline discovery will still be complete.

\vspace{1mm}
\noindent{\bf Design and Analysis of PQ-DB-SKY:} Our proposed technique for higher-dimensional skyline discovery has a key step of applying the application of this algorithm over each 2D subspace of a higher-dimensional space.

\begin{algorithm}[!htb]
\caption{{\bf PQ-DB-SKY}}
\begin{algorithmic}[1]
\label{alg:pq-DB-SKY}
\STATE $T$ = Top-$k$(SELECT * FROM D); $S = \{T_0\}$
\STATE Prune search space based on $T_0$
\STATE {\bf while} search space is not fully explored
    \STATE \hindent Pick the 2D subspace spanning 2 attributes with largest domain sizes
    \STATE \hindent Identify skyline tuples on subspace using PQ-2DSUB-SKY
\end{algorithmic}
\end{algorithm}

As discussed above, instance optimality is lost once the dimensionality reaches 3. A key reason for this is because one does not know which dimension to ``crawl first'', i.e., how to partition a higher-D space into 2D subspaces (e.g., along $x$, $y$ or $z$?).  Fortunately, heuristics for dimension selection are easy to identify. The most important factor here is the domain size. To understand why, note that the domain sizes for the two dimensions selected into the 2D subspace have an {\em additive} effect on query cost, while the others have a {\em multiplicative} effect. Thus, generally, we should choose the two attributes with the largest domain sizes as the 2D subspace.

Based on the heuristics, the pseudo code of PQ-DB-SKY is depicted in Algorithm~\ref{alg:pq-DB-SKY}. Given the exponential nature of dividing a higher-D space into 2D subspaces, the worst-case query cost grows exponentially with the number of attributes. Nonetheless, as argued in the 2D case, the small domain sizes and the value-occupancy property usually lead to a much smaller query cost in practice. Such an effect is likely amplified even further in higher-D cases, as we shall show in the experimental results in Section~\ref{sec:experiments}, because of the aforementioned heuristics which places the largest domain-sized attributes in the 2D subspace, leaving the other (multiplicative) attributes with even smaller domains.

Suppose $V_{m_1}$ and $V_{m_2}$ are the attributes with the largest domain size, and $V\prime = V \backslash \{V_{m_1},V_{m_2}\}$.
PQ-DB-SKY processes the 2D plane for each value combination in $V\prime$.
Assume for each combination of values $v_c$ for $V\prime$ there exists a sorted list, $L_{v_c}$, of its skyline tuples with regard to their values on $V_{m_1}$, extended by the top-left and bottom-right corner points. Using Equation~\ref{eq:PQ2dAnalysis}, the following is an upper-bound for PQ-DB-SKY query cost, which is in the order of $O((|V_{m_1}|+|V_{m_2}|)\times\prod\limits_{\forall v\prime\in V\prime} |v\prime|)$:
\begin{equation} \label{eq:PQ2dAnalysis}
\begin{split}
C=\mathop{\sum\dots\sum}_{\forall v_c\, for\, V\prime} \sum\limits_{i=0}^{|L_{v_c}|}\min ( & L_{v_c}[i].V_{m_2}-L_{v_c}[i+1].V_{m_2} \, , \\
& L_{v_c}[i+1].V_{m_1}-L_{v_c}[i].V_{m_1})
\end{split}
\end{equation}

Nonetheless, it is also important to note that when the number of attributes is relatively small and the attribute selection is straightforward (e.g., when two attributes have significantly larger domains than the others), Algorithm PQ-DB-SKY can approach the provable lower bound of query cost for skyline discovery. To illustrate this, in the following special-case example, we show that Algorithm PQ-DB-SKY achieves a query cost with constant difference from the proved (instance-optimal) lower bound.

\vspace{1mm}
\noindent{\bf Case Study for PQ-DB-SKY:} 
Let there be a 3D database with attributes $x$, $y$, and $z$. The database ranking function follows a simple rule: If there is a tuple with $z < z_0$ that satisfies a query, the query will never return any tuple with $z \geq z_0$ (i.e., $z$ is the first-priority ordering attribute). Furthermore, for any possible value of $x$, say $v_x$, there is at least one tuple in the database with $x = v_x$ and $z = 0$. Similarly, for any possible value $v_y$ of $y$, there is at least one tuple in the database with $y = v_y$ and $z = 0$.

An interesting property for this construction is that it excludes most higher-dimensional (i.e., 2D or 3D) queries from consideration in building the optimal query plan. The reason for doing so is as follows. First, note that the only 3D query possible will return the same result as SELECT * FROM D WHERE $z = 0$. Second, every 2D query of the form $x = v_x$ or $y = v_y$ is guaranteed to return a tuple with $z = 0$ - i.e., they become equivalent with queries ($x = v_x$ AND $z = 0$) and ($y = v_y$ AND $z = 0$), respectively.

According to these two observations, one can see that there is always an optimal query plan which only includes a subset of the following queries:
(a) 2D queries of the form $z = v_z$;
(b) 1D queries
because any other query is equivalent with a query of these two types. We now consider the queries issued by the above-described, optimal, 2D skyline crawling algorithm on the plane with $z = 0$. An important observation here is that any query with (conjunctive) predicate $z = v_z$ ($v_z \neq 0$) cannot reveal any information about tuples (or even the data space) with $z = 0$. As such, we consider next the following question: can the queries in optimal 3D skyline crawling query plan with predicate $z = 0$ significantly differ from the optimal 2D plan?

To answer this question, we need to consider the alternative queries that can be included in the 3D optimal plan - i.e., those queries that contribute to the skyline crawling of the plane $z = 0$ yet are not part of the 2D optimal plan. One can see from the above discussions that these queries must be 1D queries of the form $x = v_x$ AND $y = v_y$ - which reveals whether a tuple occupies the point $(v_x, v_y, 0)$ on the plane $z = 0$. We refer to such 1D queries as $xy$ queries.

Now consider how many 1D queries one must issue to ``replace'' a query in the optimal 2D plan. An important observation here is the on number of unique points ``covered'' by a query $q$ in the optimal 2D plan, which we refer to as the {\em unique coverage count} of $q$. By ``unique points'' we mean points covered by exactly one query $q$ in the optimal plan. In other words, one cannot determine if a skyline tuple resides on the point if $q$ is removed from the query plan. The interesting observation about the unique coverage count is that, for any $h \geq 0$, there must be {\em at most} $h$ queries with a unique coverage count of $h$ or less. This easily follows from the 2D optimality proof discussed in Section~\ref{sec:2d}.

Given this observation, one can derive the optimization ratio of simply running the 2D optimal algorithm over $z = 0$, 1, \ldots, $|V_z| - 1$, respectively. Suppose that the query cost of doing so on $z = i$ is $c_i$, leading to an overall query cost of $C_\mathrm{2D} = \sum^{|V_z| - 1}_{i=0} c_i$. One can see that any 3D skyline crawling algorithm must have a query cost of at least
\begin{align}
C &\geq \min_{h \geq 0} \left(\sum^{|V_z| - 1}_{i=0} (c_i - h)\right) + \frac{h(h+1)}{2}\\
&= \min_{h \geq 0} \left(C_\mathrm{2D} - |V_z|h + \frac{h(h+1)}{2}\right)
\geq C_\mathrm{2D} - \frac{|V_z|^2}{2}
\end{align}

\section{Skyline Discovery for Mixed-DB}
\label{sec:mixeddb}

We now combine our ideas for SQ, RQ and PQ to produce MQ-DB-SKY, our final algorithm for a mixture of all attributes.

\subsection{Overview}

When the hidden database features a mixture of range- and point-predicates, a straightforward idea appears to be applying RQ-DB-SKY directly over the range-predicate attributes and not using the point ones at all (by setting them to *), because RQ-DB-SKY is significantly more efficient than PQ-DB-SKY. The problem, however, is that doing so misses skyline tuples, as shown below.

First, note that by setting $A_i = *$ on all point-predicate attributes, the skyline tuples discovered by applying RQ-DB-SKY must indeed be skyline tuples. The problem here, however, is that the completeness proof no longer holds because a skyline tuple might be dominated by another tuple on all range-predicate attributes. Such a tuple will be missed by RQ-DB-SKY. Fortunately, the missing tuples must share a common property which we refer to as the {\em range-domination property}: every tuple $t$ missed here must be dominated by an already-discovered skyline tuple, say $D(t)$, on all range attributes. Meanwhile, $t$ must surpass $D(t)$ on at least one of the point attributes.

Range-domination is an interesting property because it significantly shrinks the search space for finding the remaining skyline tuples. Consider a simple example where the execution of RQ-DB-SKY returns only one tuple $t_0$. In this case, we can define our new search space (for all missing skyline tuples) by simply constructing a conjunctive query with predicates $A_i \geq t_0[A_i]$ for every range-predicate attribute $A_i$. Depending on the value of $t_0$ and the data distribution, these conjunctive predicates may significantly reduce the space we must search through with PQ-DB-SKY.

When the range attributes only support one-ended ranges, the above search-space-pruning idea does not work because predicates like $A_i \geq t_0[A_i]$ are not supported. Nonetheless, it is still possible to prune the search space because, in order for a missing tuple to be on the skyline, it must dominate an already discovered tuple on at least one point-predicate attribute. In other words, in the execution of PQ-DB-SKY, we no longer need to consider value combinations of point-predicate attributes that are dominated by all discovered tuples. While this idea has a much weaker pruning power than the above one, it works for the case of two-ended ranges as well, and can be readily integrated with the above idea.

In the following discussions, we shall first describe our key idea for leveraging the pruning power afforded by two-ended ranges. Then, we develop our most generic Algorithm MQ-DB-SKY which supports a mixture of two-ended range, one-ended range, and point-predicate attributes.
\subsection{Details for Leveraging Two-Ended Ranges}

Before presenting our final MQ-DB-SKY algorithm, an important issue remains on how exactly to leverage the above-described RQ-based search-space pruning. A straightforward method is to construct for each discovered skyline tuple $t_i$ the above-described subspace defined by conjunctive predicates $A_i \geq t_i[A_i]$, and then run PQ-DB-SKY over the space.  The problem, however, is that PQ-DB-SKY cannot be directly used in this case because its 2D-subspace-discovery subroutine relies on an important property: if a tuple matches but is not returned by a 1D query $q_0$ as the No.~1 tuple, then it cannot be on the skyline.  Unfortunately, this property no longer holds in the mixed case.




To address this problem, we devise a new subroutine MIXED-DB-SKY as follows.  For each skyline tuple $t_0$ discovered by the range-query algorithm, let predicate $P(t_0)$ be $(t[A_1] \geq t_0[A_1])$ \& $\cdots$ \& $(t[A_h] \geq t_0[A_h])$ for all range attributes $A_1, \ldots , A_h$. For each point attribute $B_i (i \in [1, g])$ and each value $v < t_0[B_i]$, we construct a query $q$: WHERE $P(t_0)$ \& $(t[B_i] = v)$.

If this query returned empty, we move on to the next query. The premise (of the efficient execution of this algorithm) is that, in practice, most such queries $q$ will return empty, quickly pruning the remaining search space.  If $q$ returns at least one tuple, we need to start crawling the subspace defined by $q$.  Now recall our PQ-DB-SKY algorithm for point-query skyline discovery. Our first step over there is to ``partition'' the space into 2-dimensional subspaces (i.e., by enumerating all possible value combinations for the other $g - 2$ attributes, where $g$ is the number of point attributes) and deal with them one after another. This step remains the same. Specifically, at any point we have an empty answer, we can stop further partitioning the current subspace.  When we go all the way to a 2-dimensional subspace (without being stopped by an empty answer) then we'll have to crawl the entire 2D plane to find all tuples in it, instead of using the ``2D skyline discovery'' approach in PQ-DB-SKY. This is the only difference with MIXED-DB-SKY.

A concern with this design is the large number of times MIXED-DB-SKY may have to be called to completely discover the skyline. Note that a {\em single} call of MIXED-DB-SKY without any appended predicates is sufficient to unveil all skyline tuples. Yet when we append the range predicates to prune the search space, the repeated executions of MIXED-DB-SKY, especially many skyline tuples are discovered by RQ-DB-SKY, may lead to an even higher query cost.

To address this problem, we consider a slightly different solution of maintaining a single execution of MIXED-DB-SKY. This time, instead of designing $m_\mathrm{TE}$ conjunctive predicates for each of the discovered skyline tuples, we do so only once for the {\em union} of (dominated) data spaces corresponding to all of them. Specifically, for each two-ended range attribute $A_j$, its corresponding (appended) predicate is now
\begin{align}
A_j \geq \min(t_1[A_j], \ldots, t_h[A_j]), \label{equ:toe}
\end{align}
where $t_1, \ldots, t_h$ are the initially-discovered skyline tuples. One can see that these predicates ensure comprehensiveness of skyline discovery, as any tuple that fails to satisfy (\ref{equ:toe}) must not be dominated by any discovered tuple on the range-predicate attributes - in other words, this tuple must have already been discovered by RQ-DB-SKY. On the other hand, given the (relatively) small number of skyline tuples, $\min(t_1[A_j], \ldots, t_h[A_j])$ may still have substantial pruning power, yet reducing the number of executions of MIXED-DB-SKY to exactly 1.

\subsection{Algorithm MQ-DB-SKY}


\begin{algorithm}[!htb]
\caption{{\bf MQ-DB-SKY}}
\begin{algorithmic}[1]
\label{alg:mq-db-sky}
\STATE $S=$ apply \emph{RQ-DB-SKY()} on Range predicates; $P=$``''
\STATE {\bf foreach} range attribute $r\in R$ 
    \STATE \hindent append $P$ by ``AND $t[r] \geq \min_{\forall t_j\in S}(t_j[r])$''
\STATE {\bf foreach} point attribute $B_i$ and each value $v < \max_{\forall t_j\in S}(t_j[B_i])$
    \STATE \hindent $q$: WHERE $P$ AND $(t[B_i] = v)$
    \STATE \hindent $T=$ Top-$k$($q$); update $S$ by $T$
    \STATE \hindent {\bf if} $T$ contains $k$ tuples
        \STATE \hindent \hindent partition the space defined q in 2D planes
        \STATE \hindent \hindent {\bf foreach} plane, crawl the tuples in it and update $S$
\end{algorithmic}
\end{algorithm}

We now combine all the above ideas to produce our ultimate (most generic) algorithm, MQ-DB-SKY, which supports any arbitrary combination of two-ended range, one-ended range, and point predicate attributes. Note that when there are two-ended range attributes in the database, we use the pruning idea discussed in the above subsection. When there are only one-ended range attributes besides point ones, our algorithm is limited to using the weaker pruning idea discussed in Section~\ref{sec:sqdb}. If there are only one-ended range, two-ended range, or point-predicate attributes in the database, MQ-DB-SKY is reduced to SQ-, RQ-, and PQ-DB-SKY, respectively. Finally, if there are a mixture of one-ended and two-ended range-predicate attributes but no point-predicate attribute in the database, MQ-DB-SKY is reduced to a simple revision of RQ-DB-SKY which leverages the availability of ``$>$'' predicates on only attributes that support two-ended ranges.

\section{Extensions}
\label{sec:extensions}

\subsection{Anytime Property}

An desirable feature shared by all algorithms developed in the paper is their {\em anytime property} - i.e., one can stop the algorithm execution at any time to return a subset of all skyline tuples.  One can see that this property can be very useful for discovering the skyline over real-world web databases, as many of them enforce a (many times secret and dynamic) limit on the number of queries that can be issued from an IP address (or an API account) per day.  Without knowing such a limit ahead of time, it becomes extremely important to ensure that the algorithm returns as many skyline tuples as possible (instead of simply returning a failure message) when the query limit is triggered.


In SQ-DB-SKY, note that any tuple returned by an issued query is a skyline tuple. Thus, the property always holds.
In RQ-DB-SKY, note if we traverse the tree in a depth-first fashion, then a tuple returned is either on the skyline or dominated by one of the already discovered skyline tuple. Thus, the anytime property holds here as well.
In PQ-2D-SKY, just like SQ-DB-SKY, any tuple returned by an issued query is a skyline tuple - leading to the anytime property.
Since PQ-DB-SKY explores one 2D subspace at a time, so long as we process values of the other attributes (i.e., those not selected into the 2D subspace) in their preferential order, all tuples discovered by the algorithm at any time are on the eventual skyline - i.e., the anytime property holds.
Finally, in the mixed case, the initial call of RQ- or SQ-DB-SKY satisfies the anytime property, as shown above. The subsequent call of (a small variation of) PQ-DB-SKY satisfies the property as well, leading to the anytime property of MQ-DB-SKY


\subsection{Sky Band}

We now consider an extension of the objective from discovering the {\em skyline} tuples to {\em top-$h$ sky band} tuples - i.e., those tuples that are dominated by fewer than $h$ other tuples in the database. One can see that the top-1 sky band is exactly the traditional skyline. Quite surprisingly, the simplest case discussed above - i.e., SQ-DB - becomes the most difficult case for sky band discovery. In the following discussions, we shall first illustrate how to extend Algorithms RQ- and PQ-DB-SKY (and thereby MQ-DB-SKY) to discover the top-$h$ sky band, and then discuss SQ-DB.

\vspace{1mm}
\noindent{\bf Extending RQ-DB-SKY:} The extension is enabled by the following simple yet important observation: for any tuple $t_2$ on the top-2 sky band but not on the skyline, there must exist a skyline tuple $t_1$ such that when we consider the subspace dominated by $t_1$ (and the subset of the database in it), henceforth referred to as $t_1$'s {\em domination subspace}, $t_2$ becomes a skyline tuple. Given this observation, discovering the top-2 sky band becomes straightforward: for each skyline tuple $t$ discovered by RQ-DB-SKY, we run RQ-DB-SKY again, just this time on the domination subspace of $t$.  It is possible to specify such a subspace through conjunctive queries because RQ-DB supports two-ended ranges.

We now consider the discovery of top-$h$ sky band. While this seemingly requires us to consider any size-($h-1$) subset of tuples on the top-$(h-1)$ sky band, fortunately this is not the case in reality. To understand why, consider a tuple $t_3$ which is on the top-3 sky band but not top-2 sky band.  Interesting, $t_3$ must be a skyline tuple on the domination subspace of either a skyline tuple of the entire database or a tuple on its top-2 sky band. For example, suppose that $t_3$ is dominated by two skyline tuples $t$ and $t^\prime$. Note that this means $t_3$ must be a {\em skyline} tuple in the domination subspace of $t$ (and that of $t^\prime$ as well), simply because $t^\prime$ is excluded from this subspace. As such, the total number of times we have to run RQ-DB-SKY to discover the top-$h$ sky band is simply the number of tuples on the top-$(h-1)$ sky band plus one (i.e., the original execution for discovering the skyline).

\vspace{1mm}
\noindent{\bf Extending PQ-DB-SKY:} For PQ-DB-SKY, the extension is indeed straightforward - since the algorithm is eventually reduced to each 2D subspace (and the 1D queries issued within), the only difference here for sky band discovery is the pruning rule: after issuing a 1D query $q$ which returns $t$, instead of eliminating the subspace with $x > t[x]$ and $y > t[y]$ from consideration as in the skyline case, we have to find the top-$h$ sky band tuples matching $q$ instead. This is simple when the system returns top-$k$ tuples where $k \geq h$ - as we can simply take the top-$h$ returned tuples of $q$ and determine based on the previously retrieved tuples which of the $h$ tuples are indeed on the top-$h$ sky band, and then perform the pruning accordingly. If $k < h$, however, we may have to issue the 0D (base) queries one by one until finding all possible tuples matching $q$ that are on the top-$h$ sky band. Once the pruning process is updated, the remaining design remains unchanged.

\vspace{1mm}
\noindent{\bf Extending SQ-DB-SKY:} The most difficult case, unfortunately, happens for SQ-DB which only supports one-ended ranges. Indeed, it might not be possible to discover even the top-2 sky band without crawling the entire database. To understand why, consider a simple case where the system features a top-1 interface (i.e., $k = 1$). Here we note a simple fact: any query consisting solely of $<$ or $\leq$ predicates will {\em never} return a non-skyline tuple $t$ - because this query will always match the skyline tuple dominating $t$ and return it over $t$ according to the system ranking function. This essentially requires us to resort to ``='' predicates (this is even assuming we know all the domain values) in order to discover the top-$h$ ($h > 1$) sky band.  One can see that this easily reduces to crawling the database in the worst-case scenario.

Having stated the negative result, in practice, it is still possible to efficiently discover the top-$h$ sky band for SQ-DB, especially when $k$ (as in the top-$k$ interface offered by the hidden database) is large. To understand why, note the following critical observation: if among the (up to $k$) results returned by a query $q$, say SELECT *, we can find a tuple $t$ dominated by $h - 1$ other tuples, then we can safely conclude that any tuple on the top-$h$ sky band must not be dominated by $t$. In other words, we can branch out from the query according to $t$ like what we did in SQ-DB-SKY for the top-1 tuple returned by $q$. We know that for any top-$h$ sky band tuple matching $q$, it must belong to at least one of the $m$ branches, as otherwise it must be dominated by $t$ and therefore out of the top-$h$ sky band.
Of course, as we drill further down into the tree, there is a decreasing chance for a query to return a tuple dominated by $h - 1$ others, simply because the appended $<$ predicates narrow the field to ``highly ranked'' tuples. Nonetheless, with a large $k$, many of these deep queries may already return valid answers, allowing us to safely stop exploring it further. In the unfortunate case where a query still overflows - and we do not have any way of further branching it out without losing the comprehensiveness of top-$h$ sky band discovery - then we have two choices: either to stop exploring this query and accept partial discovery; or to crawl the entire subspace corresponding to this query. 

\section{Experimental Evaluation}
\label{sec:experiments}

\subsection{Experimental Setup}

In this section, we present the results of our experiments, all of which were run on real-world data. Specifically, we started by testing a real-world dataset we have already collected. We constructed a top-$k$ web search interface for it and then ran our algorithms through the interface. Since we have full knowledge of the dataset and control over factors such as database size, etc., this dataset enables us to verify the correctness of our algorithms and test their performance over varying characteristics of the database. Then, we tested our algorithms {\em live online} over three real-world websites, including the largest online diamond and flight search services in the world, echoing the motivating examples discussed in Section~\ref{sec-intro}. 

\vspace{1mm}
\noindent {\bf Offline Dataset:} The offline dataset we used is the flight on-time database published by the US Department of Transportation (DOT).  It records, for all flights conducted by the 14 US carriers in January 2015,\footnote{from \small{\url{http://www.transtats.bts.gov/DL_SelectFields.asp?Table_ID=236&DB_Short_Name=On-Time}}}, attributes such as scheduled and actual departure time, taxiing time and other detailed delay metrics. The dataset has been widely used by third-party websites to identify the on-time performance of flights, routes, airports, airlines, etc.

The dataset consists of 457,013 tuples over 28 attributes, from which 9 ordinal attributes were used as {\em ranking attributes}\footnote{The others, such as Flight Number, are considered filtering attributes and not used in the experiments.}: \textit{Dep-Delay, Taxi-out, Taxi-in, Actual-elapsed-time, Air-time, Distance, Delay-group-normal, Distance-group, ArrivalDelay}.  The domain of the 9 ranking attributes range from 11 to 4,983.  Two of the 9 attributes, \textit{Delay-group-normal} and \textit{Distance-group}, were already discretized by DOT (i.e., ``grouped'', according to the dataset description). Thus, we used them as PQ (point-query-predicate) attributes by default. For a few tests which call for more PQ attributes, we also consider four other derived attributes, \textit{Taxi-out group, Taxi-in group, ArrivalDelay group, Air-Time group} as potential PQ. The other attributes were used as range-predicate attributes - whether it is SQ or RQ depends on the specific test setup.

For all attributes, we defined the preferential order so that shorter delay/duration ranks higher than longer values. For non-time attributes, i.e., \textit{Distance} and \textit{Distance-group}, we assigned a higher rank to longer distances than shorter ones, given that the same amount of delay likely impacts short-distance flights more than longer ones. We also tested the case where shorter distances are ranked higher, and found little difference in the performance.  To construct the top-$k$ interface, we also need to define a ranking function it uses. Here we simply used the SUM of attributes for which smaller values are preferred MINUS the SUM of attributes for which larger values are preferred.

\vspace{1mm}
\noindent {\bf Online Experiments:} We conducted live experiments over three real-world websites: Blue Nile (BN) diamonds, Google Flights (GF), and Yahoo!~Autos (YA).

Blue Nile (BN)\footnote{\url{http://www.bluenile.com/diamond-search}} is the largest online retailer of diamonds. At the time of our tests, its database contained 209,666 tuples (diamonds) over 6 attributes: {\em Shape}, {\em Price}, {\em Carat}, {\em Cut}, {\em Color}, {\em Clarity}, the last 5 of which have universally accepted preferential (global) orders, i.e., lower {\em Price}, higher {\em Carat}, more precise {\em Cut}, low trace of {\em Color} and high {\em Clarity}. We used these 5 attributes to define skyline tuples. BN offers two-ended range predicates (RQ) on all five attributes, with the default ranking function being {\em Price} (low to high).

Google Flights (GF) is one of the largest flight search services and offers an interface called QPX API\footnote{\url{https://developers.google.com/qpx-express/}}.  We consider the scenario of a traveler looking to get away with a one-way flight after a full day of work.  We used three filtering attributes, {\em DepartureCity}, {\em ArrivalCity} and {\em DepartureDate}, and four supported ranking attributes: {\em Stops}, {\em Price}, {\em ConnectionDuration}, and {\em DepartureTime}. Here the traveler likely prefers fewer {\em Stops}, lower {\em Price}, shorter {\em ConnectionDuration}, and later {\em DepartureTime}. QPX API supports SQ (i.e., single-ended ranges) on {\em Stops}, {\em Price}, {\em ConnectionDuration}, and RQ (i.e., two-ended) on {\em DepartureTime}. The default ranking function used by GF is price (low to high).

Yahoo!~Autos (YA)\footnote{\url{https://autos.yahoo.com/used-cars/}} offers a popular search service for used cars. In our experiments, we considered those listed for sale within 30 miles of New York City, totaling 125,149 cars. We considered three ranking attributes {\em Price} (lower preferred), {\em Mileage} (lower preferred), {\em Year} (higher preferred), all of which are supported as two-ended ranges (RQ) by YA, and the ranking function of {\em Price} (low to high).

\vspace{1mm}
\noindent {\bf Algorithms Evaluated:}
We tested the four main algorithms described in the paper, SQ-,  RQ-, PQ-, and MQ-DB-SKY. We also compared their performance with a baseline technique of first crawling all tuples from the hidden web database using the state-of-the-art crawling algorithm in \cite{sheng2012optimal}, and then extracting the skyline tuples locally. We refer to this technique as BASELINE.

\vspace{1mm}
\noindent {\bf Performance Measures:} As we proved theoretically in the paper, all algorithms guarantee complete skyline discovery. We confirmed this in all experiments we ran offline (and have the ground truth for verification). Since precision is not an issue, the key performance measure becomes efficiency which, as we discussed earlier, is the number of queries issued to the web database.

\begin{figure*}[ht]
    \begin{minipage}[t]{0.23\linewidth}
        \centering
        \includegraphics[scale=0.26]{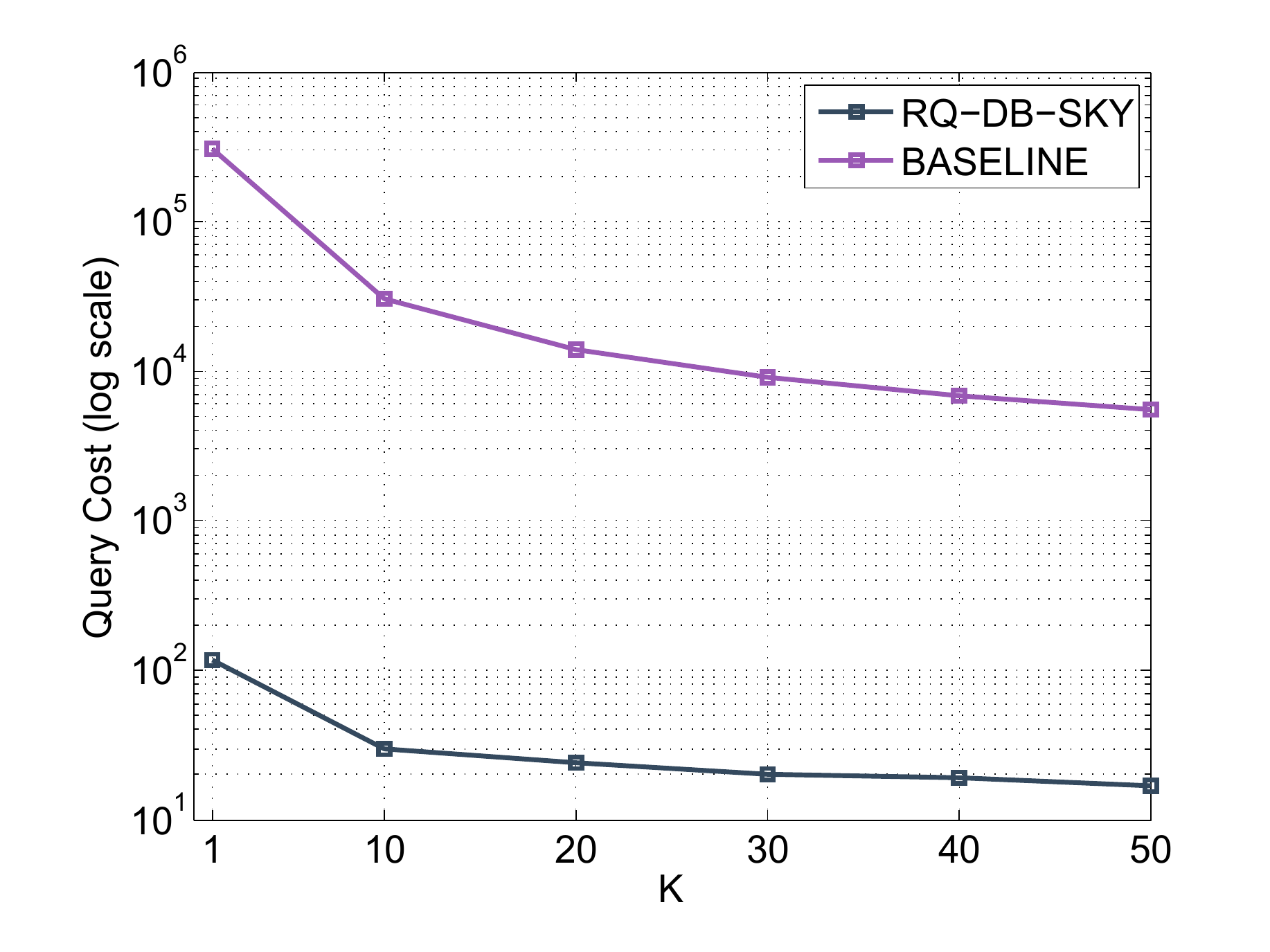}
        \vspace{-2mm}\caption{Range Predicates: Impact of $k$}
        \label{fig:rq-k}
    \end{minipage}
    \hspace{0mm}
    \begin{minipage}[t]{0.23\linewidth}
        \centering
        \includegraphics[scale=0.26]{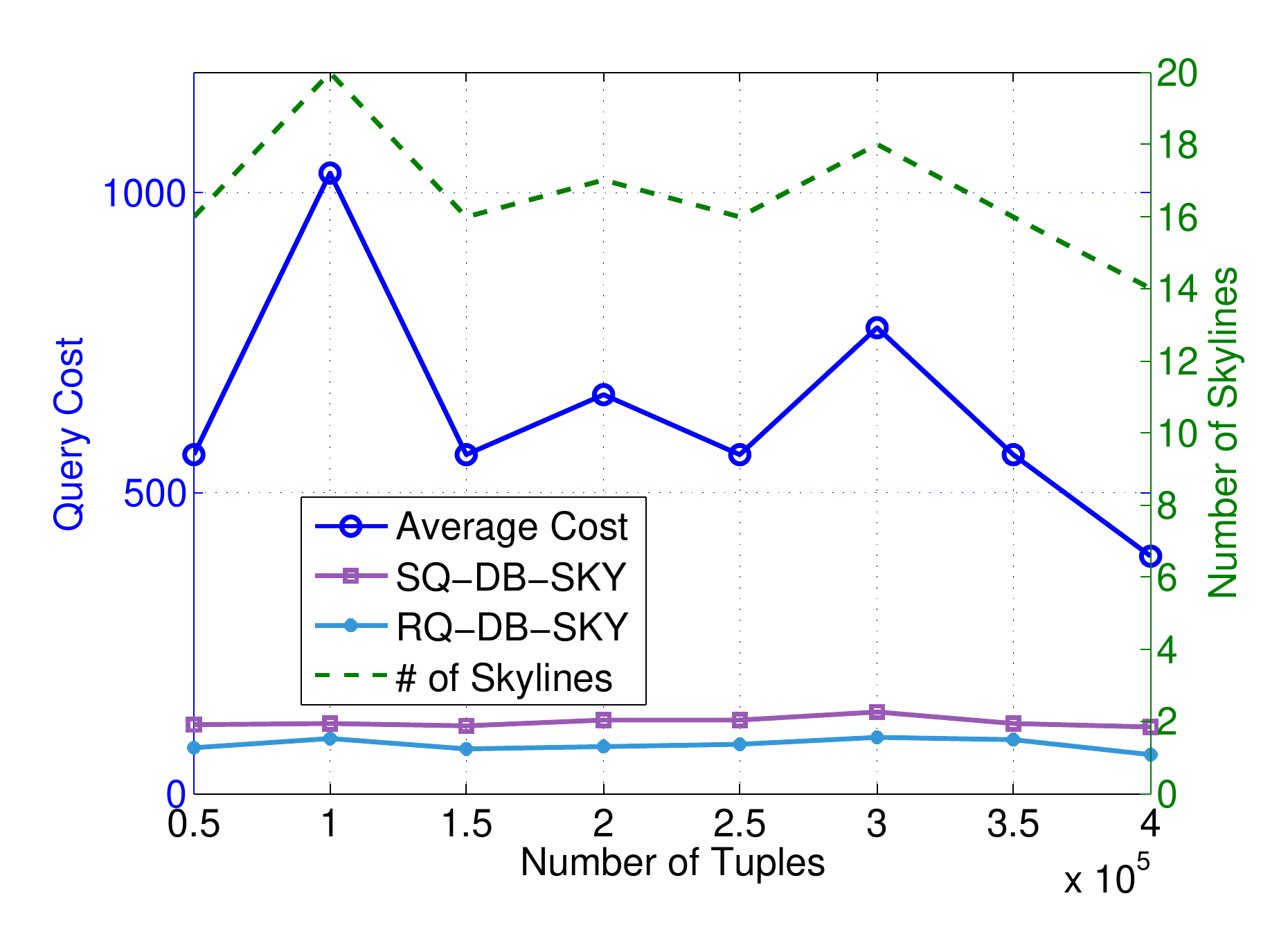}
        \vspace{-2mm}\caption{Range Predicates: Impact of $n$}
        \label{fig:rq-n}
    \end{minipage}
    \hspace{3mm}
    \begin{minipage}[t]{0.23\linewidth}
        \centering
        \includegraphics[scale=0.26]{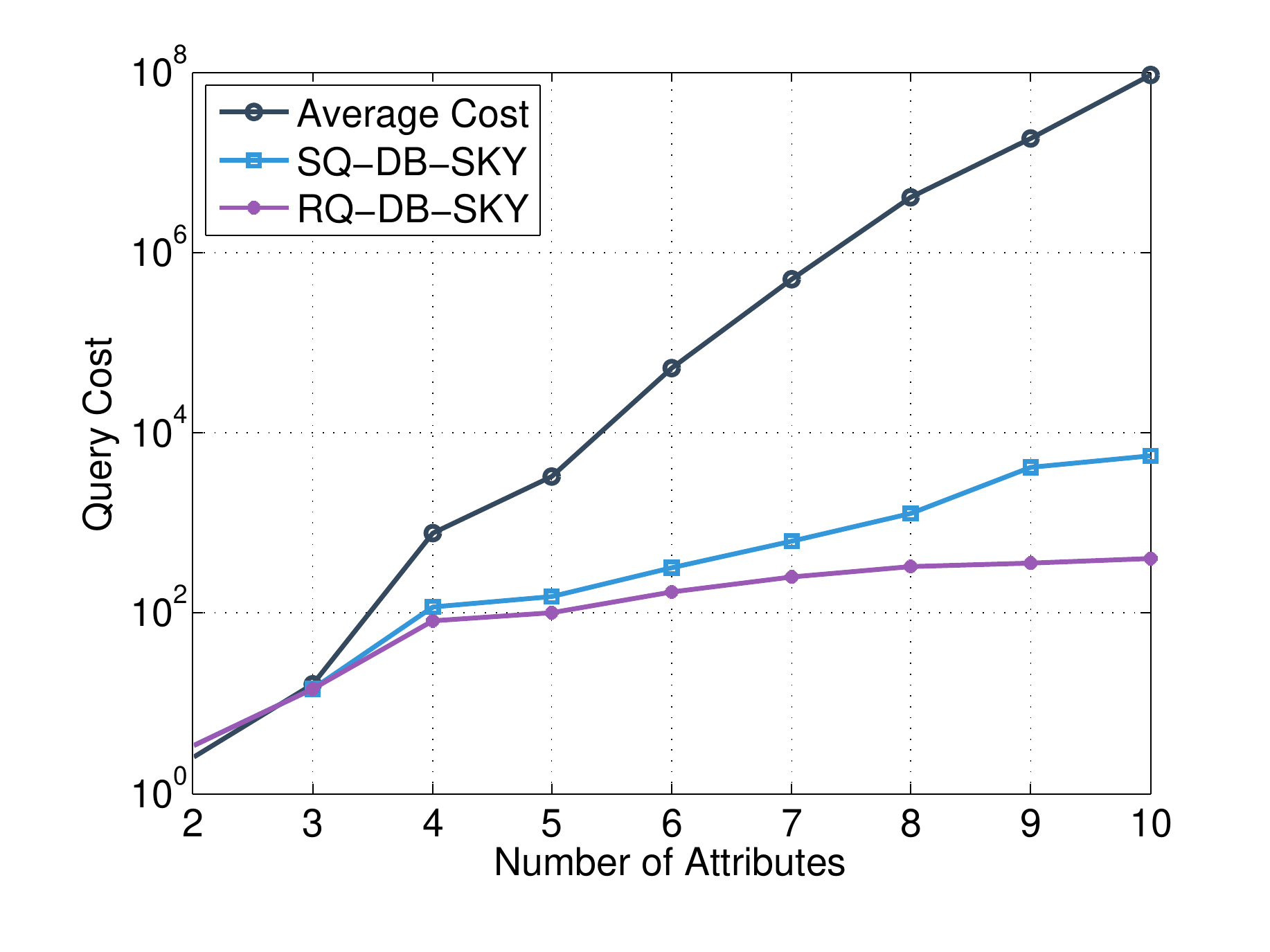}
        \vspace{-2mm}\caption{Range Predicates: Impact of $m$}
        \label{fig:rq-m}
    \end{minipage}
    \hspace{1mm}
    \begin{minipage}[t]{0.23\linewidth}
        \centering
        \includegraphics[scale=0.26]{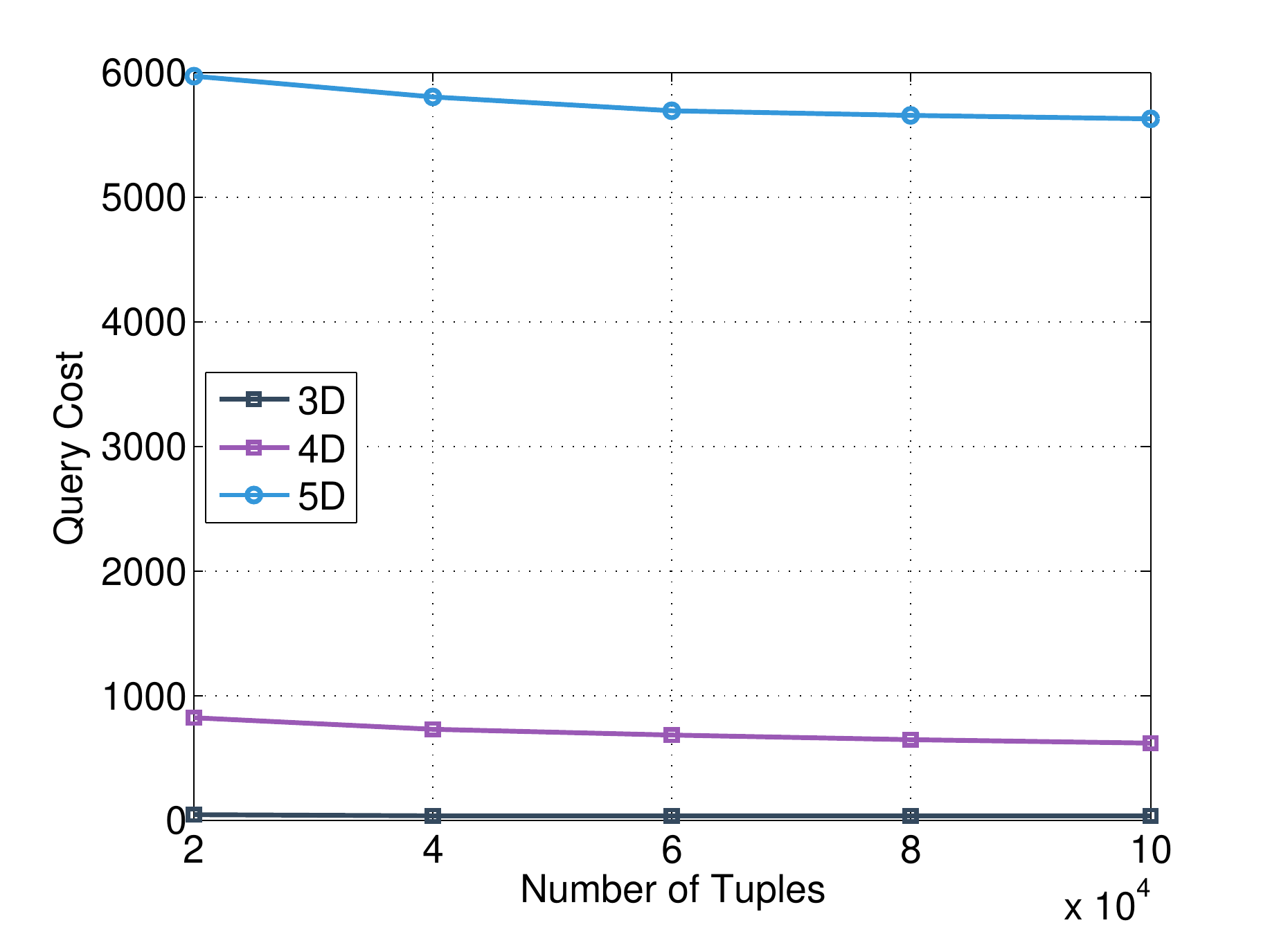}
        \vspace{-2mm}\caption{Point Predicates: Impact of $n$}
        \label{fig:pq-n}
    \end{minipage}
    \hspace{-2mm}
\end{figure*}
\begin{figure*}[ht]
    \hspace{1mm}
    \begin{minipage}[t]{0.23\linewidth}
        \centering
        \includegraphics[scale=0.26]{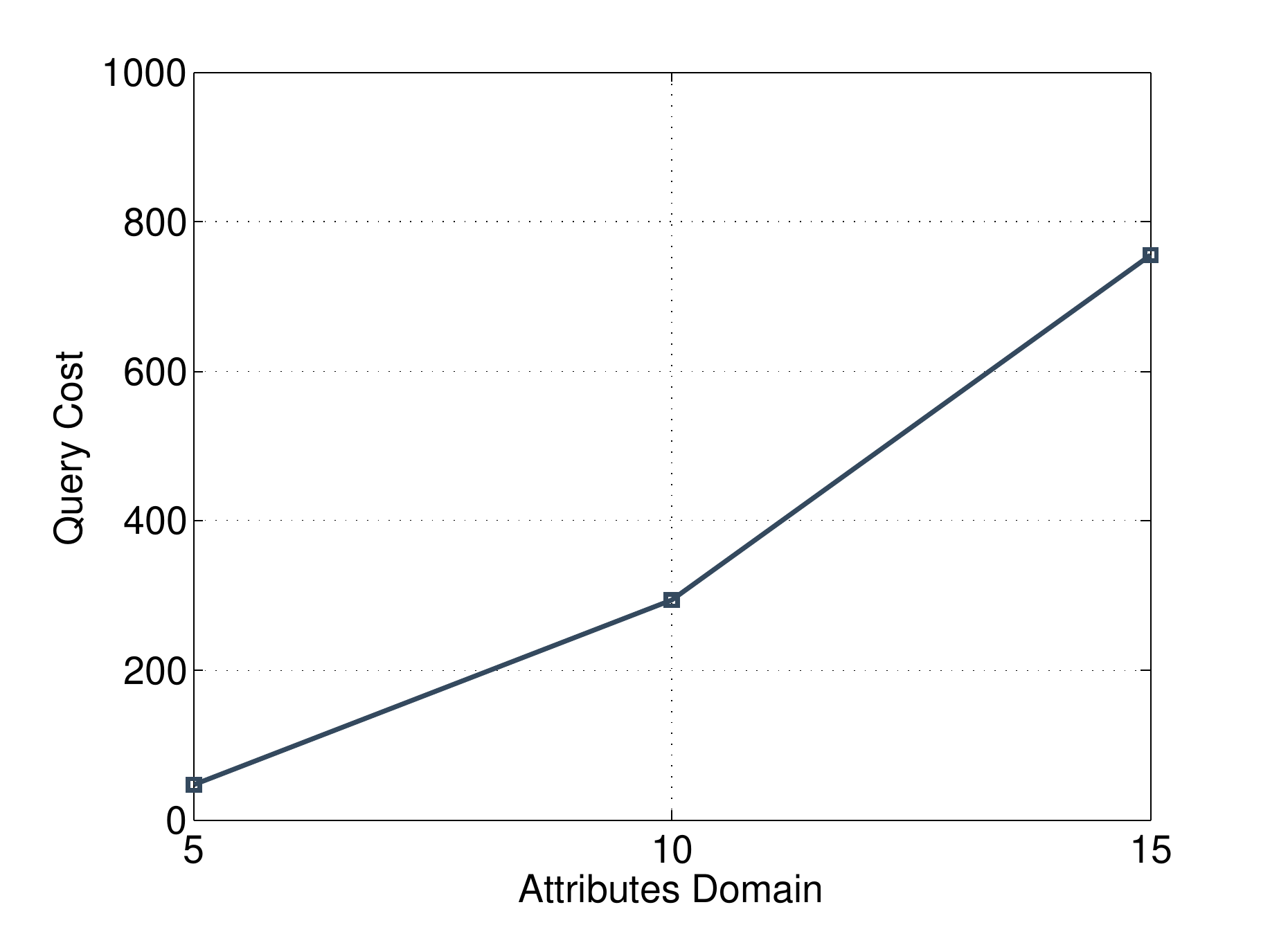}
        \vspace{-2mm}\caption{Point Predicates: Impact of Domain Size}
        \label{fig:pq-sd}
    \end{minipage}
    \hspace{1mm}
    \begin{minipage}[t]{0.23\linewidth}
        \centering
        \includegraphics[scale=0.26]{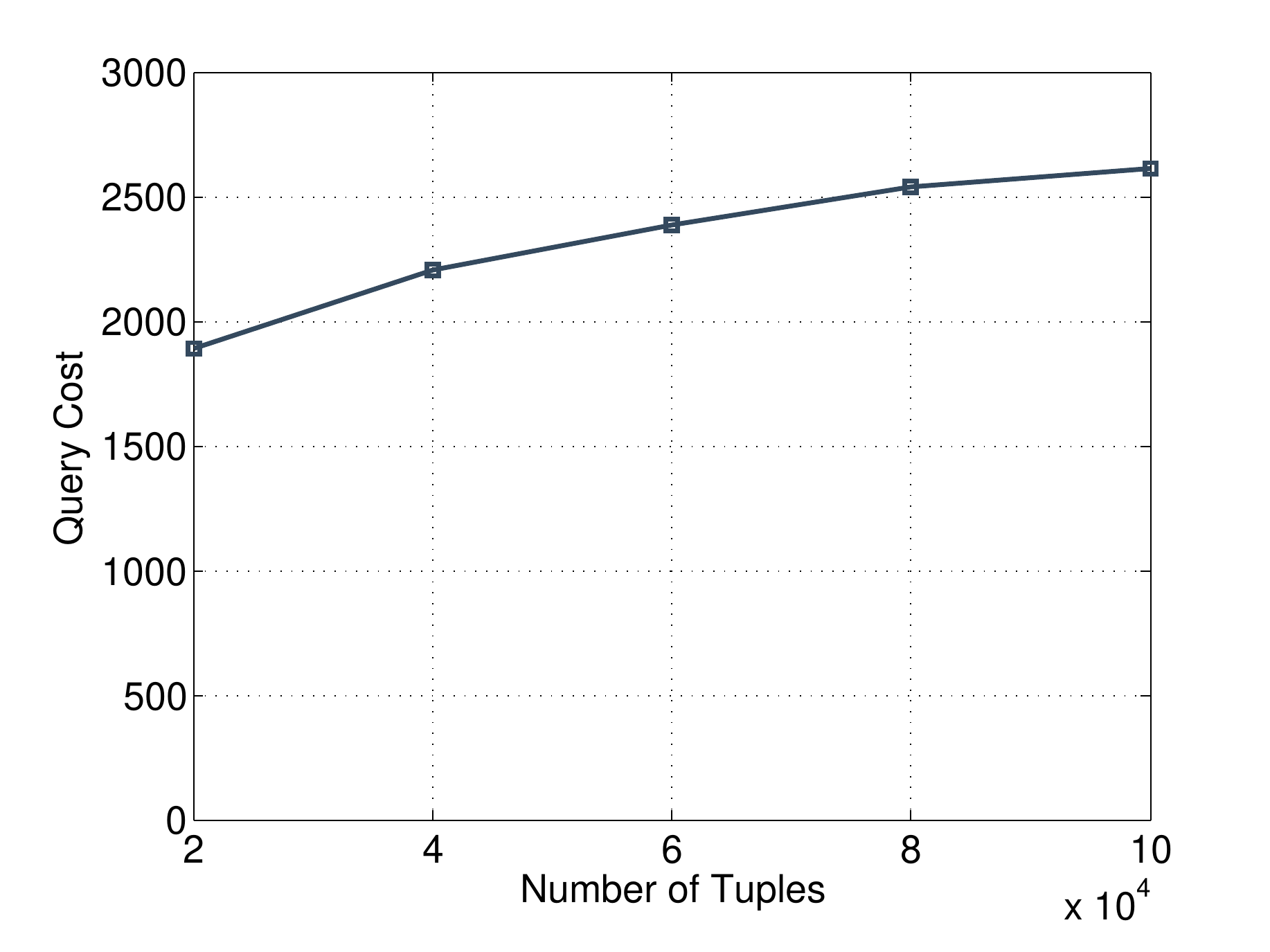}
        \vspace{-2mm}\caption{Mixed Predicates: Impact of $n$}
        \label{fig:mq-n}
    \end{minipage}
    \hspace{1mm}
    \begin{minipage}[t]{0.23\linewidth}
        \centering
        \includegraphics[scale=0.26]{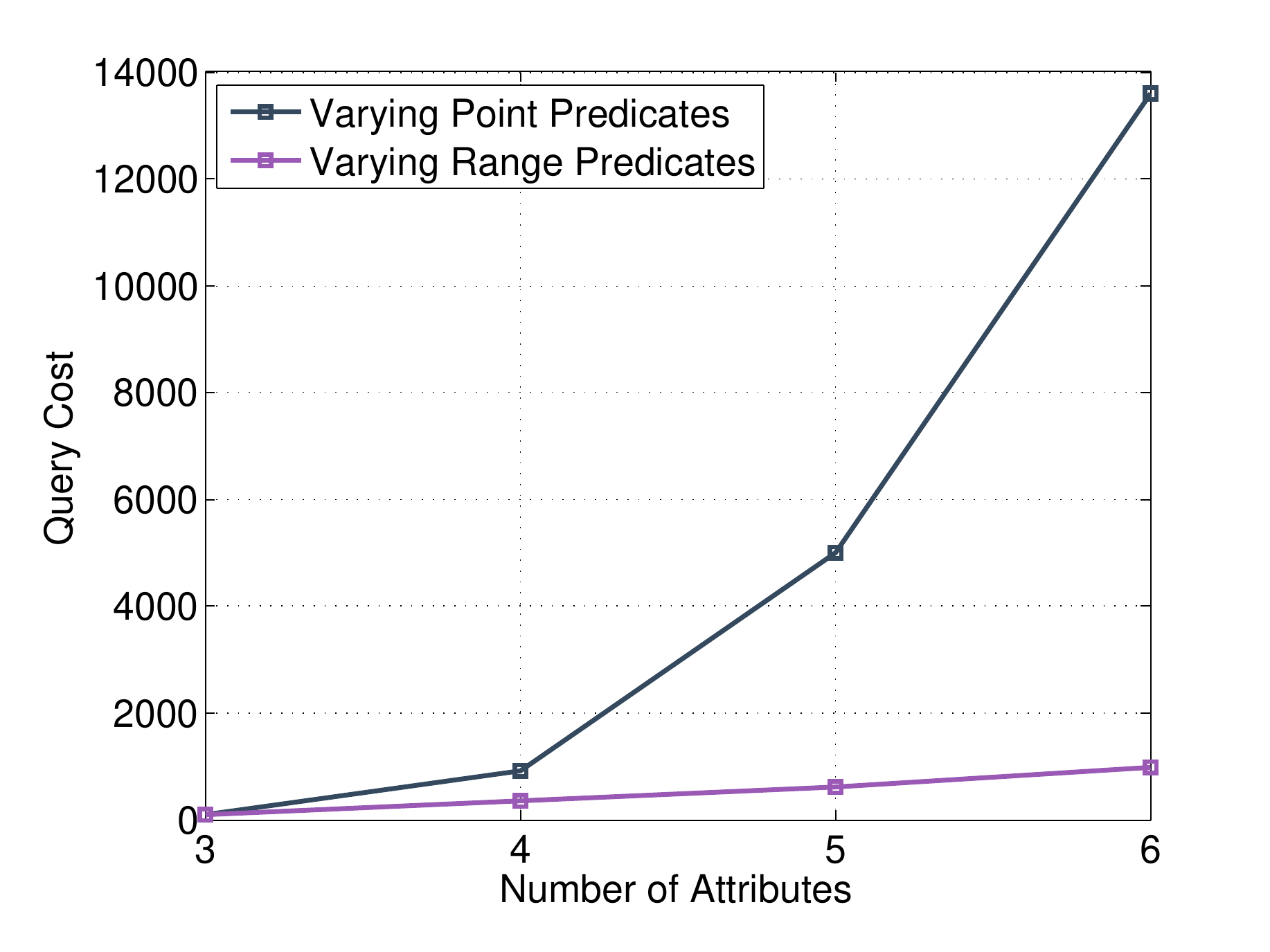}
        \vspace{-2mm}\caption{Mixed Predicates: Varying Range and Point Predicates}
        \label{fig:mq-dr}
    \end{minipage}
    \hspace{1mm}
    \begin{minipage}[t]{0.23\linewidth}
        \centering
        \includegraphics[scale=0.26]{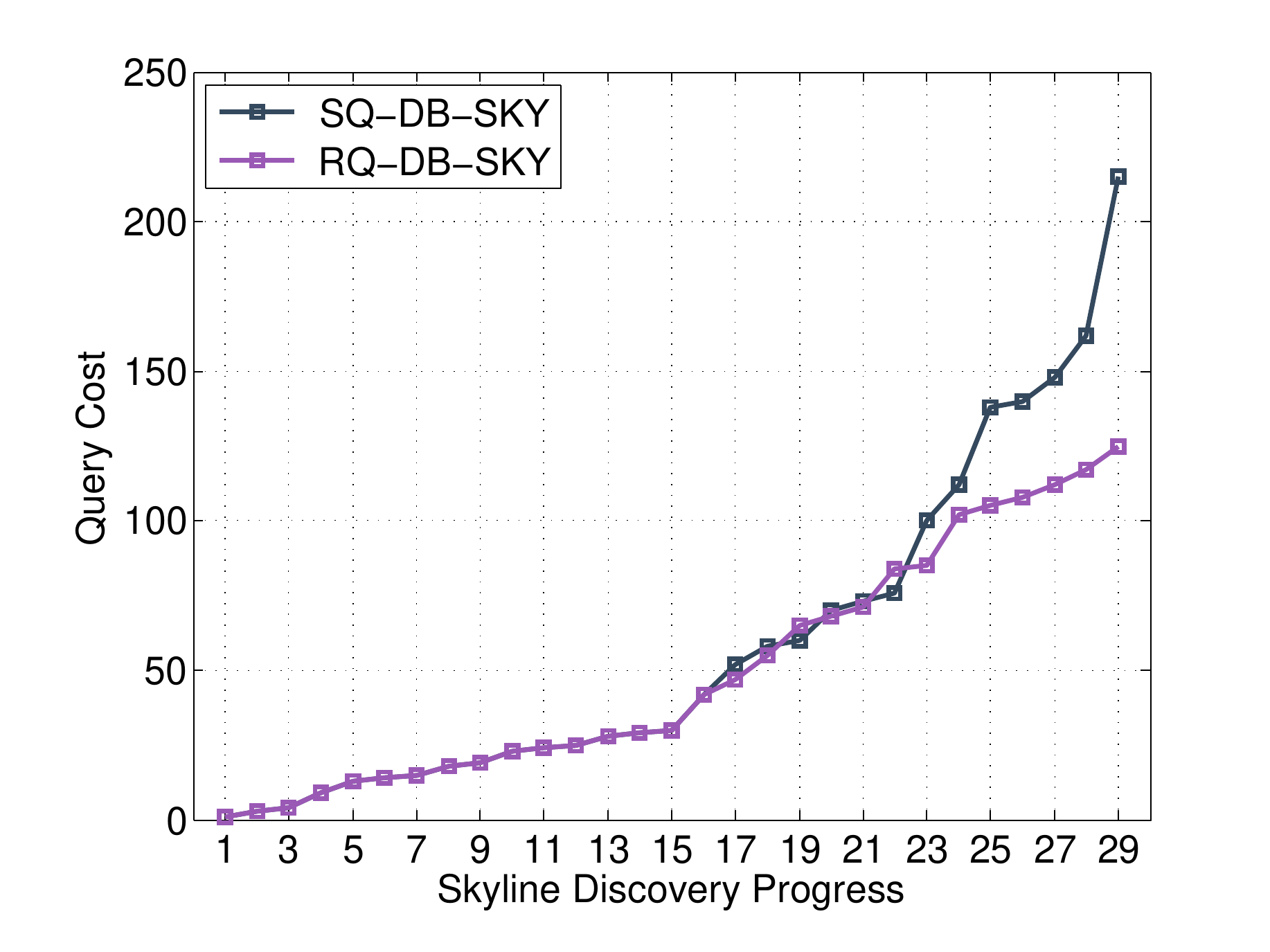}
        \vspace{-2mm}\caption{Anytime Property of SQ and RQ-DB-SKY}
        \label{fig:rq-anytime}
    \end{minipage}
    \hspace{1mm}
\end{figure*}
\begin{figure*}[ht]
    \hspace{1mm}
    \begin{minipage}[t]{0.23\linewidth}
        \centering
        \includegraphics[scale=0.26]{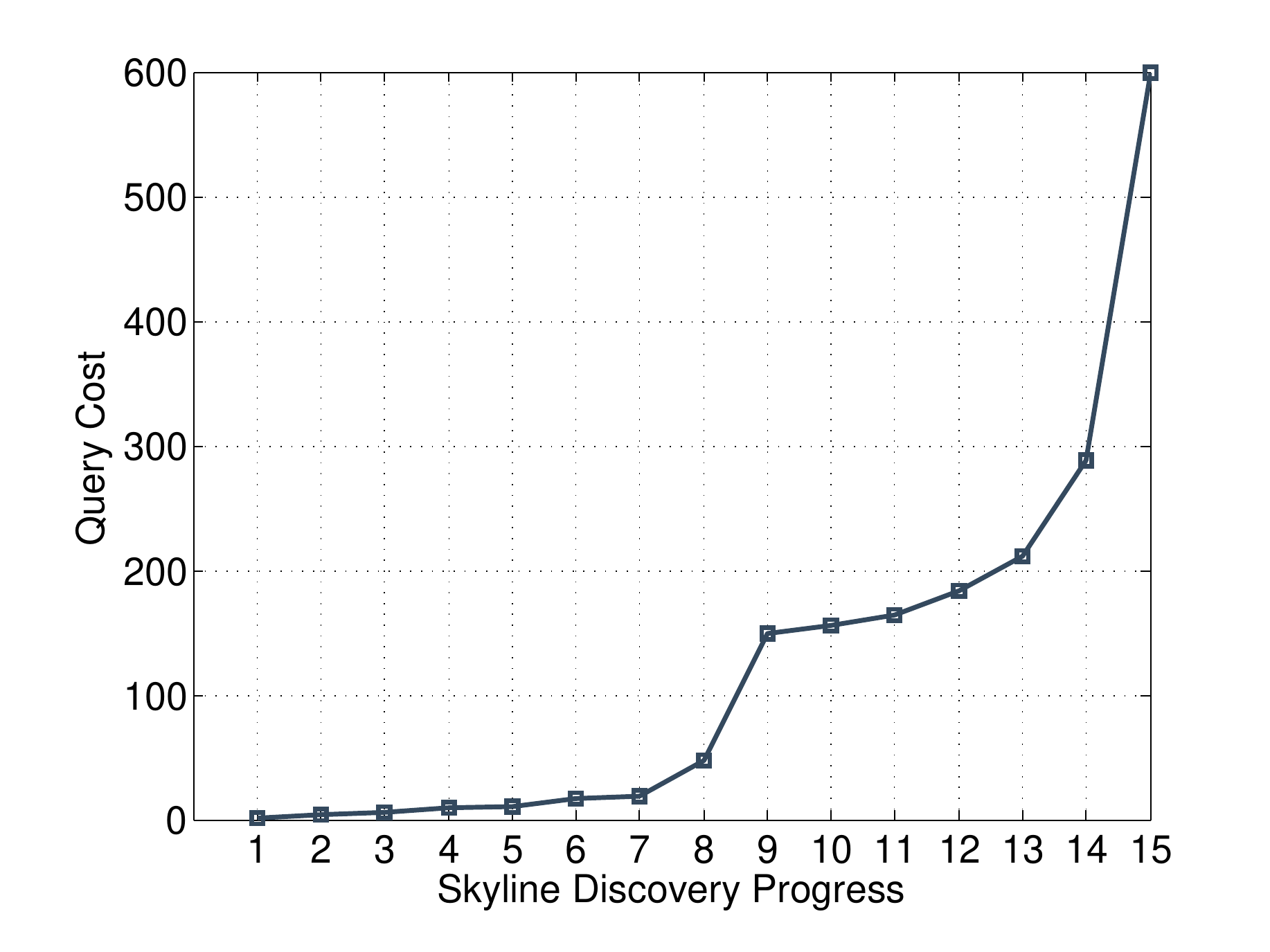}
        \vspace{-2mm}\caption{Anytime Property of PQ-DB-SKY}
        \label{fig:pq-anytime}
    \end{minipage}
    \hspace{1mm}
    \begin{minipage}[t]{0.23\linewidth}
        \centering
        \includegraphics[scale=0.26]{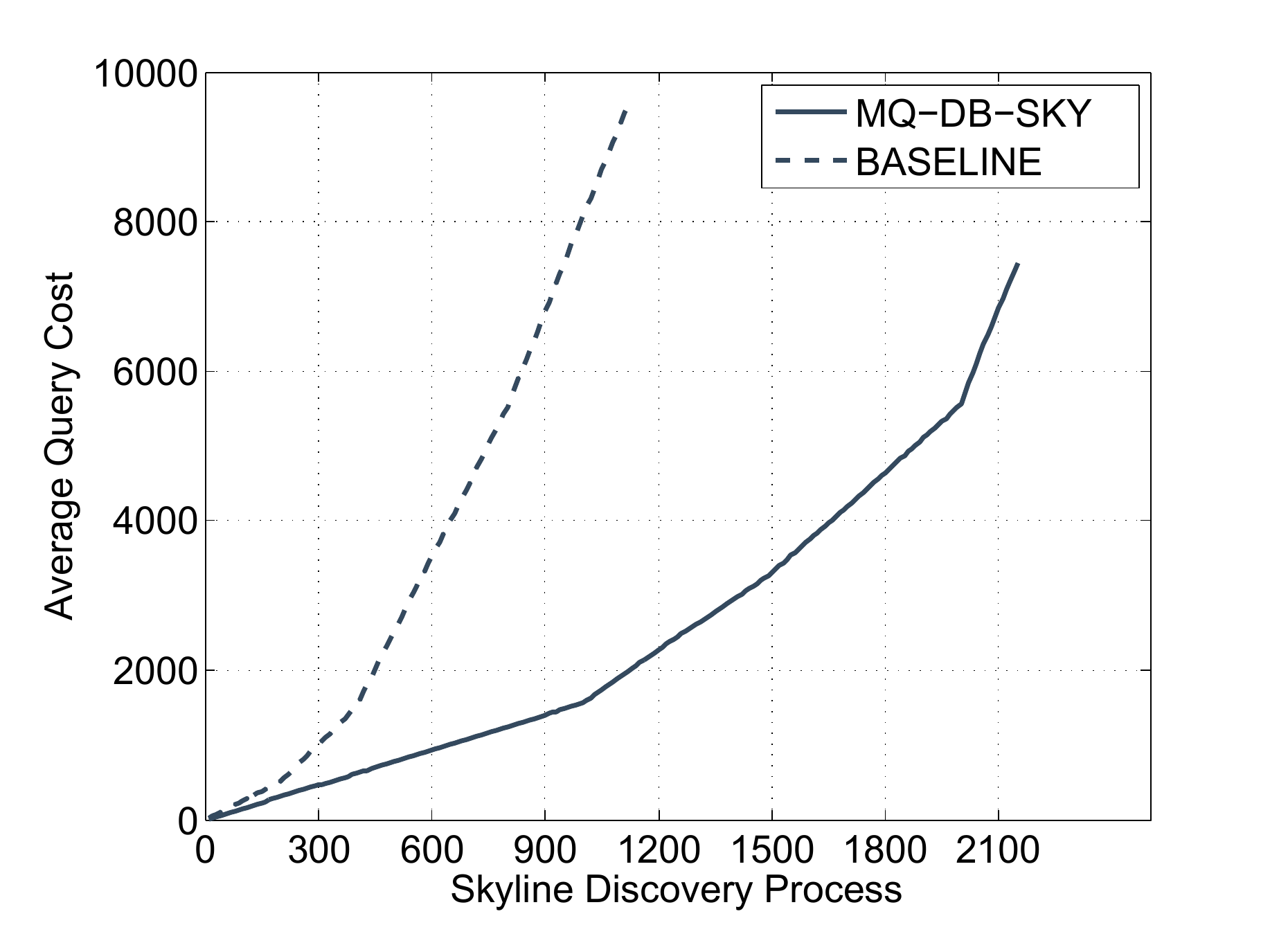}
        \vspace{-2mm}\caption{Online Experiments: Blue Nile Diamonds}
        \label{fig:onlineExpBN}
    \end{minipage}
    \hspace{1mm}
    \begin{minipage}[t]{0.23\linewidth}
        \centering
        \includegraphics[scale=0.26]{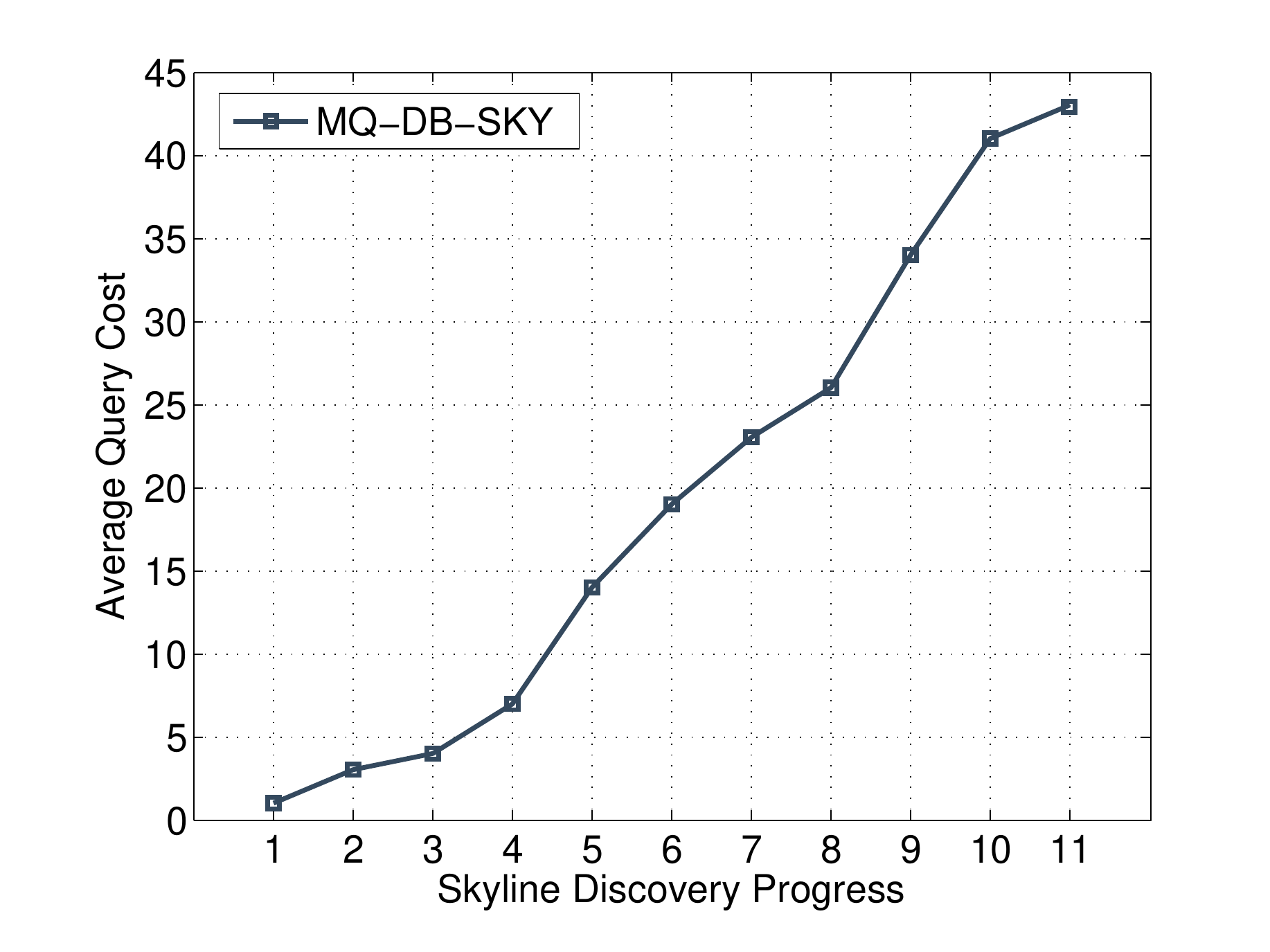}
        \vspace{-2mm}\caption{Online Experiments: Google Flights}
        \label{fig:onlineExpGF}
    \end{minipage}
    \hspace{1mm}
        \begin{minipage}[t]{0.23\linewidth}
        \centering
        \includegraphics[scale=0.26]{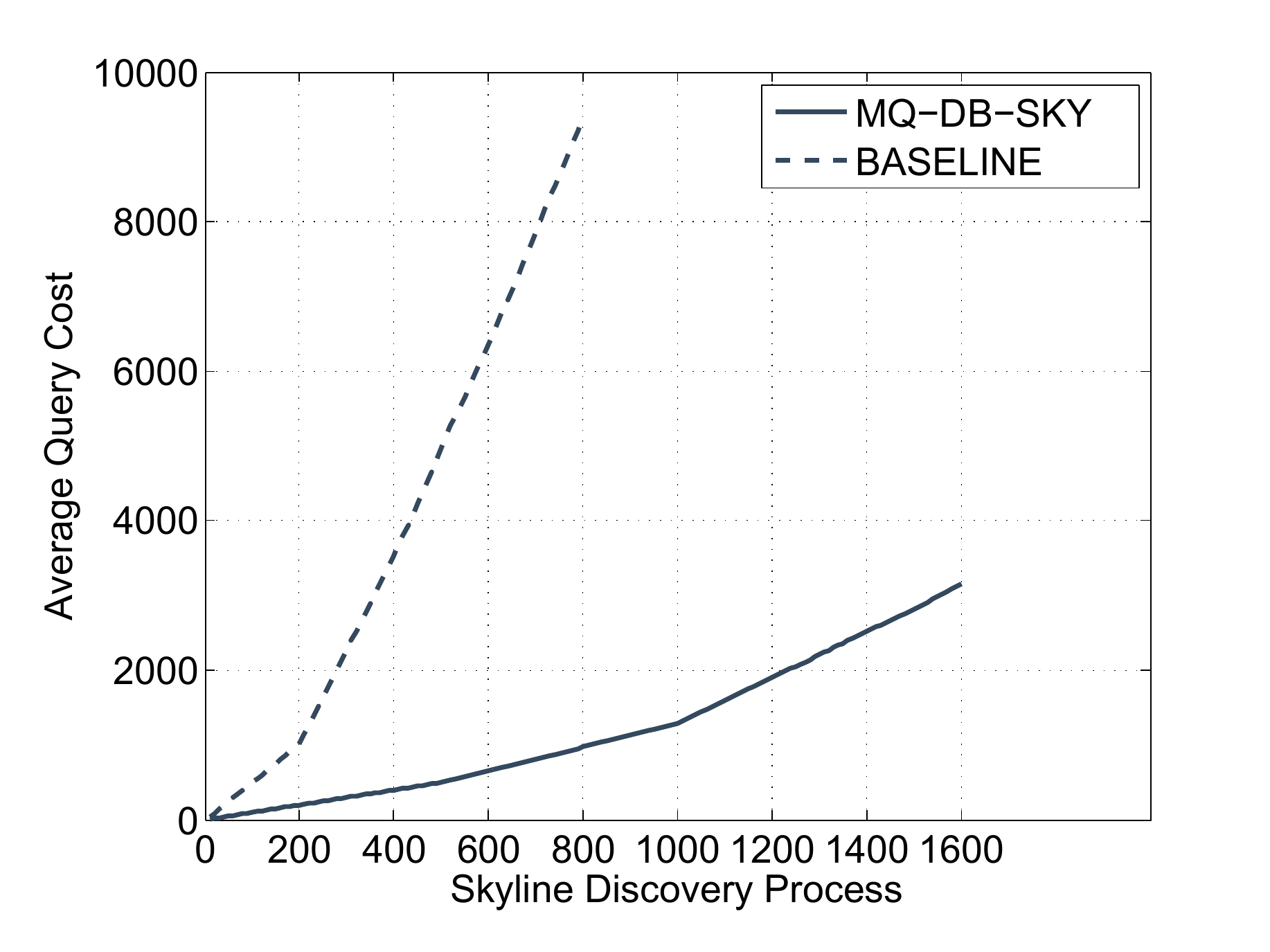}
        \vspace{-2mm}\caption{Online Experiments: Yahoo!~Autos}
        \label{fig:onlineExpYA}
    \end{minipage}
\end{figure*}

\subsection{Experiments over Real-World Dataset}

\noindent {\bf Interfaces with Range Predicates:}
We started with testing skyline discovery through range-query interfaces, i.e., SQ and RQ, over the DOT dataset. Figure~\ref{fig:rq-k} compares the query cost required for complete skyline discovery by RQ-DB-SKY and BASELINE when $k$ (as in top-$k$ offered by the web database) varies from 1 to 50.  Note that SQ-DB-SKY is not depicted here because the range-query-based crawling in BASELINE requires two-ended range support. One can observe from the figure that, while both algorithms benefit from a larger $k$ as we predicted, our RQ- algorithm outperforms the baseline by orders of magnitude for all $k$ values. Given the significant performance gap between BASELINE and our solutions, we skip the BASELINE figure for most of the offline results, before showing it again in the online live experiments.

Figure~\ref{fig:rq-n} depicts how the query cost of SQ- and RQ-DB-SKY change when the database size $n$ ranges from 50K to 400K.  To test databases with varying sizes, we drew uniform random samples from the DOT dataset.  The figure also shows the change of $|S|$, the number of skyline tuples. One can see from the figure that RQ-DB-SKY is more efficient than SQ- because it uses the more powerful, two-ended, search interface. Perhaps more interestingly, neither algorithm's query cost depend much on $n$. Instead, they appear more dependent on the number of skyline tuples $|S|$ - consistent with our theoretical analysis.

Figure~\ref{fig:rq-m} varies the number of attributes $m$. While both RQ- and SQ- require more queries when there are more attributes, RQ- again consistently outperforms SQ-DB-SKY. Note that the increase on query cost is partially because of the rapid increase of the number of skyline tuples with dimensionality \cite{buchta1989average}. In any case, the query cost for RQ- and SQ-DB-SKY remain small, compared to the theoretical bounds, even when the dimensionality reaches 10.

\vspace{1mm}
\noindent {\bf Interfaces with Point Predicates:}
In the next set of experiments, we tested PQ-DB-SKY. Figure~\ref{fig:pq-n} shows how its query cost varies with $n$ and $m$.  Interestingly, while the query cost barely changes with $n$ varying from 20,000 to 100,000, it increases significantly when $m$ changes from 3 to 5, just as predicted by our theoretical analysis. In Figure~\ref{fig:pq-sd}, we further tested how the query cost changes with varying domain sizes. To enable this test, for each given domain size (from $v = $ 5 to 15), we first select all PQ attributes with domain larger than $v$, and then remove from the domain of each attribute all but $v$ values (along with their associated tuples). Then, we randomly selected 100,000 tuples from the remaining tuples as our testing database. One can see from the result that, while larger attribute domains do lead to a higher query cost, the increase on query cost is not nearly as fast as the data space (which grows with $v^m$) - indicating the scalability of PQ-DB-SKY to larger domains.

\vspace{1mm}
\noindent {\bf Interfaces with Mixed Predicates:}
We next tested a more realistic search interface that contains a mixture of range and point predicates. We started with 3 RQ and 2 PQ predicates and evaluated how the query cost varies with database size. Figure~\ref{fig:mq-n} shows that, as expected, the number of tuples only have minimal impact on query cost. We then tested how varying number of RQ and PQ attributes affect our performance. The two lines in Figure~\ref{fig:mq-dr} represent, respectively, (1) 1 PQ attribute with the number of RQ attributes varying from 2 to 5, and (2) 1 RQ attribute with the number of PQ ones from 2 to 5.  One can observe from the figure that the impact on query cost is much more pronounced on an increase of the number of PQ attributes - consistent with earlier discussions in the paper.

\vspace{1mm}
\noindent {\bf Anytime Property of Skyline Discovery:}
Recall from \S1 that all algorithms in the paper feature the anytime property, i.e., one can stop the algorithm execution at any time to return a subset of skyline tuples (over the entire database). Note that BASELINE does not have this feature, as there is no way for it to determine if a tuple is truly on the skyline before the entire database is crawled. Figures~\ref{fig:rq-anytime} and \ref{fig:pq-anytime} trace the progress of SQ-, RQ- and PQ-DB-SKY over 100,000 tuples ($5$ predicates in RQ-DB and $4$ in PQ-DB case) and demonstrate how the number of discovered skyline tuples changes with query cost.

There are some interesting observations from the two figures. In Figure~\ref{fig:rq-anytime}, note that SQ-DB-SKY could find the first 16 skylines without facing a skyline twice, leading to identical performance with RQ- up to that point. Afterwards, however, it started getting the same skyline tuple multiple times, leading to poorer performance than RQ-DB-SKY when the number of discovered skyline tuples reaches 23. In Figure~\ref{fig:pq-anytime}, note that despite the limitations of PQ, our algorithm managed to discover all skyline tuples with fewer than 600 queries. The peak between the $8^{th}$ and $9^{th}$ tuples is caused by queries ``wasted'' for crawling an area that did not contain any skyline tuple.

\subsection{Online Demonstration}
\label{subsec:onlineExp}
As discussed earlier, we conducted live online experiments by applying our final algorithm, MQ-DB-SKY, over three real-world web databases, Blue Nile diamond search (BN), Google Flights (GF), and Yahoo!~Autos (YA), respectively.

\vspace{1mm}
\noindent {\bf Skyline Discovery over Blue Nile (BN):} For BN, we discovered a total of 2,149 tuples on the skyline. We compared the performance of MQ-DB-SKY with BASELINE ($k = 50$), with the results depicted in Figure~\ref{fig:onlineExpBN}. Note that we stopped the execution of BASELINE when its query cost reached 10,000 queries, at which time it only managed to discover $1113$ skyline tuples\footnote{Note that, as discussed earlier, BASELINE would not be able to output these skyline tuples despite of having discovered them because BASELINE lacks the anytime property.}. On the other hand, our MQ- algorithm discovers the entire skyline with an average query cost of only 3.5 per skyline tuple.

\vspace{1mm}
\noindent {\bf Skyline Discovery over Google Flights (GF):} Our experiment setup was as follows. We randomly chose a pair of airports from the top-25 busiest airports in USA and a date between November 1 and 30, 2015, and sought to find all skyline flights on that day. We repeated this process for 50 different pairs and report the average query cost. The number of skyline flights varied between 4 to 11.  Figure~\ref{fig:onlineExpGF} shows the results. Note that we did not compare against BASELINE here because GF offers SQ only for attributes such as {\em Stops}, {\em Price}, and {\em ConnectionDuration}, while BASELINE requires two-ended range support for crawling. We verified the correctness of the results by crawling all the flights for the same date and comparing the results. One can observe that our algorithm is highly efficient even when $k=1$. Specifically, it was able to discover all skyline tuples with query cost below 50, which is the (free) rate limit imposed per user account per day by GF (QPX API).

\vspace{1mm}
\noindent {\bf Skyline Discovery over Yahoo!~Autos (YA):} For YA, we discovered a total of 1,601 skyline tuples. Figure~\ref{fig:onlineExpYA} shows the performance of our MQ- algorithm and the comparison with BASELINE. Here $k = 50$. Once again, we had to discontinue BASELINE at 10,000 queries before it were able to complete crawling. On the other hand, our MQ-DB-SKY algorithm managed to discover the entire skyline with an average query cost below $2$ per skyline tuple.

\section{Related Work} 
\label{sec:relWork}

\noindent {\bf Crawling and Data Analytics over Hidden Databases:}
While there has been a number of prior works on crawling, sampling, and aggregate estimation over hidden web databases, 
there has not been any study on the discovery of skyline tuples over hidden databases.
Crawling structured hidden web databases have been studied in 
\cite{sheng2012optimal, raghavan2000crawling, madhavan2008google}. 
\cite{dasgupta2007random, dasgupta2009leveraging, dasgupta2010turbo} describe efficient techniques to obtain random samples
from hidden web databases that can then be utilized to perform aggregate estimation.
Recent works such as \cite{liu2012stratified, wang2011effective} propose more sophisticated sampling techniques that reduce variance of aggregate estimation.

\noindent {\bf Skyline Computation:}
Skyline operator was first described in \cite{borzsony2001skyline} and number of subsequent work have studied it from diverse contexts.
\cite{tan2001efficient} and \cite{chomicki2003skyline} proposed efficient algorithms with the help of indices and pre-sorting respectively.
Online and progressive algorithms were described in \cite{kossmann2002shooting, papadias2003optimal}.
The problem of skyline over streams\cite{lin2005stabbing}, partial orders\cite{asudeh2015Crowdsourcing},
uncertain data\cite{pei2007probabilistic} and groups \cite{zhang2014skyline} have also been studied.
\cite{BGZ04, LYL+06} study the problem of retrieving the skyline from multiple web databases that expose 
a ranked list of all tuples according to a pre-known ranking function.
Such special access might not always be available for a third party operator.
Our work is the first to study the problem of skyline computation over structured hidden databases by using only the publicly available access channels.

\noindent {\bf Applications of Skyline Tuples:}
Skyline tuples have a number of applications in diverse contexts.
A skyline tuple is not dominated by another tuple while a $K$-Skyband tuple is dominated by at most $K-1$ tuples in the database.
The top-$k$ tuples of any monotone aggregate function must belong to $K$-Skyband where $k \leq K$\cite{gong2009efficient}.
The numerous applications of top-$k$ queries can be found in \cite{ilyas2008survey}.
Other applications of Skyline include nearest neighbor search, answering the preference queries and finding the convex-hull.
Recently, the notion of reverse skyline\cite{dellis2007efficient}, $K$-Dominating and $K$-Dominant \cite{yiu2007efficient}, 
and top-$K$ representative skylines\cite{lin2007selecting} have been investigated with a number of applications including query re-ranking and product design. 

\section{Final Remarks}
\label{sec:conclusions}

In this paper, we studied an important yet novel problem of skyline discovery over  web databases with a top-$k$ interface.  We introduced a taxonomy of the search interfaces offered by such a database, according to whether single-ended range, two-ended range, or point predicates are supported. We developed efficient skyline discovery algorithms for each type and combine them to produce a solution that works over a combination of such interfaces. We developed rigorous theoretical analysis for the query cost, and also conducted a comprehensive set of experiments on real-world datasets, including a live online experiment on Google Flights, which demonstrate the effectiveness of our proposed techniques.


\bibliographystyle{abbrv}
\vspace{-1mm}
\bibliography{ref}

\end{document}